\documentclass[11pt]{article}

\usepackage{fullpage}
\usepackage{amssymb,amsmath,amsfonts,amsthm,amssymb,mathrsfs}
\usepackage[usenames]{color}
\usepackage{hyperref}
\usepackage{xspace}
\usepackage[normalem]{ulem}

\definecolor{ToDoColor}{rgb}{0.1,0.2,1}

\usepackage{graphicx}
\usepackage{aliascnt,cleveref}
\newcommand{\remove}[1]{}
\renewcommand{\cref}{\Cref}

\newcommand{\calC}{{\cal C}}
\newcommand{\calI}{{\cal I}}
\newcommand{\notcalC}{\overline{{\cal C}}}
\newcommand{\calM}{{\cal M}}
\newcommand{\calX}{{\cal X}}
\newcommand{\calY}{{\cal Y}}
\newcommand{\calA}{\mathcal{A}}
\newcommand{\calP}{\Pi}
\newcommand{\crs}{w}
\newcommand{\CRS}{W}
\newcommand{\stepref}[1]{Step~(\ref{#1})}
\newtheorem{theorem}{Theorem}[section]
\crefname{theorem}{Theorem}{Theorems}

\crefname{notation}{Notation}{Notations}

\crefname{clm}{Claim}{Claims}
\newtheorem{claim}[theorem]{Claim}
\crefname{claim}{Claim}{Claims}

\crefname{proposition}{Proposition}{Propositions}
\newtheorem{lemma}[theorem]{Lemma}
\crefname{lemma}{Lemma}{Lemmas}
\newtheorem{definition}[theorem]{Definition}
\crefname{definition}{Definition}{Definitions}

\crefname{question}{Question}{Questions}

\newtheorem{corollary}[theorem]{Corollary}
\crefname{cor}{Corollary}{Corollaries}

\crefname{obs}{Observation}{Observations}
\theoremstyle{definition}
\newtheorem{remark}[theorem]{Remark}
\crefname{remark}{Remark}{Remarks}

\crefname{example}{Example}{Examples}
\crefname{construct}{Construction}{Constructions}
\crefname{fact}{Fact}{Facts}
\newcommand{\pprotocol}[5]{{\begin{figure}[#4]
\begin{center}
\fbox{
\hbox{\quad
\begin{minipage}{0.85\textwidth}
\begin{center}
{\bf #1}
\end{center}
\small
#5
\end{minipage}
} }
\caption{\label{#3} #2}
\end{center}
\end{figure} } }
\newcommand{\protocol}[4]{\pprotocol{#1}{#2}{#3}{htbp}{#4}}

\DeclareSymbolFont{AMSb}{U}{msb}{m}{n}
\DeclareMathSymbol{\NN}{\mathbin}{AMSb}{"4E}
\DeclareMathSymbol{\ZZ}{\mathbin}{AMSb}{"5A}
\DeclareMathSymbol{\RR}{\mathbin}{AMSb}{"52}
\DeclareMathSymbol{\QQ}{\mathbin}{AMSb}{"51}
\DeclareMathSymbol{\II}{\mathbin}{AMSb}{"49}
\DeclareMathSymbol{\CC}{\mathbin}{AMSb}{"43}
\renewcommand{\vec}[1]{\boldsymbol{#1}}
\DeclareMathOperator{\SD}{SD}
\DeclareMathOperator{\poly}{poly}
\DeclareMathOperator{\Bin}{Binomial}
\DeclareMathOperator{\view}{view}
\DeclareMathOperator{\Output}{Output}
\renewcommand{\epsilon}{\varepsilon}
\newcommand{\eps}{\varepsilon}

\newcommand{\calH}{\mathcal{H}}
\newcommand{\set}[1]{\left\{ #1 \right\}}
\newcommand{\floor}[1]{\left\lfloor #1 \right\rfloor}

\newcommand{\N}{N}
\DeclareMathOperator*{\E}{\mathbb{E}}
\newcommand{\ProtMessageTransmission}{Protocol {\sc SecureMessageTransmission}\xspace}
\newcommand{\ProtKeyExchange}{Protocol {\sc KeyExchange}\xspace}
\newcommand{\ProtMPCinShuffle}{Protocol {\sc MPCinShuffle}\xspace}
\newcommand{\ProtCommonPrelude}{Protocol {\sc CommonPrelude}\xspace}
\newcommand{\ProtCommonTwoRound}{Protocol {\sc CommonTwoRound}\xspace}
\newcommand{\ProtNestedCommon}{Protocol {\sc NestedCommonElement}\xspace}

\newcommand{\ProtABT}{Protocol {\sc ABT}\xspace}
\newcommand{\View}{{\mbox{\rm View}}}
\newcommand{\Sim}{{\mbox{\rm Sim}}}

\begin{document}
\title{On the Round Complexity of the Shuffle Model}
\author{Amos Beimel\thanks{Department of Computer Science, Ben-Gurion University. {\tt amos.beimel@gmail.com}}
\and Iftach Haitner\thanks{School of Computer Science, Tel-Aviv University. {\tt iftachh@cs.tau.ac.il}} 
\and Kobbi Nissim\thanks{Department of Computer Science, Georgetown University. {\tt kobbi.nissim@georgetown.edu}} 
\and Uri Stemmer\thanks{Department of Computer Science, Ben-Gurion University and Google Research. {\tt u@uri.co.il}}
}

\date{September 28, 2020}

\begin{titlepage}
\maketitle

\thispagestyle{empty}

\begin{abstract}
The shuffle model of differential privacy [Bittau et al.\ SOSP 2017; Erlingsson et al.\ SODA 2019; Cheu et al.\ EUROCRYPT 2019] was proposed as a viable model for performing distributed differentially private computations. Informally, the model consists of an untrusted analyzer that receives messages sent by participating parties via a shuffle functionality, the latter potentially disassociates messages from their senders. 
Prior work focused on one-round differentially private shuffle model protocols, demonstrating that functionalities such as addition and histograms can be performed in this model with accuracy levels similar to that of the curator model of differential privacy, where the computation is performed by a fully trusted party. A model closely related to the shuffle model was presented in the seminal work of Ishai et al.\ on establishing cryptography from anonymous communication [FOCS 2006].

Focusing on the round complexity of the shuffle model, we ask in this work what can be computed in the shuffle model of differential privacy with two rounds. Ishai et al.\ showed how to use one round of the shuffle to establish secret keys between every two parties. Using this primitive to simulate a general secure multi-party protocol increases its round complexity by one. We show how two parties can use one round of the shuffle to send secret messages without having to first establish a secret key, hence retaining round complexity. Combining this primitive with the two-round semi-honest protocol of Applebaun, Brakerski, and Tsabary [TCC 2018], we obtain that every randomized functionality can be computed in the shuffle model with an honest majority, in merely two rounds. This includes any differentially private computation.

We hence move to examine differentially private computations in the shuffle model that (i) do not require  the assumption of an honest majority, or (ii) do not admit one-round protocols, even with an honest majority. For that, we introduce two computational tasks: {\em common element}, and {\em nested common element with parameter $\alpha$}. For the common element problem we show that for large enough input domains, no one-round differentially private shuffle protocol exists with constant message complexity and negligible $\delta$, whereas a two-round protocol exists where every party sends a single message in every round.  For the nested common element we show that no one-round differentially private protocol exists for this problem with adversarial coalition size $\alpha n$. However, we show that it can 
be privately computed in two rounds against coalitions of size $cn$ for every $c<1$. This yields a separation between one-round and two-round protocols. We further show a one-round protocol for the nested common element problem that is differentially private with coalitions of size smaller than $c n$ for all $0<c<\alpha<1/2$.

\end{abstract}

\bigskip

{\bf Keywords:} Shuffle Model, Differential privacy, Secure Multiparty Computation
\end{titlepage}

\section{Introduction}

A recent line of work in differential privacy focuses on a distributed model where parties communicate with an analyzer via a random shuffle. The shuffle collects messages from the participating parties and presents them to the analyzer in a random order, hence potentially disassociating between messages and their senders~\cite{BittauEMMRLRKTS17,ErlingssonFMRTT19,CheuSUZZ19}. 
The hope is that the shuffle model would be useful for the implementation of practical distributed differentially private statistical and machine learning analyses, and with accuracy comparable to that of centralized differential privacy solutions. The implementation of the shuffle itself is envisioned to be based on technologies such as secure enclaves, mix nets, and secure computation.

The theoretical work on the shuffle model has so far focused on developing protocols for the model formalized in~\cite{CheuSUZZ19}. In this synchronous one-round model, all the participating parties send their messages through the shuffle at once (parties may send one message or multiple messages). Already in this limited communication model there are fundamental statistical tasks for which differentially private shuffle model protocols exist with error comparable to that achievable in the (centralized) curator model of differential privacy~\cite{CheuSUZZ19, BalleBGN19, GhaziGKPV19, BalleBGN19a, GhaziMPV20, BalleBGN20,BC20, GPVS19}.

A model similar to the shuffle model was presented already in 2006 by Ishai, Kushilevits, Ostrovsky, and Sahai in the context of secure multiparty computation~\cite{IKOS06}. 
In particular, Ishai et al.\ presented a one-round secure summation protocol that has become one of the building blocks of noise efficient real summation differentialy-private protocols, where each party holds a number  $x_i \in [0,1]$ and the analyzer's task is to estimate the sum $\sum x_i$~\cite{GhaziGKPV19, BalleBGN19a, GhaziMPV20, BalleBGN20}. Ishai et al.\ also presented a one-round protocol allowing any two parties to agree on a secret key, a step after which the parties can privately exchange messages. Combining this primitive with general constructions of secure multiparty computation protocols that rely on private or secure channels, Ishai et al.\ showed that it is possible to compute any (finite) function of the parties' joint inputs in a constant number of rounds. In particular, we observe that combining the key agreement protocol of Ishai et al.~\cite{IKOS06} with the recent two-round secure multiparty protocol of Applebaum, Brakersky, and Tsabary~\cite{ApplebaumBT18} (denoted the ABT protocol), no more than three rounds suffice for computing any (finite) randomized function securely in the shuffle model, with semi-honest parties assuming an honest majority: one round for every pair of parties to setup a secret key, and hence private communication channels. Two more round to simulate the ABT protocol using these private channels. To conclude, the previous results imply that any randomized function (including, in particular, any curator model differential privacy computation) can be computed in the shuffle model with security against an honest majority.\footnote{Curator model computations returning real numbers, such as those resulting by adding Laplace or Gaussian noise, would need to be carefully truncated to finite precision.}

\subsection{Our results}

In this work, we focus on the shuffle model with semi-honest parties. We ask what can be computed in the shuffle model with one and two rounds of communication, and at the presence of coalitions of semi-honest parties that can put together their inputs, randomization, and messages they receive during the computation with the goal of breaching the privacy of other parties. We present new techniques for constructing round-efficient protocols in the shuffle models as well as new lowerbound techniques for studying the limitations of one-round protocols. 
In more detail:

\medskip \noindent {\bf One-round private message transmission.} 
In \cref{sec:secureMessageTransmission} we present a new building block for shuffle model protocols. This is a protocol that allows a party $P_i$ to send a secret message to another party $P_j$ in one round. In the key agreement protocol of Ishai et al.~\cite{IKOS06}, mentioned above, to agree on a bit $b$ of the key, each of $P_i$ and $P_j$ selects and sends through the shuffle a random element chosen from a large set. Denoting the elements sent by $P_i, P_j$ as $x, y$ resp., parties $P_i$ and $P_j$ can set the secret bit $b$ to $0$ if $x<y$ and to $1$ if $x>y$. (The protocol fails if $x=y$.) The other parties cannot distinguish which of the two values is $x$ and which is $y$ and gain no information about the bit $b$. Using this protocol, party $P_i$ learns the secret key only after the conclusion of one communication round, and only then can $P_i$ use the key to encrypt a message. 
In contrast, our construction saves a round in the communication, as it allows $P_i$ to encrypt a message without having to first establish a key.

\smallskip \noindent {\bf Generic two-round secure multiparty computation for the shuffle model.}
Using the one-round message transmission protocol, we show in \cref{sec:twoRoundMPC} how to simulate the two-round  semi-honest secure multi-party computation protocol with information theoretic security of Applebaum et al.~\cite{ApplebaumBT18}.\footnote{
An alternative construction was given by Garg et al.~\cite{GargIS18}; the communication complexity of their protocol is exponential in the number of parties.}
The result is a general construction in the shuffle model of two-round honest majority protocols for the semi-honest setting, with information theoretic security. The construction is efficient in the size of the formula representing the functionality.

\smallskip

Our generic two-round construction shows that the shuffle model is extremely expressive: no more than two rounds suffice for computing any (finite) randomized function, including any curator level differential privacy computation, with semi-honest parties assuming an honest majority of players. We hence move to examine differentially private computations in the shuffle model that (i) do not require  the assumption of an honest majority, or (ii) do not admit one-round protocols, even with an honest majority. 
To demonstrate our lowerbound and upperbound techniques, we introduce two computational tasks:

\smallskip \noindent {\bf Common element:} Each of $n$ parties holds an input $x_i$ taken from a large finite domain $\calX$. The parties communicate with an analyzer via the shuffle. If all the parties hold the same input $x\in \calX$ then the analyzer's task is to output $x$. Otherwise, the analyzer's outcome is not restricted.

\smallskip \noindent {\bf Nested common element with parameter ${\boldsymbol{\alpha}}$:} This is a variant of the common element problem, where parties $P_1,\ldots,P_{\lfloor\alpha n\rfloor}$ each holds an input $x_i\in\calX$. 
The other parties $P_{\lfloor\alpha n\rfloor + 1},\ldots,P_n$ each holds a vector of $|\calX|$ elements taken from some finite domain $\calY$, i.e.,  $\vec{y}_i\in\calY^{|\calX|}$. The parties communicate with an analyzer via the shuffle. If all the parties of the first type hold the same input $x\in\calX$ and all the vectors held by parties of the second type have the same value $z$ in their $x$-th entry, then the analyzer's task is to output $z$ (otherwise, the analyzer's outcome is not restricted).
We consider the case where $|\calX|$ is polynomial in $n$, thus, the size of the inputs is polynomial in $n$ even when $|\calY|$ is exponential in $n$.

\medskip

Both tasks need to be performed with differential privacy, assuming semi-honest parties. We now describe the bounds we prove for these problems:

\medskip \noindent {\bf A lowerbound on one-round shuffle model protocols for the common element problem.} In  \cref{sec:lowerbound} we present a new lowerbound technique for one-round shuffle model protocols where the mutual information between input and output is high. Unlike other lowerbounds in the shuffle model of differential privacy that we are aware of, our lowerbound proof works for the multi-message setting, and does not require all parties to use the same randomizer.\footnote{Two exceptions are the recent works of Balcer et al.~\cite{BalcerCJM20} and Cheu and Ullman~\cite{CheuUllman20}, mentioned in Section~\ref{sec:otherwork}.}

For the common element problem, we show a relationship between the message complexity $\ell$, the input domain size  $|\cal X|$, and the privacy parameters $\epsilon$ and $\delta$. In particular, for constant $\epsilon$ and negligible $\delta$, our bound yields that for constant number of messages $\ell$ and domain size $|\calX|> 2^{n^{O(\ell)}}$ the common element problem does not admit a one-round shuffle model protocol.
At the heart of the lowerbound proof is a transformation from a shuffle model protocol into a local differential privacy randomizer, for which bounds on the mutual information between the input and output are known (see, e.g.,~\cite{Vadhan2017}).

The one-round lowerbound is contrasted in \cref{sec:commonTwoRound} with a two-round protocol for the common element problem where each party sends a {\em single} message in each round. In this protocol, the parties need to communicate through the shuffle in only one of the rounds (and can either use the shuffle or a public channel in the other round).

\smallskip \noindent {\bf An impossibility result for the nested common element problem.} In \cref{sec:nestedImpossibility} we show (for large enough $\calX$, i.e., $|\calX|=\tilde{\Omega}(n^2)$) that, regardless of the number of messages sent by each party, no one-round shuffle protocol exists for the problem that is secure against coalitions of $\alpha n$ semi-honest parties, even when the domain $\calY$ is binary.
We observe that for every $c<1$ the nested common element problem has a 2-round private protocol secure against a coalition of size $cn$.
This gives a separation between what can be computed with coalitions of size up to $\alpha n$ in one- and two-round shuffle model protocols.
Intuitively, the lowerbound follows from the fact that after seeing the shuffle outcome, a coalition covering $P_1,\ldots,P_{\lfloor\alpha n\rfloor}$ can simulate the protocol's execution for any possible value $x\in\calX$ and hence learn all vector entries on which the inputs of parties $P_{\lfloor\alpha n\rfloor+1},\ldots,P_n$ agree. When $\calY$ is binary, Bun et al.~\cite{BunUV18} have used fingerprinting codes to show that this task is impossible when the dimension of the vectors  is $\tilde\Omega(n^2)$, even in the curator model of differential privacy
(in the setting of the nested common element the dimension corresponds to $|\calX|$).\footnote{Bun et al.~\cite{BunUV18} have considered a related problem, however their technique applies also to this task.}

\smallskip \noindent {\bf A one-round protocol for the nested common element problem.} 
A natural approach to solve the nested common element problem in two rounds is to execute a (one-round) protocol for the common element problem among parties $P_1,\ldots,P_{\lfloor\alpha n\rfloor}$, then, if a common element $x$ is found, repeat the protocol with parties
$P_{\lfloor\alpha n\rfloor+1},\ldots,P_n$ ignoring all but the $x$-th entry of their vectors.
It may seem that any shuffle model protocol for the problem should require more than one round. 
In Section~\ref{sec:nestedPossibility} we show that this is not the case. In fact, there is a one-round protocol that tightly matches the above impossibility result for $\alpha \leq 1/2$. For all $c<\min\set{\alpha,1-\alpha}$ there exist one-round shuffle model protocols for the nested common element problem that are secure in the presence of coalitions of size up to $cn$.

\subsection{Other related work}
\label{sec:otherwork}

Private protocols for the common element problem in the shuffle model are implied by protocols for histograms~\cite{CheuSUZZ19,GhaziGKPV19,BC20}. 
Specifically, for all $c<1$, one-round shuffle model protocols for the common element problem that are secure in the presence of coalitions of size up to $cn$ (provided that $n=\Omega(\frac{1}{\epsilon^2} \log \frac{1}{\delta})$) are implied by the protocols of  Balcer and Cheu~\cite{BC20}. While they only considered privacy given the view of the analyzer, their protocols are secure against coalitions containing a constant fraction of the parties. 

Lowerbounds on the error level achievable in the one-round single message shuffle model for the problems of frequency estimation and selection were provided by Ghazi et al.~\cite{GhaziGKPV19}.
Robustness against adversarial behaviour in the shuffle model was informally discussed by Balle et al.~\cite{BalleBGN20}, when discussing the effect  malicious parties can have on the accuracy guarantees in their protocols for addition of real numbers.

Closest to our interest are the recent lowerbounds by Balcer et al.~\cite{BalcerCJM20}. They define robustly shuffle private one-round protocols, where privacy guarantees are required to hold if at least $\gamma n$ parties participate in the protocol. The other {\em malicious} parties avoid sending messages to the shuffle. 
While this model is equivalent to ours in the one-round setting, the lowerbound techniques in~\cite{BalcerCJM20} are different from ours. In particular, they forge an interesting relationships between online pan-privacy~\cite{DworkNPRY10} and robustly shuffle private one-round protocols and hence can use lowerbounds from pan-privacy to deduce lowerbounds for robustly shuffle private one-round protocols. Specifically, for estimating the number of distinct elements they prove that the additive error grows as $\Theta_\epsilon(\sqrt k)$, and for uniformity testing they prove that the sample complexity grows as $\tilde\Theta_{\epsilon,\delta}( k^{2/3})$. In both cases $k$ is the domain size. (These bounds also hold in our model.) As with our bounds, the lowerbounds by Balcer et al.\ hold in the case where different parties may use different randomizers, and send multiple messages.

Independent and parallel to our work, Cheu and Ullman~\cite{CheuUllman20} presented exponential separations between the 1-round shuffle model and the (centralized) curator model of differential privacy. In particular, they showed that every 1-round shuffle model protocol for private agnostic learning of parity functions over $d$ bits requires $\Omega(2^{d/2})$ samples, while $O(d)$ samples suffice in the curator model.
Our work shows, in particular, that private agnostic learning of parity functions using $O(d)$ samples can be done in the shuffle model in two rounds (with semi-honest parties assuming an honest majority). Hence, combined with our work, the results of~\cite{CheuUllman20} provide additional separations between one-round and two-round shuffle model protocols.

\section{Preliminaries}

\subsection{The communication model}

Let $\calX$ be a data domain and let $\calM$ be an arbitrary message domain (w.l.o.g., $\bot \in \calX, \calM$).
We consider a model where the inputs and the computation are distributed among $n$ parties $P_1,\ldots,P_n$ executing a protocol $\Pi=(\bar R, S)$, where $\bar R = (R_1,\ldots,R_n)$ are $n$ stateful randomized functionalities and $S$ is a stateless channel that acts either as a shuffle functionality or as a public channel. See \cref{prot:model} for a formal description of protocols in the shuffle model.

\protocol{Execution of a protocol $\Pi=\left((R_1,\ldots,R_n), S\right)$ in the shuffle model}{The communication model.}{prot:model}{

\textbf{Initialization:} 

\begin{itemize}
\item All parties receive a public random string $\crs\in\{0,1\}^*$.

\item Each party $P_i$ receives its input $x_i \in \calX$  and initializes the execution of $R_i(\crs, x_i)$.
\end{itemize}

\textbf{Communication rounds $1\leq j \leq r$:}

\begin{enumerate}

\item If round $j$ uses $S$ as a shuffle:
\begin{enumerate}
    \item Each party $P_i$ invokes $R_i$ to generate $\ell$ messages $(m_{i,j}[1],\ldots,m_{i,j}[\ell])\in \calM^\ell$ and sends $(m_{i,j}[1],\ldots,m_{i,j}[\ell])\in \calM^\ell$ to $S$. 
    \item Let $(\hat m_1,\ldots,\hat m_{n\ell}) = (m_{1,j}[1],\ldots,m_{1,j}[\ell],\ldots,m_{n,j}[1],\ldots,m_{n,j}[\ell])$ be the $n\ell$ messages received by $S$.
    
    \item $S$ chooses a permutation $\pi:[n\ell]\rightarrow[n\ell]$ uniformly at random.
    
    \item \label{step:shuffleOutput} $S$ outputs $s_j=(\hat m_{\pi(1)},\ldots,\hat m_{\pi(n\ell)})$ to all parties.
\end{enumerate}

\item Otherwise (round $j$ uses $S$ as a public channel):
\begin{enumerate}
    \item Each party $P_i$ invokes $R_i$ to generate a (single) message $m_{i,j}\in \calM$, which it sends to $S$.
    \item $S$ outputs $s_j = (m_{1,j},\ldots,m_{n,j})$.
\end{enumerate}

\item $P_i$ feeds $s_j$ to $R_i$.
\end{enumerate}
\textbf{Output:} Each party $P_i$ invokes $R_i$ to obtain its local output $o_i$.
}

\begin{definition}
Consider an execution of a protocol in the shuffle model as described in \cref{prot:model}.
The {\em message complexity} of \/ $\Pi$ is $\ell$, the number of messages that each party sends to the shuffle in each round. The {\em round complexity} of \/ $\Pi$ is $r$. The {\em shuffle complexity} of \/ $\Pi$ is the number of rounds where $S$ is used as a shuffle.
\end{definition}

\begin{remark}
A protocol that uses a public random string $w$ can always be converted into a protocol that does not use a public random string, at the cost of one additional communication round in which party $P_1$ sends the string $w$ (in the semi-honest setting). This additional communication round can be thought of as an ``offline'' round, as it is independent of the inputs and the function.
\end{remark}

\subsection{Differentially private shuffle model protocols}
\begin{definition}
We say that input vectors $\vec{x} = (x_1,\ldots,x_n)\in \calX^n$ and $\vec{x'}=(x'_1,\ldots,x'_n)\in \calX^n$  are {\em $i$-neighboring} if they differ on exactly the $i$-th entry. We say that $\vec{x}$ and $\vec{x'}$ are {\em neighboring} if there exists an index $i$ such that they are  $i$-neighboring.
\end{definition}

\begin{definition}
We say that two probability distributions ${\cal D}_0, {\cal D}_1\in \Delta(\Omega)$ are $(\epsilon,\delta)$-close and write ${\cal D}_0 \approx_{\epsilon,\delta} {\cal D}_1$ if for all events $T\subset \Omega$ and for $b\in\set{0,1}$, $$\Pr_{t\sim {\cal D}_b}\left[t\in T\right] \leq e^{\epsilon} \cdot \Pr_{t\sim {\cal D}_{1-b}}\left[t\in T\right] + \delta.$$ 
\end{definition}

\begin{definition}[Differential privacy~\cite{DMNS06,DKMMN06}]
An algorithm $\calA$ is {\em $(\epsilon,\delta)$-differentially private} if for all neighboring $\vec{x}, \vec{x}'$ we have that
$
\calA(\vec{x}) \approx_{\epsilon,\delta} \calA(\vec{x}'). 
$
\end{definition}

We are now ready to define what it means for a protocol to be differentially private in the (semi-honest) shuffle model.
Intuitively, this means that the view of every coalition $\calC$ of up to $t$ parties  cannot depend too strongly on the input of a party $P_i\not\in \calC$. 
More formally,

\begin {definition}[View in shuffle model] The view  of a coalition $\calC$ on input $\vec{x}$ in protocol $\Pi$, denoted $\View^\Pi_\calC(\vec{x})$, is the random variable consisting of the public randomness $\crs$, the  inputs and local randomness of the parties in $\calC$, and the output of the $r$ rounds of $\Pi$ when executed on $\vec{x}$, i.e., $s_1,\ldots,s_r$.
\end{definition}

\begin{definition}[Multiparty semi-honest differential privacy \cite{BeimelNO08,Vadhan2017}]
\label{def:DP_Prot}
A protocol $\Pi$ is $(\epsilon,\delta)$-differentially private against coalitions of size $t$ if for all $i\in [n]$, for all coalitions $\calC$ of $t$ parties s.t.\ $P_i\not\in\calC$,  and for all $i$-neighboring $\vec{x}, \vec{x}'$,
$$\View^\Pi_\calC(\vec{x}) \approx_{\epsilon,\delta} \View^\Pi_\calC(\vec{x}').$$
\end{definition}

Observe that if a protocol is differentially private against coalitions of size $t$ as in the definition above, then it also the case that  $\View^\Pi_\calC(\vec{x}) \approx_{\epsilon,\delta} \View^\Pi_\calC(\vec{x}')$ for all coalitions $\calC$ of size less than $t$.

\begin{remark}\ 
\label{rem:model}
\begin{enumerate}
    \item {\bf The shuffle functionality $\boldsymbol{S}$.} It is not essential that the shuffle functionality $S$ be randomized. The shuffle output $s$ in \stepref{step:shuffleOutput} of Protocol $\Pi$ in \cref{prot:model} can be replaced with any canonical representation of the multiset $\{\hat m_1,\ldots,\hat m_{n\ell}\}$ (e.g., in lexicographic order) without affecting any of our results. 

    \item {\bf Hybrid-shuffle model.} The shuffle model can equivalently be thought of as a hybrid model, where all parties have access to a shuffle functionality. 
    
    \item {\bf The local randomizers $\boldsymbol{R_i}$.} In deviation from most of prior work on the shuffle model, the randomizers $R_1,\ldots,R_n$ need not be identical. In particular, the execution of $R_i$ may depend on the identity $i$ of player $P_i$.

    \item {\bf Local model protocols.} An $(\epsilon,\delta)$-differentially private protocol $\Pi$ with zero shuffle complexity  satisfies local differential privacy~\cite{KLNRS11,Vadhan2017}.

    \item {\bf Shuffle model with an analyzer.} In prior work on the shuffle model one party, $A$, is an {\em analyzer}. The analyzer has no input ($x_A=\bot$) and does not send messages, i.e., ($m_{A,j}[1], \ldots, m_{A,j}[\ell]) = \bot^\ell$ for $1\leq j \leq r$. In this setting the local output of parties $P_1,\ldots,P_n$ is $\bot$ and the outcome of the protocol is the local output of $A$. Sections~\ref{sec:commonElementProblem} and~\ref{sec:possibilityImpossibility} consider the shuffle model with an analyzer.
\end{enumerate}
\end{remark}

\subsection{Secure computation protocols with semi-honest parties}

Let $f:\calX^n\rightarrow\calY^n$ be a randomized functionality. We recall the definition from the  cryptographic literature of what it means that a protocol $\Pi$ securely computes $f(x_1,\ldots,x_n)$ with semi-honest parties. 
We will use this definition both in the shuffle model and in the setting where the parties communicate over a complete network of private channels. 
For the latter we define the view of a coalition as follows:
\begin{definition}[View in complete network of private channels] The view of a coalition 
$\calC$ on input $\vec{x}$ in protocol $\Pi$, denoted $\view^\pi_\calC(\vec{x})$, is the random variable consisting of the inputs and local randomness of the parties in $\calC$ and the messages the parties in $\calC$ receive from the parties in $\overline{\calC}= \{P_1,\ldots,P_n\}\setminus \calC$. \end{definition}

\begin{definition}[Secure computation in the semi-honest model] 
\label{def:secureComputation}
A protocol \/ $\Pi$ is said to  $\delta$-securely compute $f$ with coalitions of size at most $t$ if there exists a simulator $\Sim^\Pi$ such that for any coalition $\calC$ of at most $t$ parties and every input vector $\vec{x} = (x_1,\ldots,x_n) \in \calX^n$,
$$\left(\Sim^\Pi(\calC, \vec{x}[\calC],\vec{y}[\calC]), \vec{y}[\overline\calC]\right)\approx_{0,\delta} \left(\View^\Pi_\calC(\vec{x}), \Output(\overline\calC)\right),$$
where $\vec{y}=f(\vec{x})$ and $\Output(\overline\calC)$ is the output of the parties in $\overline\calC$
in the protocol. The probability distribution on the left is over the randomness of $f$ and  the randomness of the simulator, and the probability distribution on the right is over the randomness of the honest parties and the adversary. When $\delta=0$ we say that $\Pi$ provides perfect privacy.
\end{definition}

\begin{remark}
In the shuffle model, $\View^\Pi_\calC(\vec{x})$ also includes  the public random string $w$ (if exists), and the probability distribution on the right in \cref{def:secureComputation} is also over the public random string.
\end{remark}

We next state a composition theorem for differentially private protocols using secure  protocols.

\begin{lemma}
\label{lem:composition}
Let  \/  $\Pi$ be a protocol with one invocation of a black-box access to some function $f$ (the $f$-hybrid model). 
Let  \/  $\Pi_f$ be a protocol 
that $\delta'$-securely computes $f$ with coalitions of size up to $t$. 
Let \/  $\Pi'$ be as in $\Pi$, except that the call to $f$ is replaced with the execution of \/ $\Pi_f$. 
If \/ $\Pi$ is $(\epsilon,\delta)$-differentially private with coalitions of size up to $t$, then $\Pi'$ is $(\epsilon,(e^\epsilon+1)\cdot\delta' + \delta)$-differentially private with coalitions of size up to $t$.
\end{lemma}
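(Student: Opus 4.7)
I will fix an arbitrary coalition $\calC$ of size at most $t$, an index $i \notin \calC$, and a pair of $i$-neighboring input vectors $\vec{x}, \vec{x}'$, and aim to show $\View^{\Pi'}_\calC(\vec{x}) \approx_{\epsilon,(e^\epsilon+1)\delta'+\delta} \View^{\Pi'}_\calC(\vec{x}')$. Abbreviate $V(\vec{x}) = \View^{\Pi'}_\calC(\vec{x})$ and $V_h(\vec{x}) = \View^{\Pi}_\calC(\vec{x})$ for the real view in $\Pi'$ and the hybrid-model view in $\Pi$. The plan is to sandwich $V$ between two invocations of the simulator guaranteed for $\Pi_f$, and to invoke the differential privacy of $\Pi$ in between.

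Concretely, I will introduce a ``spliced'' distribution $\tilde V(\vec{x})$ defined as follows: sample an execution of $\Pi$ in the $f$-hybrid model on input $\vec{x}$, yielding in particular the inputs $\vec{z}$ that the parties hand to $f$, the outputs $\vec{y}=f(\vec{z})$, and the post-$f$ state of every honest party; then replace the $f$-invocation portion of the view by an independent sample from $\Sim^{\Pi_f}(\calC,\vec{z}[\calC],\vec{y}[\calC])$, and continue simulating the honest parties' post-$f$ messages from their actual states using the simulated outputs. A standard hybrid argument, using that Definition~\ref{def:secureComputation} controls the joint distribution of the simulated view \emph{together} with $\Output(\overline{\calC})$---and hence with any subsequent messages the honest parties derive from those outputs---then gives $V(\vec{x})\approx_{0,\delta'}\tilde V(\vec{x})$ and, symmetrically, $V(\vec{x}')\approx_{0,\delta'}\tilde V(\vec{x}')$. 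On the other hand $\tilde V$ is a randomized post-processing of $V_h$ whose internal randomness is independent of $\vec{x}$, so the $(\epsilon,\delta)$-differential privacy of $\Pi$ propagates by post-processing to $\tilde V(\vec{x}) \approx_{\epsilon,\delta} \tilde V(\vec{x}')$.

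Chaining the three closeness statements via the triangle inequality for approximate indistinguishability (if $A\approx_{0,a}B$, $B\approx_{\epsilon,b}C$, $C\approx_{0,c}D$ then $A\approx_{\epsilon,a+b+e^\epsilon c}D$, and the same in the reverse direction with $a,c$ swapped) delivers $V(\vec{x})\approx_{\epsilon,\delta'+\delta+e^\epsilon\delta'} V(\vec{x}') = V(\vec{x})\approx_{\epsilon,(e^\epsilon+1)\delta'+\delta} V(\vec{x}')$, as claimed. The main obstacle---and the only step that requires care---is the joint-distribution argument in the splicing: one must verify that replacing $\calC$'s view of the sub-execution of $\Pi_f$ by the simulator's output does not perturb the distribution of the honest parties' subsequent messages (which $\calC$ will observe in the later rounds of $\Pi'$) beyond the same $\delta'$ slack. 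This is exactly what the ``view plus $\Output(\overline{\calC})$'' formulation of Definition~\ref{def:secureComputation} is designed to handle; once it is invoked carefully, the rest of the proof is bookkeeping with the approximate-DP triangle inequality.
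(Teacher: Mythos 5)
Your proposal is correct and follows essentially the same route as the paper: the ``spliced'' distribution you introduce is exactly the middle hybrid implicit in the paper's chain of inequalities, where the coalition's view of $\Pi_f$ is replaced by $\Sim^{\Pi_f}$ run on data already contained in $\View^{\Pi}_\calC$ (hence a post-processing of the hybrid-model view), the $(\epsilon,\delta)$-privacy of $\Pi$ is applied in the middle, and \cref{def:secureComputation} is invoked on both ends, with the asymmetric triangle inequality producing the same $(e^\epsilon+1)\delta'+\delta$ slack. If anything, you are slightly more explicit than the paper about why the joint ``view plus $\Output(\overline\calC)$'' formulation of \cref{def:secureComputation} is what licenses swapping the sub-execution without disturbing the honest parties' subsequent messages.
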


\begin{proof}
Consider a coalition $\calC$ of up to $t$ parties.
The random variable $\View_\calC^{\Pi'}(\vec{x})$ consisting the view of coalition $\calC$ in an execution of protocol $\Pi'$ can be parsed into the view of $\calC$ in protocol $\Pi$, i.e., $\View_\calC^{\Pi}(\vec{x})$, and the view of $\calC$ in the execution of protocol $\Pi_f$, i.e.,  $\View_\calC^{\Pi_f}(\vec{y})$. In the latter $\vec{y}$ is the input to $f$ in the execution of $\Pi$ on input $\vec{x}$ (similarly, we will use $\vec{y}'$ to denote the  input to $f$ in the execution of $\Pi$ on input $\vec{x}'$). 
Note that, by \cref{def:secureComputation}, $\View_\calC^{\Pi_f}(\vec{y})$ can be simulated as $\Sim^{\Pi_f}(\calC, \vec{y}[\calC],f_\calC(\vec{y}))$ up to statistical distance $\delta'$. 
Observe that  $\View_\calC^\Pi$ contains the inputs $\vec{y}_\calC$ sent to $f$ as well as the outcome seen by the coalition, $f_\calC(\vec{y})$. Hence, $\Sim^{\Pi_f}(\calC, \vec{y}[\calC],f_\calC(\vec{y}))$ is a post-processing of $\View_\calC^{\Pi}(\vec{x})$. To emphasize this fact, we write $\Sim^{\Pi_f}(\View_\calC^{\Pi}(\vec{x}))$ instead of $\Sim^{\Pi_f}(\calC, \vec{y}[\calC],f_\calC(\vec{y}))$.

Let $P_i\not\in\calC$. For all $i$-neighboring $\vec{x},\vec{x'}$ and all $T$ we have that 
\begin{align*}
\Pr[\View_\calC^{\Pi'}(\vec{x}) \in T] 
& =
\Pr[(\View_\calC^{\Pi}(\vec{x}), \View_\calC^{\Pi_f}(\vec{y})) \in T] \\
& \leq  
\Pr[(\View_\calC^{\Pi}(\vec{x}), \Sim^{\Pi_f}(\View_\calC^{\Pi}(\vec{x})))\in T] + \delta' & \mbox{(\cref{def:secureComputation})}\\
& \leq 
e^\epsilon\cdot\Pr[(\View_\calC^{\Pi}(\vec{x'}), \Sim^{\Pi_f}(\View_\calC^{\Pi}(\vec{x'})))\in T] + \delta + \delta' & \mbox{$\Pi$ is $(\epsilon,\delta)$-DP)} \\
& \leq
e^\epsilon\cdot(\Pr[(\View_\calC^{\Pi}(\vec{x}'), \View_\calC^{\Pi_f}(\vec{y}')) \in T]+\delta') + \delta + 
\delta' & \mbox{(\cref{def:secureComputation})}\\
& = e^\epsilon\cdot\Pr[\View_\calC^{\Pi'}(\vec{x}') \in T]+(e^\epsilon + 1) \delta' + \delta.
\end{align*}

The second step in the analysis follows from the fact that differential privacy is preserved under post-processing.
\end{proof}

\subsection{Pairwise independent hash functions}

In our constructions We use pair pairwise independent hash functions, defined below.
\begin{definition}[Pairwise independent hash functions]
A family $H = \set{h : \calX \rightarrow R}$ is
said to be pairwise independent, if for any two distinct elements $x_1 \neq x_2 \in \calX$, and any two
(possibly equal) values $y_1, y_2 \in R$,
$$\Pr_{h\in H} [h(x_1) = y_1 \wedge h(x_2) = y_2] = \frac{1}{|R|^2},$$
where $h$ is chosen with uniform distribution from $H$ independently of $x_1,x_2$.
\end{definition}
In particular, if $H$ is a pairwise independent family, then for every $x_1 \neq x_2 \in \calX$ it holds that
$\Pr_{h\in H} [h(x_1) = h(x_2) ] = \frac{1}{|R|}$, and for every set $A \subseteq \calX$ we have
$\Pr_{h\in H} [\exists_{x_1 \neq x_2 \in A}\; h(x_1) = h(x_2) ] \leq \frac{|A|^2}{|R|}$,
in this case we say that $A$ is perfectly hashed by $h$.

\section{A Two-Round Secure MPC Protocol in the Shuffle Model}

In this section we show that every functionality  that can be computed with differential privacy in the centralized model can be  computed with differential privacy 
in the shuffle model 
in two rounds assuming an honest majority. To achieve this result we first show a one-round protocol in the shuffle model for secure message transmission, that is, we show that how to emulate a private channel. This result together with an honest-majority two-round MPC protocol of~\cite{ApplebaumBT18} in the private channel model imply that every functionality (including differentially-private functionalities) can be securely computed 
in the shuffle model in two rounds assuming an honest majority.

\subsection{A one-round secure message transmission protocol}
\label{sec:secureMessageTransmission}

Assume that party $P_i$ wants to send a message to party $P_j$ using the shuffle such that any other party will not learn any information on the message. In~\cite{IKOS06} this was done in two rounds. In the first round $P_i$ and $P_j$ agree on a secret key, and in the second round $P_i$ encrypts the message using this key  as a one-time pad. We present a  protocol such that $P_i$ knows the key in advance and can encrypt the message already in the first round. The resulting protocol has statistical security. 

We start by describing a variant of the protocol of~\cite{IKOS06} for key exchange. As a first step, we describe a key exchange protocol in which $P_i$ and $P_j$ agree with probability $1/2$ on a random bit (and with probability $1/2$ the output is ``FAIL''). The protocol is as follows: 
Party $P_i$ samples a uniformly distributed bit $a$ and sends to the shuffle the message $(i,j,a)$.
Similarly, party $P_j$ samples a uniformly distributed bit $b$ and sends to the shuffle the message $(i,j,b)$.\footnote{
We add the prefix $i,j$ to the messages sent by $P_i$ and $P_j$ to enable all pairs of parties to exchange keys in parallel. It is essential that both $P_i$ and $P_j$ list the identities $i,j$ in the same order (e.g., lexicographic order).}
If $a = b$ the protocol fails. Otherwise, the joint key is $a$. As both parties $P_i,P_j$ get the output of the shuffle, they both know if the protocol fails ($a=b$) or not, and if the protocol does not fail ($a\neq b$) they both know $a$ -- the common key. On the other hand, an adversary that sees the output of the shuffle when $a\neq b$, sees a shuffle of the two messages $\{(i,j,0),(i,j,1)\}$ and does not get any information on $a$. 
To generate a $k$-bit key, the above protocol is repeated  $3k$ times in parallel with independent random bits $a_\ell,b_\ell$  in each execution, and the shared key is the bits of $P_i$ in the first $k$ indices where $a_\ell \neq b_\ell$. 
By a simple Chernoff-Hoefding bound, the probability that there are no such $k$ indices is exponentially small.
See \cref{prot:KeyExchange} for a formal description of the protocol.   

\protocol{\ProtKeyExchange}{A one-round key exchange protocol.}{prot:KeyExchange}{
\textbf{Inputs:} $P_i$ and  $P_j$ hold a security parameter $1^k$.
\begin{enumerate}
    \item
    $P_i$ samples $3k$ uniformly distributed bits $(a_1,\dots,a_{3k})$ and sends to the shuffle $3k$ messages $(i,j,1,a_1),\dots,(i,j,3k,a_{3k})$.
    \item
    $P_j$ samples $3k$ uniformly distributed bits $(b_1,\dots,b_{3k})$ and sends to the shuffle $3k$ messages $(i,j,1,b_1),\dots,(i,j,3k,b_{3k})$.
    \item
    The shuffle publishes a random permutation of the messages it got.
    \item
    Let $\ell_1 < \ell_2 < \cdots < \ell_k$ be the first $k$ indices such that $a_{\ell_j}\neq b_{\ell_j}$
    (if there are no such $k$ indices, output ``FAIL''). The joint key is $(a_{\ell_1},a_{\ell_2},\ldots,a_{\ell_k})$.
\end{enumerate}
}

To construct a one-round protocol for secure message transmission from $P_i$ to $P_j$, we want $P_i$ to know the key in advance so it can use the key to encrypt the message at the same time it sends the messages for the key exchange.  
In \ProtKeyExchange, party $P_i$ does not know the key in advance since it does not know the bits that $(a_1,\ldots,a_{3k})$ and $(b_1,\dots,b_{3k})$ disagree. To overcome this problem $P_i$ will use all the bits it generates as a pre-key $K$. In this case $P_j$ will know all bits of the pre-key $K$ whereas an adversary will learn only about half of the bits of $K$. Parties $P_i$ and $P_j$ wish to agree on a key generated from the pre-key $K$ without interaction such that the adversary gets  negligible information about the agreed key. This is an instance of the privacy amplification problem and a simple solution is to sample a pairwise independent hash function $h$ and set the key as $h(K)$. It follows by the left-over hash lemma~\cite{HastadILL99} that $h(K)$ is close to uniform given $h$ and the knowledge of the adversary about the pre-key $K$.

\begin{theorem}[The left-over hash lemma~\cite{HastadILL99}]
\label{thm:leftover}
Let $m,n$ be integers and $X$  be a random variable  distributed over $\set{0,1}^n$ 
such that $\Pr[X=x]\leq 2^{-m}$ for every $x \in \set{0,1}^n$.
Let $\calH$ be a family of pairwise independent hash functions from $\set{0,1}^n$ to $\set{0,1}^{m-2k}$. 
Then, for a random $h$ uniformly distributed in $\calH$  and independent of $X$, 
$$\SD \left((h(X),h),(U,h)\right)\leq 2^{-k},$$
where $U$ is uniform over $\set{0,1}^{m-2k}$ and independent of $h$, and where $\SD$ denotes the statistical distance (total variation distance).
\end{theorem}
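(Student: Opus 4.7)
The plan is to prove the leftover hash lemma by the standard collision-probability route. First, I would invoke the basic $\ell_2$-to-$\ell_1$ bound: for any random variable $Y$ supported on a finite universe $\mathcal{U}$,
$$\SD(Y, U_\mathcal{U}) \leq \tfrac{1}{2}\sqrt{|\mathcal{U}| \cdot \mathrm{CP}(Y) - 1},$$
where $\mathrm{CP}(Y) := \Pr[Y = Y']$ is the collision probability with respect to an independent copy $Y'$ of $Y$. If this fact is not available to cite, it is a one-line consequence of Cauchy--Schwarz applied to $\sum_y |p_y - 1/|\mathcal{U}||$ (viewed as an inner product with the all-ones vector). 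This reduces the task to bounding the collision probability of $Y := (h(X), h)$, taking values in $\mathcal{U} = \set{0,1}^{m-2k} \times \calH$.

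Second, I would compute $\mathrm{CP}(Y)$ explicitly. Let $(X', h')$ be an independent copy of $(X, h)$. Since $h, h'$ are independent and uniform on $\calH$,
$$\mathrm{CP}(Y) = \Pr[h = h' \text{ and } h(X) = h(X')] = \frac{1}{|\calH|} \cdot \Pr[h(X) = h(X')].$$
I would split the inner probability according to whether $X = X'$. The min-entropy hypothesis yields $\Pr[X = X'] = \sum_x \Pr[X=x]^2 \leq 2^{-m}$, using $\Pr[X=x] \leq 2^{-m}$ to bound one of the two factors in each term. Pairwise independence of $\calH$ yields, for every fixed pair $x \neq x'$, $\Pr_h[h(x) = h(x')] = 2^{-(m-2k)}$; averaging against the product distribution of $(X, X')$ restricted to $X \neq X'$ preserves this bound. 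Hence $\Pr[h(X) = h(X')] \leq 2^{-m} + 2^{-(m-2k)} = 2^{-m}(1 + 2^{2k})$.

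Finally, I would combine the two steps. Since the universe has size $|\mathcal{U}| = 2^{m-2k} \cdot |\calH|$, we get
$$|\mathcal{U}| \cdot \mathrm{CP}(Y) \;\leq\; 2^{m-2k} \cdot 2^{-m}(1 + 2^{2k}) \;=\; 1 + 2^{-2k},$$
so the initial inequality yields $\SD((h(X), h), (U, h)) \leq \tfrac{1}{2} \cdot 2^{-k} \leq 2^{-k}$, matching the claimed bound (in fact with a factor-$2$ slack). The only conceptual step is the $\ell_2$-to-$\ell_1$ inequality at the start; the rest is a short calculation that uses the pairwise-independence and min-entropy hypotheses exactly once each, so I do not anticipate any real obstacle in executing the plan.
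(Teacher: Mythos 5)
Your proof is correct. Note that the paper does not prove this statement at all: it is imported as a black box with a citation to \cite{HastadILL99}, so there is no internal proof to compare against. Your argument is the standard collision-probability route — Cauchy--Schwarz to bound $\SD\left((h(X),h),(U,h)\right)$ by $\tfrac12\sqrt{|\mathcal{U}|\cdot\mathrm{CP}(h(X),h)-1}$, then the min-entropy hypothesis for the diagonal term $\Pr[X=X']\le 2^{-m}$ and pairwise independence for the off-diagonal collisions — which is essentially the proof in the cited literature, and the arithmetic checks out: $|\mathcal{U}|\cdot\mathrm{CP}\le 1+2^{-2k}$ yields $\SD\le\tfrac12\cdot 2^{-k}$, slightly stronger than the stated bound. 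The only points worth making explicit in a written version are that $(U,h)$ is exactly the uniform distribution on $\mathcal{U}=\set{0,1}^{m-2k}\times\calH$ (so the $\ell_2$-to-$\ell_1$ bound applies verbatim), and that the factorization $\mathrm{CP}=\frac{1}{|\calH|}\Pr[h(X)=h(X')]$ uses the assumed independence of $h$ from $X$; both are immediate from the hypotheses.
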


\protocol{\ProtMessageTransmission}{A one-round protocol for secure message transmission.}{prot:MessageTransmission}{
\textbf{Inputs:} Party $P_i$ holds a security parameter $1^k$ and a message $M$ of length at most $k$, party $P_j$ holds  security parameter $1^k$.
\begin{enumerate}
    \item
    $P_i$ samples $7k$ uniformly distributed bits $(a_1,\dots,a_{7k})$ and sends to the shuffle $7k$ messages $(i,j,1,a_1),\dots,(i,j,7k,a_{7k})$.
    \item
    $P_j$ samples $7k$ uniformly distributed bits $(b_1,\dots,b_{7k})$ and sends to the shuffle $7k$ messages $(i,j,1,b_1),\dots,(i,j,7k,b_{7k})$.
    \item
    \label{step:encrypt}
    $P_i$ samples a function $h$ uniformly at random from a family of pairwise independent functions 
    $\calH=\set{h:\set{0,1}^{7k}\rightarrow \set{0,1}^{k}}$ and sends  to the shuffle the message $(i,j,\text{\rm ``message''},h,h(a_1,\ldots,a_{7k})\oplus M)$.
    \item
    The shuffle publishes a random permutation of the messages it got.
\end{enumerate}
}
\begin{theorem}
\label{thm:SMT}
\ProtMessageTransmission is a correct and secure protocol for message transmission, that is (1) $P_j$ can always recover $M$, (2) For every two messages $M,M'$ the statistical distance between the views of the referee and all parties except for $P_i$ and $P_j$ in an executions of \ProtMessageTransmission with $M$ and 
\ProtMessageTransmission with $M'$ is at most $3\cdot 2^{-k}$. 
\end{theorem}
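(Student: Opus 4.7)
The plan is to prove correctness and security separately, relying primarily on two observations: that $P_j$ can disambiguate its own contribution from the shuffle output, and that the pre-key $A=(a_1,\ldots,a_{7k})$ retains enough conditional min-entropy given the adversary's view that the left-over hash lemma (\cref{thm:leftover}) makes $h(A)$ essentially a one-time pad.

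For correctness (part 1), I would argue that $P_j$, knowing its own bits $(b_1,\ldots,b_{7k})$, can recover $A$ exactly from the shuffle output. For each index $\ell\in[7k]$ the shuffle contains exactly the two tagged messages $(i,j,\ell,a_\ell)$ and $(i,j,\ell,b_\ell)$. If they carry different bits, then $a_\ell$ is the bit not equal to $b_\ell$; if they carry the same bit, then $a_\ell=b_\ell$. Once $A$ is recovered, $P_j$ reads $h$ and the ciphertext from the distinguished ``message''-tagged message and decrypts $M = (h(A)\oplus M)\oplus h(A)$.

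For security (part 2), I would first describe the adversary's view on a given transcript as the triple $(\mathrm{MSet}(A,B),\, h,\, h(A)\oplus M)$, where $\mathrm{MSet}(A,B)$ denotes the multiset of tagged messages. The central observation is that conditioned on $\mathrm{MSet}(A,B)$, the pre-key $A$ is uniformly distributed over the $2^T$ strings consistent with the multiset, where $T$ is the number of indices $\ell$ for which $a_\ell\neq b_\ell$; indeed, the adversary sees which indices ``disagree,'' but the shuffle destroys the information of which disagreeing bit belongs to $P_i$. Let $G$ denote the good event $T\geq 3k$. Since each of the $7k$ pairs disagrees independently with probability $1/2$ and $\mathbb{E}[T]=7k/2$, a Chernoff--Hoeffding bound gives $\Pr[\neg G]\leq 2^{-k}$ (with room to spare in the exponent; the constants $7k$ and $3k$ in \ProtMessageTransmission are chosen precisely to make this work).

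Conditioned on $G$ and on $\mathrm{MSet}(A,B)$, the min-entropy of $A$ is at least $3k$, so \cref{thm:leftover} (applied with $n=7k$, $m=3k$ and output length $k$) gives that $(h(A),h)$ is $2^{-k}$-close to $(U,h)$ where $U$ is uniform over $\{0,1\}^k$ and independent of $h$ and of $\mathrm{MSet}(A,B)$. Hence $(\mathrm{MSet}(A,B),h,h(A)\oplus M)\mid G$ is $2^{-k}$-close to $(\mathrm{MSet}(A,B),h,U)\mid G$, whose distribution does not depend on $M$; the same holds with $M'$. Applying the triangle inequality in the conditional world and paying $\Pr[\neg G]\leq 2^{-k}$ to uncondition yields that the two views are within statistical distance $2\cdot 2^{-k} + 2^{-k} = 3\cdot 2^{-k}$, as claimed. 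The main subtlety to be careful about is the first step — verifying that, conditioned on the public multiset of tagged messages, the two tuples $(A,B)$ and $(B,A)$ consistent with each disagreeing index are equally likely from the adversary's perspective, so that $A$'s conditional distribution really is uniform over $2^T$ values; this follows from the independence and symmetry of the $a_\ell$'s and $b_\ell$'s and the fact that the adversary has no side information correlating $A$ with $B$.
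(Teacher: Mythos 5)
Your proposal is correct and follows essentially the same route as the paper's proof: a Chernoff--Hoeffding bound on the number of disagreeing indices (charged $2^{-k}$), the observation that conditioned on the shuffled multiset the pre-key $A$ is uniform over at least $2^{3k}$ consistent strings, an application of \cref{thm:leftover} to replace $h(A)$ by a uniform pad $U$ (charged $2^{-k}$ per execution, once for $M$ and once for $M'$), and the triangle inequality giving $3\cdot 2^{-k}$. The only difference is presentational: you phrase the middle step as a conditional min-entropy argument on the adversary's view, while the paper phrases it as a hybrid ``modified protocol'' in which $P_i$ encrypts with a fresh uniform $u$; these are the same argument with the same accounting.
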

\begin{proof}
For the correctness of the protocol, as $P_j$ knows its messages, it can deduce for every $\ell$ the message $(i,j,\ell,a_\ell)$ sent by $P_i$, hence compute the common key $h(a_1,\ldots,a_{7k})$ and compute $M$. 

For the security of the protocol, first note that by a Chernoff-Hoefding bound, the probability that there are less than $3k$ indices $\ell$ such that $a_\ell\neq b_\ell$ is less than $2^{-k}$, and such executions add at most $2^{-k}$ to the statistical distance. We continue the analysis  assuming that such event did not occur. 

 We consider an execution of \ProtMessageTransmission in which is \stepref{step:encrypt} party $P_i$ sends the message $(i,j,\text{\rm ``message''},h,u\oplus M)$ for a uniformly sampled $u \in \set{0,1}^k$. In this case, the executions for $M$ and $M'$ are equally distributed (as $u$ acts as a one-time pad). To prove the security it suffices to prove that for every message $M$, the statistical distance in the view in the executions of \ProtMessageTransmission and 
 the modified \ProtMessageTransmission (both with $M$) is at most $2^{-k}$.
Fix a set $L \subset [7k]$ of size at least $3k$, and consider all executions in which $a_\ell \neq b_\ell$ if and only if $\ell \in L$.
For every index $\ell\in L$, the view discloses no information on $a_\ell$ in these executions (since an adversary sees a random shuffle of the two messages $(i,j,\ell,0),(i,j,\ell,1)$ and does not get any information on $a_\ell$).    
In other words, there are at least $2^{3k}$ strings $(a_1,\dots,a_{7k})$ possible given the executions are consistent with $L$, and all strings are equiprobable. Thus, by \cref{thm:leftover}, the statistical distance between $u$ and $h(a_1,\dots,a_{7k})$ is at most $2^{-k}$. This completes the proof of security.
\end{proof}

\subsection{A two round MPC protocol}
\label{sec:twoRoundMPC}

We construct a two-round MPC protocol in the shuffle model for every functionality on inputs from a finite domain assuming an honest majority. The construction is via a combination of the two-round MPC protocol of Applebaum, Brakersky, and Tsabary~\cite{ApplebaumBT18} (Henceforth, \ProtABT, see \cref{thm:ABT} below), which assumes private channels between every pair of parties, with \ProtMessageTransmission executed in the shuffle model. The latter is used for simulating the private channels. 
\begin{theorem}[\ProtABT \protect{\cite[Theorem 1.1]{ApplebaumBT18}}]
\label{thm:ABT}
At the presence of honest majority, any function $f$ can be computed with perfect privacy
in a complete network of private channels
in two rounds with polynomial efficiency in the number of parties and in the size of the formula that computes $f$.
\end{theorem}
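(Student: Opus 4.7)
This statement is imported from~\cite{ApplebaumBT18}, so the plan is to sketch how such a result is established rather than to re-derive it from scratch. The high-level strategy reduces general secure computation to a very restricted computational step that two rounds suffice for.

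First, I would represent the function $f$ as an arithmetic formula $F$ over a finite field $\mathbb{F}$ with $|\mathbb{F}|>n$; this is where the polynomial dependence on the formula size enters. Second, I would compile $F$ into a multi-party randomized encoding $\hat F$ whose output reveals only $F(\vec{x})$ and which decomposes into one round of correlated-randomness setup followed by a degree-$2$ computation in the parties' inputs and the setup coins. Third, I would give a two-round honest-majority protocol for this degree-$2$ step over private channels: in round one each party Shamir-shares its input and its randomness to all parties using polynomials of degree $\lfloor (n-1)/2 \rfloor$, and locally computes and shares the cross-terms it can; in round two the parties exchange the shares needed to reconstruct each output share of $\hat F$ and deliver it to the designated recipient. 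Perfect privacy against any minority coalition follows from the perfect privacy of Shamir secret sharing together with the simulatability of $\hat F$, and correctness follows from the usual degree-budget argument ($\lfloor (n-1)/2 \rfloor + \lfloor (n-1)/2 \rfloor < n$, so Lagrange reconstruction is unambiguous).

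The main obstacle, and the chief technical contribution of~\cite{ApplebaumBT18}, is constructing for an arbitrary formula $F$ a randomized encoding whose online evaluation is degree $2$ (after a single round of correlated randomness) while the encoding itself remains of size polynomial in $|F|$ and $n$. Earlier constructions either produced encodings of higher multiplicative depth (requiring more rounds) or incurred super-polynomial blow-up, so obtaining perfect security, two rounds, and polynomial efficiency simultaneously is the heart of the argument. When using the theorem as a black box in the remainder of this section, what matters is only that the two-round private-channel interface and the perfect privacy threshold are preserved, so that emulating the private channels via \ProtMessageTransmission and applying \cref{lem:composition} yields a two-round shuffle-model protocol with an honest majority.
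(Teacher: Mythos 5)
The paper itself offers no proof of this statement: it is imported verbatim as \cite[Theorem 1.1]{ApplebaumBT18} and used purely as a black box, which is exactly how you treat it, and your sketch of the ABT construction (formula to degree-2 randomized encoding, then a two-round honest-majority Shamir-based evaluation over private channels) is a faithful account of how that cited result is obtained. Your closing observation—that all the surrounding argument needs is the two-round private-channel interface and the perfect-privacy threshold, so that \ProtMessageTransmission plus \cref{lem:composition} lifts it to the shuffle model—matches the paper's use of the theorem.
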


\begin{theorem}
\label{thm:2roundMPC}
Let $f:\calX^n\rightarrow \set{0,1}$ be a function and $\gamma>0$ ($\gamma$ can depend on $n$ and $f$).
At the presence of honest majority, any function $f$ can be computed with $\gamma$-statistical privacy
in the shuffle model 
in two rounds with polynomial efficiency in the number of parties, in the size of the formula that computes $f$, and in $\log 1/\gamma$.
\end{theorem}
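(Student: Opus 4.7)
The plan is to combine the ABT two-round MPC protocol in the private-channels model (\cref{thm:ABT}) with the one-round secure message transmission protocol \ProtMessageTransmission of \cref{thm:SMT}, using the latter to emulate each point-to-point private channel invoked by ABT. Since \ProtMessageTransmission takes a single shuffle round, this yields a protocol with exactly two rounds in the shuffle model.

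Concretely, for each of the two rounds of ABT, every ordered pair $(P_i, P_j)$ with $i \neq j$ must transmit a message $m_{ij}$. I would execute all $n(n-1)$ transmissions in parallel through a single shuffle invocation: party $P_i$ runs the sender side of \ProtMessageTransmission (with a security parameter $k$ to be set) for every $j\neq i$, and party $P_j$ runs the receiver side for every $i\neq j$. The $(i,j)$-prefixes embedded in all messages of \ProtMessageTransmission let each receiver demultiplex the shuffle output into its constituent transmissions; they also ensure that the security analysis of \cref{thm:SMT} applies to each pair in isolation, because the randomness and messages associated with other pairs are independent of the transmitted $m_{ij}$.

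For security, I would proceed by a hybrid argument that walks through the $2n(n-1)$ channel emulations one at a time, replacing an ideal private channel by a \ProtMessageTransmission execution. Each replacement changes the joint distribution of the adversarial coalition's view by at most $3\cdot 2^{-k}$ by \cref{thm:SMT}, so the total statistical distance from the perfectly private ABT simulator is at most $6n^2\cdot 2^{-k}$. Setting $k=\lceil\log_2(6n^2/\gamma)\rceil=O(\log(n/\gamma))$ drives this below $\gamma$, yielding the desired $\gamma$-statistical privacy against any coalition of less than $n/2$ parties. The overall simulator runs the ABT simulator to produce the ideal views and then invokes the \ProtMessageTransmission simulators for each emulated channel (using a dummy message whenever the plaintext is not already determined by the coalition's view).

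Each \ProtMessageTransmission instance has complexity $\mathrm{poly}(k)=\mathrm{poly}(\log n,\log(1/\gamma))$, and $O(n^2)$ of them are run per round, so the construction inherits ABT's polynomial dependence on $n$ and on the formula size of $f$, with an additional $\mathrm{poly}(\log(1/\gamma))$ factor. The step I would be most careful about is the parallel composition of many \ProtMessageTransmission instances through a single shuffle: one must verify that the $(i,j)$-tagging really does keep their analyses independent, and that the coalition's aggregated view of the shuffle output does not give it leverage across different pairs. This reduces to showing that, conditioned on the randomness and messages of all other pairs, the argument of \cref{thm:SMT} holds pointwise for the target pair, after which one averages over that conditioning.
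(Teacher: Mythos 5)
Your proposal is correct and follows essentially the same route as the paper: simulate the two rounds of ABT by running all $n(n-1)$ \ProtMessageTransmission instances in parallel through one shuffle per round, and prove security via the ABT simulator plus a hybrid over the pairs outside the coalition (each costing $O(2^{-k})$ by \cref{thm:SMT}), with $k=O(\log(n/\gamma))$ giving $\gamma$-statistical privacy. The paper's simulator likewise substitutes fixed dummy messages for pairs of honest parties, so your treatment of the parallel composition and parameter setting coincides with the published argument.
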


\begin{proof}
In \cref{prot:MPCinShuffle}, we describe  \ProtMPCinShuffle\ -- the two round MPC protocol in the shuffle model.

\protocol{\ProtMPCinShuffle}{A two-round MPC protocol in the shuffle model for arbitrary functionalities.}{prot:MPCinShuffle}{
 \ProtMPCinShuffle simulates (in the shuffle model)  \ProtABT of~\cref{thm:ABT}: 
 \begin{itemize}
     \item In each of the two rounds of  \ProtABT:
     \begin{itemize}
         \item For each $i,j\in [n]$:
         \begin{itemize}
             \item Party $P_i$ prepares the message that it would send to party $P_j$ in  \ProtABT.
             \item $P_i$ and $P_j$ execute \ProtMessageTransmission with this message and security parameter $1^k$.
         \end{itemize}
          \item[]   /* In each round, all $n(n-1)$ secure message transmission protocols  are executed in parallel and using the same shuffle */
     \end{itemize}
     \item At the end of the protocol, each party computes the output of $f$ from the simulated messages of \ProtABT.
\end{itemize}  
}

As \ProtMessageTransmission has perfect correctness,  each party in \ProtMPCinShuffle
can compute the messages it gets in  \ProtABT and compute $f$ without any error. 

For the security of the protocol, let $\calC$ be a coalition of less than $n/2$ parties. We construct a simulator that generates a view for $\calC$ that is $O(n^2 2^{-k})$ far from the view of $\calC$ in the real-world execution of \ProtMPCinShuffle: 
\begin{itemize}
    \item  Execute the simulator of  \ProtABT of \cref{thm:ABT} and generate a view for $\calC$ that is identically distributed as the real view of  $\calC$ in  \ProtABT. 
    \item
    For each round and for each pair $P_i,P_j$: 
\begin{itemize}
\item 
If at least one of $P_i,P_j$ is in $\calC$ 
then let $M_{i,j}$ be the message that $P_i$ sends to $P_j$ in the simulated view.
\item
Otherwise, let $M_{i,j}$ be some fixed arbitrary message.
\item
Execute \ProtMessageTransmission with the message $M_{i,j}$ and generate the messages that $P_i,P_j$ send to the shuffle.
\end{itemize}
\item
For each round, shuffle the messages generated by $P_i,P_j$ for every $i,j\in [n]$. 
\item \textbf{Output:} The shuffled messages of round 1 and  the shuffled messages of round 2,
the randomness of every $P_i$ generated by the simulator of  \ProtABT, and the randomness used by every $P_i \in \calC$ in an execution of
\ProtMessageTransmission for which $P_i$ is either the sender or the receiver.
\end{itemize}

By \cref{thm:SMT}, for every $P_i,P_j\notin \calC$, the messages generated in the simulation (i.e., the messages of \ProtMessageTransmission for the fixed message $M_{i,j}$ and the message that $P_i$ and $P_j$ send to the shuffle in the real world for the real message of the \ProtABT of \cref{thm:ABT} are only $O(2^{-k})$ far. Thus,
the output of the simulator we constructed is at most $O(n2^{-k})$ far from the view of $\calC$ in the real execution of \ProtMPCinShuffle.
\end{proof}

\begin{remark} \ 
\begin{enumerate}
\item In \ProtMessageTransmission we use the shuffle in both rounds as we execute 
\ProtMessageTransmission in each round. We can optimize the protocol and only use the shuffle in the first round. To achieve this, in the first round each ordered pair of parties $P_i,P_j$ also executes \ProtKeyExchange in round 1 and generate a key, which is used by $P_i$ to encrypt the message that it send to $P_j$ in round 2. The encrypted messages is sent on the public channel. 

\item In a setting with an analyzer as in \cref{rem:model}, the protocol can be simplified, with the expense that we now need to assume that the number of colluding parties in $P_1,\ldots,P_n$ is less than $(n-1)/2$. We execute \ProtABT with $n+1$ parties, where the $(n+1)$-th party (i.e., the analyzer) has no input and is the only party that receives an output. Furthermore, we assume that the analyzer is always in the coalition, and, therefore, the messages that it sends and receives are public. As the analyzer cannot send messages to the shuffle, we use the public random string as the random string of the analyzer and the messages that the input-less analyzer sends in the first round to party $P_j$ in \ProtABT are generated by $P_j$ without interaction using the random common string. Furthermore, in the second round each party only sends its message to the analyzer and this message is sent in the clear.
\item
In \ProtMessageTransmission the shuffle receives $O(k)$ messages and shuffles them. We actually only need to shuffle every pair of messages 
$(i,j,\ell,a_\ell),(i,j,\ell,b_\ell)$, thus, we can use many copies of 2-message shuffle. The same is true for \ProtMPCinShuffle.
\end{enumerate}
\end{remark}
\begin{corollary}
Let $f$ be an $(\epsilon,\delta)$-differentially private functional (in the centralized model) acting on inputs from a finite domain and using a finite number of random bits  and $\gamma>0$. At the presence of honest majority, the functionality $f$ can be computed with $(\epsilon,\delta+(e^{\epsilon}+1)\gamma)$-differential privacy
in the shuffle model 
in two rounds with polynomial efficiency in the number of parties, in the size of the formula that computes $f$, and in $\log1/\gamma$.
\end{corollary}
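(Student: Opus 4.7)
The plan is to derive this corollary by plugging \cref{thm:2roundMPC} into the composition lemma \cref{lem:composition}, using as the outer protocol the trivial one-step protocol that invokes $f$ as an ideal functionality.

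First, I would define a protocol $\Pi$ in the $f$-hybrid model consisting of a single call: each party $P_i$ submits its input $x_i$ to the ideal functionality $f$ and outputs whatever it receives back. For any coalition $\calC$, the view of $\calC$ in $\Pi$ on input $\vec{x}$ is fully determined by $\vec{x}[\calC]$ together with $f(\vec{x})$. For any $i$-neighboring pair $\vec{x}, \vec{x}'$ with $P_i \notin \calC$ we have $\vec{x}[\calC] = \vec{x}'[\calC]$, so closeness of the two views reduces to $f(\vec{x}) \approx_{\epsilon,\delta} f(\vec{x}')$, which is exactly the assumed centralized differential privacy of $f$. Hence $\Pi$ is $(\epsilon,\delta)$-differentially private against coalitions of arbitrary size.

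Second, I would invoke \cref{thm:2roundMPC} to obtain a two-round shuffle-model protocol $\Pi_f$ that $\gamma$-securely computes $f$ with honest majority, with complexity polynomial in $n$, the formula size of $f$, and $\log(1/\gamma)$. Let $\Pi'$ denote the protocol obtained from $\Pi$ by replacing the ideal $f$-call with an execution of $\Pi_f$; this is a two-round shuffle-model protocol inheriting the efficiency of $\Pi_f$. Applying \cref{lem:composition} with $t = \lfloor (n-1)/2 \rfloor$ then yields immediately that $\Pi'$ is $(\epsilon, \delta + (e^{\epsilon}+1)\gamma)$-differentially private against coalitions of size less than $n/2$, which is exactly the claim.

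The only mild subtlety, and the main thing to double-check, is that \cref{thm:2roundMPC} is stated for deterministic functions $f:\calX^n\to\set{0,1}$, whereas our $f$ is a randomized functionality with a finite (but arbitrary) output range and finitely many random bits. Both restrictions are inessential: a larger finite output only multiplies the formula size by the output length, and a randomized $f$ using $r$ random bits can be recast as a deterministic functionality by having each party contribute a uniform string of length $r$ and XORing these strings inside the functionality to produce $f$'s internal randomness. Since the honest majority ensures at least one honest contribution, the resulting randomness is uniform from the coalition's perspective, so the recast deterministic functionality perfectly simulates $f$. Neither modification affects the two-round structure or the polynomial efficiency guarantees, and the composition proceeds as above.
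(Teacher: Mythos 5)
Your proposal is correct and follows the paper's own (very terse) argument: the paper likewise proves the corollary by running \ProtMPCinShuffle\ (the protocol of \cref{thm:2roundMPC}) to compute $f$ and invoking \cref{lem:composition} with the trivial $f$-hybrid protocol, whose $(\epsilon,\delta)$-privacy is exactly the centralized privacy of $f$. Your extra care in handling randomized $f$ with finitely many random bits and non-binary outputs is a reasonable elaboration of details the paper leaves implicit.
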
 
\begin{proof}
We use \ProtMPCinShuffle to compute the function $f$. By \cref{lem:composition} the resulting protocol is private.
\end{proof}

\section{The Common Element Problem}
\label{sec:commonElementProblem}

In this section we study the following problem.

\begin{definition}[The common element problem]
\label{def:CommonElementProblem}
In the {\em common element} problem, there are $n$ parties $P_1,\dots,P_n$, where each party $P_i$ gets an input $x_i\in\calX$, and there is an analyzer $P_0$ (with no input). If all inputs are equal, i.e., $x_1=x_2=\dots=x_n$, then with probability at least $3/4$ the analyzer must output $x_1$ at the end of the execution. The outcome is not restricted otherwise.
\end{definition}

\subsection{An impossibility result for single-round constant-message protocols}
\label{sec:lowerbound}

We present an impossibility result for 1-round protocols for the common element problem.
Informally, we show that if the domain size $|\calX|$ is large, then either the number of messages $\ell$ must be large, or else the privacy parameter $\delta$ must be ``large''. Before we state and prove this impossibility result, we introduce the following bound on the mutual information between the input of a party in a 1-round differentially protocol and the messages she submits to the shuffle. This bound holds for any 1-round differentially protocol (not only for protocols for the common element problem). The results we use from information theory are given in \cref{app:info}.

\begin{theorem}\label{thm:MutualInfoBound}
Let $\Pi$ be a 1-round shuffle model protocol for $n$ parties satisfying $(\eps,\delta)$-differential privacy for coalitions of size 1, with message complexity $\ell$. Let $\calX$ denote the input domain (i.e., the input of every party is an element of $\calX$). Let $(Z_1,\dots,Z_n)\in\calX^n$ denote (possibly correlated) random variables.
Consider the execution of $\Pi$ on inputs $x_1=Z_1,\dots x_n=Z_n$, and for $i\in[n]$ let $Y_i$ denote the vector of messages submitted by party $P_i$ to the shuffle, in lexicographic order. Also let $\CRS$ be a random variable denoting the public randomness of the protocol. Then for every $i\in[n]$, if $Z_i$ is uniformly distributed over $\calX$ then
$$
I(Y_i,\CRS;Z_i)=O\left( (en)^{\ell}\cdot\left( \epsilon^2 + \frac{\delta}{\epsilon}\log|\calX| + \frac{\delta}{\epsilon}\log\frac{\epsilon}{\delta}  \right) + \ell\cdot \log\left(n\right)\right).
$$
\end{theorem}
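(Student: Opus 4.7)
The plan is to bound $I(Y_i, \CRS; Z_i)$ by decomposing it through the shuffle output $S$ and controlling each piece using (i) the protocol's $(\epsilon,\delta)$-differential-privacy guarantee and (ii) a combinatorial count of sub-multisets of $S$.

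Since $\CRS$ is sampled independently of $Z_i$, $I(Y_i, \CRS; Z_i) = I(Y_i; Z_i \mid \CRS)$, so it suffices to bound this conditional mutual information. I then apply the chain rule with $S$ as an auxiliary variable:
\[
I(Y_i; Z_i \mid \CRS) \;\leq\; I(Y_i, S; Z_i \mid \CRS) \;=\; I(S; Z_i \mid \CRS) + I(Y_i; Z_i \mid \CRS, S).
\]
The residual satisfies $I(Y_i; Z_i \mid \CRS, S) \leq H(Y_i \mid \CRS, S) \leq \log\binom{n\ell}{\ell} \leq \ell\log(en)$, since given $\CRS$ and the $n\ell$-message shuffle output $S$ the tuple $Y_i$ (sorted lexicographically) must correspond to one of at most $\binom{n\ell}{\ell}$ size-$\ell$ sub-multisets of $S$; this accounts for the $O(\ell\log n)$ additive term. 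For the main term $I(S; Z_i \mid \CRS)$, I would invoke DP: for any coalition $\{P_c\}$ with $c\neq i$, the view $(\CRS, x_c, r_c, S)$ is $(\epsilon,\delta)$-close on $i$-neighboring inputs, and post-processing (dropping $x_c, r_c$) yields that the mapping $Z_i \mapsto (\CRS, S)$ is $(\epsilon,\delta)$-differentially private. Since $\CRS \perp Z_i$, $I(\CRS, S; Z_i) = I(S; Z_i \mid \CRS)$; applying the standard mutual-information-vs-DP bound for uniform input over $\calX$ (as reviewed in the appendix on information theory) bounds this term by $O(\epsilon^2 + (\delta/\epsilon)\log|\calX| + (\delta/\epsilon)\log(\epsilon/\delta))$.

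The main obstacle is matching the $(en)^\ell$ multiplicative factor in the stated bound: the direct chain-rule analysis above appears to give an even tighter expression (no $(en)^\ell$ blowup on the DP terms). To reproduce the stated factor one takes the more indirect route hinted at in the introduction and transforms the shuffle protocol into an explicit $(\epsilon', \delta')$-local DP randomizer for $R_i(\CRS, \cdot)$. Concretely, for neighboring $z, z'$ and an event $T$ on $Y_i$, lift $T$ to $\widetilde T = \{S : \text{some size-}\ell\text{ sub-multiset of }S\text{ lies in }T\}$; apply shuffle DP to $\widetilde T$, and then use symmetry of the random permutation together with a union bound over the $\binom{n\ell}{\ell} \leq (en)^\ell$ size-$\ell$ position subsets to conclude local DP of $R_i(\CRS,\cdot)$ with parameters $\epsilon',\delta'$ inflated by $(en)^\ell$. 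Applying the local DP-to-MI bound to this randomizer then yields the claimed inequality.
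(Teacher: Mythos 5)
Both routes you outline contain a genuine gap. In your first (chain-rule) route, the step asserting that the mapping $Z_i \mapsto (\CRS, S)$ is $(\eps,\delta)$-differentially private is invalid when $(Z_1,\dots,Z_n)$ are correlated --- which the theorem explicitly allows, and which is exactly the regime in which it is applied (all inputs equal to a common uniform $Z$). The protocol's privacy guarantee controls a change of party $i$'s input \emph{with the other inputs held fixed}; when the other $Z_j$ are correlated with $Z_i$, conditioning on $Z_i=z$ versus $Z_i=z'$ also changes the other parties' inputs, so the induced channel from $Z_i$ to $(\CRS,S)$ need not be differentially private. Indeed, in the proof of \cref{thm:lowerBound} one has $I(S,\CRS;Z)\geq\frac{1}{2}\log|\calX|$ when all inputs equal a uniform $Z$, so the ``tighter'' bound your computation appears to give (no $(en)^\ell$ factor on the DP terms) is simply false in that setting; only the additive step $I(Y_i;Z_i\mid\CRS,S)\leq\log\binom{n\ell}{\ell}$ is sound. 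Your second route rests on the claim that protocol-level DP forces the single randomizer $R_i(\CRS,\cdot)$ itself to be DP with parameters inflated by $(en)^\ell$, and that claim is false. Counterexample with $\ell=1$: party $P_i$ sends its input bit in the clear, and every other party sends a uniformly random bit ignoring its input. A coalition of one other party sees (after removing its own message) a $\mathrm{Binomial}(n-2,\tfrac12)$ count shifted by $x_i$, which is $(\eps,\delta)$-DP with $\eps=O(\sqrt{\log(1/\delta)/n})$ and, say, $\delta=1/n^2$; yet $R_i$ is the identity on a bit and is not $(\eps',\delta')$-DP for any $\eps'$ unless $\delta'\geq 1$, so no $(en)^\ell$ inflation can hold. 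Relatedly, your lifting argument breaks in the reverse direction: $S\in\widetilde{T}$ can be caused entirely by other parties' messages, so applying shuffle DP to $\widetilde{T}$ yields no bound of the form $\Pr[Y_i\in T\mid z]\leq (en)^\ell\bigl(e^{\eps}\Pr[Y_i\in T\mid z']+\delta\bigr)$.

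The paper's proof threads between these two obstacles. It defines an auxiliary local algorithm that runs $R_i$ on $x_i$, simulates all other parties on \emph{fresh uniform inputs independent of $x_i$}, shuffles everything, and outputs a uniformly random $\ell$-subset of the shuffled messages together with $\crs$. Because the simulated others' inputs do not depend on $x_i$, this algorithm is a post-processing of a genuinely $(\eps,\delta)$-DP view, so \cref{thm:DPinformation} bounds its mutual information with a uniform input. Then, conditioning via \cref{lem:dontLoseMuchI} on the event (of probability $p=1/\binom{\ell n}{\ell}\geq (en)^{-\ell}$) that the sampled subset is exactly party $i$'s own message vector, the bound is transferred to $(Y_i,\CRS)$ at the cost of the multiplicative $(en)^\ell$ factor and the additive $\ell\log(4en)$ term --- this is where the stated factor really comes from, not from an inflated local-DP property of $R_i$. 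Finally, since in a one-round protocol $Y_i$ is a function of $(\CRS,Z_i)$ and $P_i$'s own randomness alone, the joint law of $(Y_i,\CRS,Z_i)$ in the simulation coincides with that in the real execution even under arbitrarily correlated inputs, which is the observation your direct argument is missing.
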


In words, the theorem states that the mutual information between $Z_i$ (the input of party $P_i$), and $(Y_i,\CRS)$ (the messages submitted by party $P_i$ and the public randomness) is bounded. 

Before proving \cref{thm:MutualInfoBound}, we quote two basic results from information theory (see \cref{app:info} for the proofs of these lemmas, as well as additional preliminaries form information theory).
Consider three random variables $Y_1,Y_2,Z$, where $Y_1$ and $Y_2$ are conditionally independent given $Z$.
The following lemma shows that the amount of information that $(Y_1,Y_2)$ give about $Z$, is at most the amount that $Y_1$ gives on $Z$ plus the amount that $Y_2$ gives on $Z$. (This is not necessarily true without the conditionally independent assumption.)

\begin{lemma}\label{lem:condIndInformation}
Let $Y_1,Y_2,Z$ be random variables, where $Y_1$ and $Y_2$ are conditionally independent given $Z$. Then, $I(Z;Y_1) + I(Z;Y_2) \geq I(Z; Y_1,Y_2).$
\end{lemma}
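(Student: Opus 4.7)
The plan is to prove the inequality by explicitly computing both sides in terms of Shannon entropies and then reducing it to subadditivity of entropy, where the conditional independence hypothesis is used to handle the conditional term.

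First I would expand the right-hand side using the definition of mutual information,
\[
I(Z;Y_1,Y_2) \;=\; H(Y_1,Y_2) - H(Y_1,Y_2\mid Z).
\]
The conditional independence of $Y_1$ and $Y_2$ given $Z$ is exactly the statement that $H(Y_1,Y_2\mid Z) = H(Y_1\mid Z) + H(Y_2\mid Z)$. Substituting this gives
\[
I(Z;Y_1,Y_2) \;=\; H(Y_1,Y_2) - H(Y_1\mid Z) - H(Y_2\mid Z).
\]

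Next I would expand the left-hand side in the same terms,
\[
I(Z;Y_1) + I(Z;Y_2) \;=\; \bigl(H(Y_1) - H(Y_1\mid Z)\bigr) + \bigl(H(Y_2) - H(Y_2\mid Z)\bigr).
\]
Subtracting the previous expression, the conditional entropy terms cancel and what remains is
\[
\bigl(I(Z;Y_1) + I(Z;Y_2)\bigr) - I(Z;Y_1,Y_2) \;=\; H(Y_1) + H(Y_2) - H(Y_1,Y_2) \;=\; I(Y_1;Y_2).
\]

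Finally I would invoke nonnegativity of mutual information, $I(Y_1;Y_2)\geq 0$ (equivalently, subadditivity of entropy), to conclude the claimed inequality. There is no real obstacle here; the only substantive point is that conditional independence is precisely what lets $H(Y_1,Y_2\mid Z)$ split additively, while no such splitting is available for the unconditional joint entropy, which is exactly why the direction of the inequality comes out as stated.
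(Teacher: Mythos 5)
Your proof is correct and is essentially the paper's argument in a different notation: both derive the exact identity $I(Z;Y_1)+I(Z;Y_2)-I(Z;Y_1,Y_2)=I(Y_1;Y_2)$ under the conditional independence hypothesis and conclude from $I(Y_1;Y_2)\geq 0$. The only cosmetic difference is that you expand everything in entropies and encode conditional independence as $H(Y_1,Y_2\mid Z)=H(Y_1\mid Z)+H(Y_2\mid Z)$, whereas the paper reaches the same identity by applying the chain rule for mutual information and using $I(Y_1;Y_2\mid Z)=0$.
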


The following lemma shows that if $I(X;Y|Z)$ is high and if $H(Z)$ is low, then $I(X;Y)$ must also be high. That is, if $X$ gives a lot of information on $Y$ when conditioning on a random variable $Z$ with low entropy, then $X$ gives a lot of information on $Y$ even without conditioning on $Z$.

\begin{lemma}\label{lem:dontLoseMuchI}
Let $X,Y,Z$ be three random variables. Then, $I(X;Y)\geq I(X;Y|Z)-H(Z).$
\end{lemma}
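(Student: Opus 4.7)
The plan is to apply the chain rule for mutual information in two different ways and combine them. Specifically, I would write
\[
I(X;Y,Z) \;=\; I(X;Z) + I(X;Y\mid Z) \;=\; I(X;Y) + I(X;Z\mid Y),
\]
which is a standard identity. Rearranging the equality between the two decompositions gives
\[
I(X;Y) \;=\; I(X;Y\mid Z) + I(X;Z) - I(X;Z\mid Y).
\]

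Next I would drop the nonnegative term $I(X;Z)\ge 0$ to get the lower bound
\[
I(X;Y) \;\geq\; I(X;Y\mid Z) - I(X;Z\mid Y),
\]
and then bound the conditional mutual information by entropy: since $I(X;Z\mid Y) = H(Z\mid Y) - H(Z\mid X,Y) \leq H(Z\mid Y) \leq H(Z)$, where the last step uses the fact that conditioning does not increase entropy. Substituting yields the desired inequality $I(X;Y) \geq I(X;Y\mid Z) - H(Z)$.

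There is no real obstacle here; the only ingredients are the chain rule, nonnegativity of mutual information, and the fact that conditioning does not increase entropy. All three are standard and are presumably recalled in the information-theory preliminaries in \cref{app:info}, so the proof consists essentially of writing the two chain-rule expansions, equating them, and applying the two elementary inequalities.
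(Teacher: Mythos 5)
Your proof is correct and follows essentially the same route as the paper's: a chain-rule rearrangement of mutual information, dropping nonnegative terms, and bounding the leftover $Z$-term by $H(Z)$. The only (cosmetic) difference is that the paper expands $I(Z;X,Y)=I(Z;X)+I(Z;Y)+I(X;Y|Z)-I(X;Y)$ and bounds $I(Z;X,Y)\leq I(Z;Z)=H(Z)$, whereas you expand $I(X;Y,Z)$ in two ways and bound $I(X;Z|Y)\leq H(Z|Y)\leq H(Z)$.
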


We are now ready to prove \cref{thm:MutualInfoBound}.
\begin{proof}[Proof of \cref{thm:MutualInfoBound}]
Let $R_1,\dots,R_n$ denote the randomizers in the protocol $\Pi$, and fix $i\in[n]$. 
We use $\Pi$ and $i$ to construct the following algorithm, which we call \texttt{LocalRandomizer}, that gets a single input $x_i$ and a public random string $\crs$.

\begin{enumerate}
    \item Compute $\vec{\widetilde{m}_i}\leftarrow R_i(\crs,x_i)$. That is, $\vec{\widetilde{m}_i}$ is the vector of $\ell$ messages chosen by $R_i$.
    \item For $j\neq i$, sample $x_j\in\calX$ uniformly at random, and let $\vec{\widetilde{m}_j}\leftarrow R_j(\crs,x_j)$.
    \item For $j\in[n]$, we write $\vec{\widetilde{y}_j}$ to denote $\vec{\widetilde{m}_j}$ after sorting it in lexicographic order.
    \item Let $\vec{\widetilde{s}}$ be a random permutation of the collection of all messages in $\vec{\widetilde{m}_1},\dots,\vec{\widetilde{m}_n}$.
    \item Let $\vec{\widetilde{y}}$ denote a (sorted) vector of $\ell$ messages chosen randomly (without repetition) from $\vec{\widetilde{s}}$.
    \item Return $\vec{\widetilde{y}},\crs$.
\end{enumerate}

Consider the execution of \texttt{LocalRandomizer} on a uniformly random input $x_i=\widetilde{Z}$ with the public randomness $\widetilde{\CRS}$. We will use $\widetilde{Y},\widetilde{S}$ and $\left\{\widetilde{M}_i\right\}_{i\in[n]}$ $\left\{\widetilde{Y}_i\right\}_{i\in[n]}$ to denote the random variables taking values $\vec{\widetilde{y}},\vec{\widetilde{s}}, \{\vec{\widetilde{m}_i}\}_{i\in[n]}$, and $\{\vec{\widetilde{y}_i}\}_{i\in[n]}$ during the execution.

Observe that $\widetilde{S}$ is identically distributed to the outcome of the shuffler in an execution of $\Pi$ on random inputs, and observe that the outcome of \texttt{LocalRandomizer} is computed as a post-processing of $\widetilde{S}$ and $\widetilde{\CRS}$. Algorithm \texttt{LocalRandomizer} is, therefore, $(\epsilon,\delta)$-differentially private (as a function of $x_i$). Since the mutual information between the input and the output of a differentially private algorithm is bounded (see, e.g.,~\cite{BassilyS15} or \cref{thm:DPinformation}), there exists a constant $\lambda$ such that
\begin{equation}
I\left(\widetilde{Y},\widetilde{\CRS};\widetilde{Z}\right)\leq \lambda\cdot\left( \epsilon^2 + \frac{\delta}{\epsilon}\log|\calX| + \frac{\delta}{\epsilon}\log(\epsilon/\delta)  \right).\label{eq:lower1}    
\end{equation}
We now relate $I\left(\widetilde{Y},\widetilde{\CRS};\widetilde{Z}\right)$ to $I\left(\widetilde{Y}_i,\widetilde{\CRS};\widetilde{Z}\right)$. Intuitively, the connection is that with probability $\approx n^{-\ell}$ we get that $\widetilde{Y}=\widetilde{Y}_i$. Formally, let $T$ be a random variable taking value 0 if $\widetilde{Y}=\widetilde{Y}_i$ and otherwise $T=1$, and denote $p=\Pr[T=0]=1/\binom{\ell n}{\ell}$. By \cref{lem:dontLoseMuchI} and
using standard bounds on the entropy of a binary random variable (see, e.g.,~\cref{claim:binaryHbounds})  
we get that
\begin{align}
I\left(\widetilde{Y},\widetilde{\CRS};\widetilde{Z}\right)&\geq I\left(\left.\widetilde{Y},\widetilde{\CRS};\widetilde{Z}\right|T\right)-H(T)
\geq I\left(\left.\widetilde{Y},\widetilde{\CRS};\widetilde{Z}\right|T\right)-p\log\left(\frac{4}{p}\right)\nonumber\\
&= \E_{t\leftarrow T}\left[ I\left(\left.\widetilde{Y},\widetilde{\CRS};\widetilde{Z}\right|T=t\right) \right] -p\log\left(\frac{4}{p}\right)
\geq p\cdot I\left(\left.\widetilde{Y},\widetilde{\CRS};\widetilde{Z}\right|T=0\right) -p\log\left(\frac{4}{p}\right)\nonumber\\
&= p\cdot I(\widetilde{Y}_i,\widetilde{\CRS};\widetilde{Z}) -p\log\left(\frac{4}{p}\right).\label{eq:lower2}
\end{align}
So, combining Inequalities~(\ref{eq:lower1}) and~(\ref{eq:lower2}) we get that
\begin{align*}
I\left(\widetilde{Y}_i,\widetilde{\CRS};\widetilde{Z}\right)&\leq\frac{\lambda}{p}\cdot\left( \epsilon^2 + \frac{\delta}{\epsilon}\log|\calX| + \frac{\delta}{\epsilon}\log(\epsilon/\delta)  \right) + \log\left(\frac{4}{p}\right)\\
&\leq \lambda\cdot (en)^{\ell}\cdot\left( \epsilon^2 + \frac{\delta}{\epsilon}\log|\calX| + \frac{\delta}{\epsilon}\log(\epsilon/\delta)  \right) + \ell\cdot \log\left(4en\right).
\end{align*}
Finally, observe that
the input $\widetilde{Z}$, the public randomness $\widetilde{\CRS}$, and the (sorted) vectors of messages $\widetilde{Y}_i$ in the execution of \texttt{LocalRandomizer} are identically distributed to these variables in the execution of $\Pi$ on inputs $(Z_1,\dots,Z_n)$ with the public randomness $\CRS$. That is, the random variables 
$\left(\widetilde{Y}_i,\widetilde{\CRS},\widetilde{Z}\right)$ and $\left(Y_i,\CRS,Z_i\right)$ are identically distributed. Therefore,
\begin{align*}
I\left(Y_i,\CRS;Z_i\right)\leq \lambda\cdot (en)^{\ell}\cdot\left( \epsilon^2 + \frac{\delta}{\epsilon}\log|\calX| + \frac{\delta}{\epsilon}\log(\epsilon/\delta)  \right) + \ell\cdot \log\left(4en\right).  
\end{align*}
\end{proof}

We next present our impossibility result for the common element problem. 

\begin{theorem}\label{thm:lowerBound}
There exists a constant $\lambda>1$ such that the following holds.
Let $\epsilon\leq1$, let $\ell\in\N$, and let $\calX$ be such that $|\calX|\geq2^{\lambda(4en)^{\ell+1}}$.  
Let $\Pi$ be a 1-round protocol for the common element problem over the domain $\calX$ with message complexity $\ell$, such that $\Pi$ is $(\epsilon,\delta)$-differentially private for coalitions of size 1. Then,
$$
\delta=\Omega\left( (en)^{-\ell-1}\right).
$$
\end{theorem}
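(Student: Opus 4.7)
The plan is to combine the correctness requirement on the ``all equal'' input with the per-party mutual information bound of \cref{thm:MutualInfoBound}, bridging the two via a sub-additivity argument that exploits conditional independence of the parties' messages.

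I would first instantiate the protocol on the input vector $\vec{x}=(Z,\ldots,Z)$, where $Z$ is uniform over $\calX$. By the correctness guarantee of \cref{def:CommonElementProblem}, the analyzer, whose view consists only of the shuffle output $S$ and the public randomness $\CRS$, outputs $Z$ with probability at least $3/4$. Applying Fano's inequality to the analyzer's predictor, followed by the data-processing inequality, yields
\[
  I(S,\CRS;Z)\;\geq\;\tfrac{3}{4}\log|\calX|-O(1).
\]

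Next I would upper bound $I(S,\CRS;Z)$ by a sum of per-party contributions. Writing $Y_i$ for the sorted vector of $P_i$'s messages, the canonical shuffle output $S$ is a deterministic function of the multiset of all $Y_i$'s, so $I(S,\CRS;Z)\le I(Y_1,\ldots,Y_n,\CRS;Z)$. The key structural observation is that, conditioned on $(Z,\CRS)$, the vectors $Y_1,\ldots,Y_n$ are mutually independent, since each $Y_i$ is a function of $Z$, $\CRS$, and $P_i$'s independent private randomness. Using the $n$-variable analogue of \cref{lem:condIndInformation} conditionally on $\CRS$, together with the independence of $\CRS$ from $Z$, I would conclude
\[
  I(S,\CRS;Z)\;\leq\;\sum_{i=1}^n I(Y_i,\CRS;Z).
\]
Since $Z$ is marginally uniform on $\calX$, each summand is bounded by \cref{thm:MutualInfoBound} by $O\bigl((en)^\ell\bigl(\eps^2+\tfrac{\delta}{\eps}\log|\calX|+\tfrac{\delta}{\eps}\log\tfrac{\eps}{\delta}\bigr)+\ell\log n\bigr)$.

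Finally I would chain the two inequalities and solve for $\delta$. Taking $\lambda$ large enough, the assumption $\log|\calX|\geq \lambda(4en)^{\ell+1}$ makes $\log|\calX|$ dominate both the additive $n\ell\log n$ term and the $n(en)^\ell\eps^2$ term on the right-hand side; the $\log(\eps/\delta)$ factor can likewise be absorbed (either $\delta\geq (en)^{-\ell-1}$ already, in which case there is nothing to prove, or $\delta$ is small enough that $\tfrac{\delta}{\eps}\log(\eps/\delta)\ll\tfrac{\delta}{\eps}\log|\calX|$ given the size of $\calX$). What remains is essentially $\log|\calX| = O\bigl(n(en)^\ell\cdot\tfrac{\delta}{\eps}\cdot \log|\calX|\bigr)$, which rearranges to $\delta=\Omega(\eps\cdot(en)^{-\ell-1})=\Omega((en)^{-\ell-1})$ using $\eps\le 1$. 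The main obstacle I anticipate is the sub-additivity step: without conditional independence, $I(Y_1,\ldots,Y_n;Z)$ need not split into $\sum_i I(Y_i;Z)$, so the argument hinges on recognizing that the ``all equal'' instance is precisely what makes the parties' messages conditionally i.i.d.\ given $(Z,\CRS)$, and on handling the extra conditioning on $\CRS$ correctly. The remaining steps amount to a Fano-type lower bound and routine bookkeeping to absorb lower-order terms using the hypothesis on $|\calX|$.
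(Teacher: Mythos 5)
Your plan follows the paper's proof in all essentials: run the protocol on the all-equal input $(Z,\dots,Z)$ with $Z$ uniform, lower bound $I(S,\CRS;Z)$ by $\Omega(\log|\calX|)$ from correctness (the paper does this via \cref{lem:dontLoseMuchI} applied to the indicator of the analyzer's success, which is exactly a Fano-type argument, plus data processing), use the conditional independence of $Y_1,\dots,Y_n$ given $(Z,\CRS)$ together with \cref{lem:condIndInformation} and $I(\CRS;Z)=0$ to split into per-party contributions, and finish with \cref{thm:MutualInfoBound}. The only cosmetic difference is that the paper pigeonholes to a single index $i^*$ with $I(Y_{i^*},\CRS;Z)\ge\frac{1}{2n}\log|\calX|$ instead of keeping the sum; the two are equivalent.

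The one step that does not work as written is the last one. From $\log|\calX|=O\bigl(n(en)^{\ell}\cdot\frac{\delta}{\eps}\cdot\log|\calX|\bigr)$ you correctly get $\delta=\Omega\bigl(\eps\,(en)^{-\ell-1}\bigr)$, but the inference $\Omega\bigl(\eps\,(en)^{-\ell-1}\bigr)=\Omega\bigl((en)^{-\ell-1}\bigr)$ ``using $\eps\le1$'' goes in the wrong direction: for small $\eps$ your lower bound on $\delta$ becomes weaker, not stronger. The fix is the paper's opening observation: since $\eps\le1$, the protocol is in particular $(1,\delta)$-differentially private (and a protocol over $\calX$ is a protocol over any subset of $\calX$), so one may run the entire argument with $\eps=1$ and $|\calX|=2^{\lambda(4en)^{\ell+1}}$, after which solving for $\delta$ gives $\delta\ge\frac{1}{8\lambda(en)^{\ell+1}}$ directly. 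Relatedly, your dichotomy for absorbing the $\frac{\delta}{\eps}\log(\eps/\delta)$ term is not exhaustive as stated: if $\delta$ is extremely small, $\log(\eps/\delta)$ can exceed $\log|\calX|$, but then the whole term $\frac{\delta}{\eps}\log(\eps/\delta)$ is itself negligible and cannot account for the left-hand side, so the conclusion still holds; this is routine bookkeeping but should be argued that way rather than by claiming $\log(\eps/\delta)\ll\log|\calX|$.
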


\begin{proof}
We first give a short overview of the proof.  Recall that if all inputs are equal to some element $x\in\calX$, then the analyzer must output $x$ with high probability. This also holds when the (common) input $x$ is chosen uniformly at random from $\calX$, which means that the mutual information between the (common) input and the output of the analyzer must be high. We show that this means that there must be at least one party $P_{i^*}$ such that mutual information between the random (common) input and the messages submitted by $P_{i^*}$ must be high, which will contradict Theorem~\ref{thm:MutualInfoBound}. 

Let $R_1,\dots,R_n$ denote the randomizers in the protocol $\Pi$. Let $Z$ be a uniformly random element of $\calX$ and consider the execution of $\Pi$ on inputs $x_1=x_2=\dots=x_n=Z$ with a public random string $\CRS$. For $i\in[n]$, let $M_i$ denote a random variable representing the vector of $\ell$ messages submitted to the shuffler by party $P_i$, and let $Y_i$ be the same as $M_i$ after sorting it in lexicographic order. Let $S$ be a random variable denoting the outcome of the shuffler. That is, $S$ is a random permutation of all the messages in $M_1,\dots,M_n$.
Alternatively, $S$ is a random permutation of all the messages in $Y_1,\dots,Y_n$.
We use $A$ for the random variable denoting the outcome of the analyzer at the end of the execution.

Since $A=Z$ with probability at least $3/4$,  the mutual information between $A$ and $Z$ must be high. Specifically, Let $B$ be a random variable taking value 0 if $A=Z$ and otherwise $B=1$. By \cref{lem:dontLoseMuchI}
\begin{align*}
I(A;Z)&\geq I(A;Z|B)-H(B)
\geq I(A;Z|B)-1
= \E_{b\leftarrow B}\Big[ I(A;Z|B=b) \Big] - 1\\
&\geq \frac{3}{4}\cdot I(A;Z|B=0) - 1
=  \frac{3}{4}\cdot I(Z;Z) - 1
=  \frac{3}{4}\cdot H(Z) - 1
= \frac{3}{4}\cdot \log|\calX| - 1
\geq\frac{1}{2}\cdot \log|\calX|.
\end{align*}

Recall that $A$ is a (possibly randomized) function of the outcome of the shuffle $S$ and the public randomness $\CRS$. Hence,
$
I(S,\CRS;Z) \geq I(A;Z) \geq\frac{1}{2}\cdot \log|\calX|.
$
We now show that there must exist an index $i^*\in[n]$ such that
$$
I(Y_{i^*},\CRS;Z)\geq\frac{1}{n}\cdot I(S,\CRS;Z)\geq\frac{1}{2n}\cdot \log|\calX|.
$$
To that end, observe that since $\Pi$ is a 1-round  protocol, then conditioned on $Z$ and on the public randomness $\CRS$ we have that the messages that party $P_i$ sends are independent of the messages that party $P_j$, where $j\neq i$, sends. That is, the random variables $Y_1,\dots,Y_n$ are conditionally independent given $(Z,\CRS)$. Therefore, by \cref{lem:condIndInformation} we have that
\begin{align*}
\sum_{i\in[n]} I(Y_i,\CRS;Z) 
&=\sum_{i\in[n]} \big(I(\CRS;Z)+I(Y_i;Z|\CRS) \big) \\
&=\sum_{i\in[n]} I(Y_i;Z|\CRS)\\
& \geq I(Y_1,\dots,Y_n;Z|\CRS)\\ 
&\geq I(S;Z|\CRS) \\
& = I(S,\CRS;Z) - I(\CRS;Z) \\
& = I(S,\CRS;Z)\\
& \geq\frac{1}{2}\cdot \log|\calX|.
\end{align*}
Hence, there must exist an index $i^*$ such that $$I(Y_{i^*},\CRS;Z)\geq\frac{1}{n}\cdot I(S,\CRS;Z)\geq\frac{1}{2n}\cdot \log|\calX|.$$

We are now ready to complete the proof.
Observe that it suffices to prove the theorem assuming that $\epsilon=1$ and that $|\calX|=2^{\lambda(4en)^{\ell+1}}$. The reason is that any $(\epsilon,\delta)$-differentially private protocol with $\epsilon\leq1$ is also $(1,\delta)$-differentially private, and that a protocol for the common element problem over a domain $\calX$ is, in particular, a protocol for the common element problem over subsets of $\calX$.
By \cref{thm:MutualInfoBound} (our bound on the mutual information between the input and the messages submitted by any single party in a 1-round protocol), there exists a constant $\lambda>1$ such that
\begin{align*}
\frac{1}{2n}\cdot \log|\calX| &\leq I(Y_{i^*},\CRS;Z) \leq  \lambda\cdot (en)^{\ell}\cdot\left( \epsilon^2 + \frac{\delta}{\epsilon}\log|\calX| + \frac{\delta}{\epsilon}\log(\epsilon/\delta)  \right) + \ell\cdot \log\left(4en\right).
\end{align*}
Substituting $\epsilon=1$ and $|\calX|=2^{\lambda(4en)^{\ell+1}}$, 
and solving for $\delta$, we get that 
$
\delta\geq \frac{1}{8\lambda (en)^{\ell+1}}.
$
\end{proof}

\subsection{A two-round protocol with message complexity 1}
\label{sec:commonTwoRound}

Intuitively, \cref{thm:lowerBound} shows that in any 1-round protocol for the common element problem, we either have that the message complexity is large, or we have that $\delta$ cannot be too small.
In \cref{prot:CommonTwoRound} we present a two round protocol for the common element problem, in which the message complexity is 1 and $\delta$ can be negligible. Our protocol, which we call \ProtCommonTwoRound, uses the shuffle channel in only one of the two rounds, and the communication in the second round is done via a public channel. 

\begin{theorem}\label{thm:commontworounds}
Let $\delta\in(0,1)$.
\ProtCommonTwoRound, described in \cref{prot:CommonTwoRound}, is $(O(1),O(\delta))$-differentially private against coalitions of size $0.9n$ that solves the common element problem. The protocol uses two rounds (one via a public channel and one via the shuffle) and has message complexity 1.
\end{theorem}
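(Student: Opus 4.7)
The plan is to establish correctness and privacy separately, treating \ProtCommonTwoRound as a combination of two sub-mechanisms that together (i) localize a candidate common element and (ii) verify it with differential privacy. For correctness, I would restrict attention to the case $x_1=\dots=x_n=x$ and argue, via standard Chernoff--Hoeffding concentration on the internal randomized-response choices of the parties, that with probability at least $3/4$ the analyzer identifies the correct $x$ in both rounds. The key is that in the common-input case the distributions of what parties send are very sharply concentrated on the ``true'' element (up to the calibrated noise rate), so the decision rule applied by the analyzer (presumably a threshold on the frequency of the heaviest multiset element from the shuffle round, followed by a check using the public-channel round) succeeds except on a low-probability event.

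For privacy, fix an arbitrary coalition $\calC$ with $|\calC|\leq 0.9n$, a party $P_i \notin \calC$, and $i$-neighboring inputs $\vec{x}, \vec{x}'$. The view $\View_\calC^\Pi(\vec{x})$ consists of the CRS $\crs$, the inputs and internal randomness of $\calC$, and the transcripts of the two rounds $s_1, s_2$. Since $P_i\notin \calC$ and there are at least $0.1n$ honest parties outside the coalition, I would analyze the two rounds separately and then combine via basic (or advanced) composition together with \cref{lem:composition}-style reasoning applied to each round.

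For the shuffle round, I would invoke privacy amplification via shuffling (as in the analyses used in the shuffle-model literature, e.g.\ Feldman--McMillan--Talwar-type bounds) applied to the $0.1n$ parties outside the coalition. Because the coalition can subtract its own $|\calC|$ messages from the shuffle output, the residual view is a shuffled multiset of locally-randomized responses of $0.1n$ honest parties, one of which is $P_i$. If each honest party applies a local randomizer with parameter $\epsilon_0 = O(1)$, then by amplification the residual multiset is $(O(1), \delta)$-DP with respect to $P_i$'s input, for any target $\delta$ achievable by calibrating $\epsilon_0$ against $0.1n$. For the public-channel round, since the identity of each sender is revealed, I would rely on local differential privacy of $P_i$'s message in round 2 (again via randomized response calibrated by $\delta$), which gives an $O(1)$-DP contribution for $P_i$ alone. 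Combining the two rounds by (basic) composition yields $(O(1), O(\delta))$-DP overall.

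The main obstacle is reconciling the tension between privacy and correctness in the public round: because the public channel identifies senders, $P_i$'s message in round~2 is visible to the coalition as their message, so this round can only contribute $O(1)$ local-DP noise, not amplified noise. I would have to verify that the two rounds together still allow the analyzer to pin down the common element with probability $3/4$: in particular, that the round-1 candidate is correct with high probability under the calibrated noise level, and that the round-2 verification can tolerate the constant-rate randomized-response flips used to achieve local privacy, while correctly rejecting neighboring non-common-input cases and accepting the common-input case. Making the Chernoff estimates go through with the exact parameter choices dictated by the DP requirement, and then writing out the composition cleanly through \cref{def:DP_Prot}, constitute the bulk of the bookkeeping.
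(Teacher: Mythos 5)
Your privacy argument for the shuffle round does not apply to \ProtCommonTwoRound as it is actually defined. In the second round, a party whose hashed value matches $y^*$ sends its input $x_i$ \emph{in the clear} with probability $1/2$ (and $\bot$ otherwise); this local randomizer is not $\epsilon_0$-locally differentially private for any finite $\epsilon_0$, so amplification-by-shuffling theorems (which require each honest party's randomizer to be locally DP) cannot be invoked, and no calibration fixes this without changing the protocol --- indeed the analyzer must recover an exact element of an arbitrarily large domain $\calX$ from these messages, which a bounded-$\epsilon_0$ local randomizer with message complexity 1 would preclude. The paper's proof instead rests on a case analysis driven by the first (public, hashed-histogram) round: if no input occurs more than $0.96n$ times, then with probability $1-\delta$ both neighboring executions terminate at \stepref{step:CommonTwoRoundsTerminate}, so the shuffle round contributes nothing; if some $x^*$ occurs more than $0.96n$ times, then (conditioned on an accurate first round) every non-$\bot$ message in round 2 equals $x^*$, so the shuffle output is equivalent to the \emph{count} of $x^*$ messages, and since $|\calC|=0.9n$ there are at least $0.05n$ honest parties other than $P_i$ holding $x^*$ whose independent probability-$1/2$ participation decisions add $\Bin(0.05n,\tfrac12)$ noise to that count; a direct ratio computation on the binomial in its central range (reached with probability $1-\delta$) gives $(1,\delta)$-DP for the $\pm1$ shift caused by changing $P_i$'s input. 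Your proposal misses this case analysis, the role of the hashing and the $0.98n$ termination threshold, and the ``noise from other parties'' binomial argument, which are precisely what make the send-the-element-in-the-clear step private; as written, the central privacy claim for the shuffle round rests on a tool that does not apply, so there is a genuine gap.

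Your treatment of the public-channel round (local differential privacy of the histogram messages, which holds with identities revealed) and the plan to finish by composition are consistent with the paper's proof, modulo the paper's small technical step of conditioning on the first round producing an accurate data structure. Your correctness sketch is also in the right spirit but reverses the rounds: in the protocol the public histogram round comes first and selects $y^*$, and correctness is then immediate --- an accurate histogram forces $y^*=h(x)$ when all inputs equal $x$, and the analyzer fails only if all $n$ parties send $\bot$ in round 2, which happens with probability $2^{-n}$, so no Chernoff-style balancing of noise against a verification step is needed.
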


\protocol{\ProtCommonTwoRound}{A two-round protocol in the shuffle model for the common element problem with message complexity 1.}{prot:CommonTwoRound}{
\textbf{Inputs:} Each party $P_i$ (for $i\in[n]$) holds an input $x_i\in\calX$. The analyzer $P_0$ has no input. All parties have access to a hash function $h:\calX\rightarrow[n^2/\delta]$ chosen with uniform distribution from a pairwise independent family (defined, e.g., using a public random string).
\begin{enumerate}
\item
Every party $P_i$ computes $y_i\leftarrow h(x_i)$.
\item\label{step:CommonTwoRoundsLDP}
The parties use the public channel to execute a 1-round $(\epsilon,0)$-differentially private protocol in the local model for histograms over the (distributed) database $Y=(y_1,y_2,\dots,y_n)$ with failure probability $\delta$ (see e.g.,~\cite{BunNS19}, or \cref{thm:LDPhistograms}). This results in a data structure $D$ (known to all parties) that gives estimations for the multiplicities of elements in $Y$. That is, for every $y\in[n^2/\delta]$ we have that $D(y)\approx\left|\left\{ i\in[n] : y_i=y \right\}\right|$.
\item\label{step:CommonTwoRoundsTerminate}
Let $y^*\in[n^2/\delta]$ be an element that maximizes $D(y)$. If $D(y)<\frac{98\cdot n}{100}$ then all parties terminate, and the analyzer outputs $\bot$.
\item
Otherwise, each party $P_i$ prepares a single message $m_i$ as follows:
\begin{enumerate}
    \item If $y_i\neq y^*$ then $m_i=\bot$.
    \item Otherwise, $m_i=\bot$ with probability $1/2$ and $m_i=x_i$ with probability $1/2$.
\end{enumerate}
\item
Each party $P_i$ sends the message $m_i$ to the shuffle. All parties receive a permutation $s$ of $(m_1,\dots,m_n)$.
\item\label{step:CommonTwoRoundsOutput}
The analyzer outputs the element $x^*\neq\bot$ with the largest number of appearances in $s$ (the analyzer fails if all elements of $s$ are equal to $\bot$).
\end{enumerate}
}

 We begin with the privacy analysis of \ProtCommonTwoRound.

\begin{lemma}\label{lem:CommonTwoPrivacy}
\ProtCommonTwoRound
 is $(O(1),O(\delta))$-differentially private against coalitions of size $0.9n$.
\end{lemma}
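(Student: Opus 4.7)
Fix a coalition $\calC$ of size at most $0.9n$, a party $P_i\not\in\calC$, and two $i$-neighboring input vectors $\vec{x},\vec{x}'$. The coalition's view decomposes into three parts: the public random string that determines the hash $h$; the transcript $L$ of the local-model histogram protocol from Step~\ref{step:CommonTwoRoundsLDP} (from which $D$, the candidate $y^*$, and the decision whether to abort are deterministic functions); and, if the protocol does not abort, the shuffle output $s$. The plan is to handle the LDP transcript via its local-DP guarantee, and to argue that, conditioned on most realisations of the LDP transcript, the shuffle output distribution depends only negligibly on the single entry $x_i$ versus $x_i'$, then recombine.

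For the LDP transcript, the Step-\ref{step:CommonTwoRoundsLDP} protocol is $(\epsilon,0)$-locally differentially private for some constant $\epsilon$, so the distribution of $L$ (and hence of $D$, $y^*$, and the abort decision) under $\vec{x}$ is $(\epsilon,0)$-close to its distribution under $\vec{x}'$. The bulk of the work is the shuffle analysis. I would introduce two bad events: $E_1$, that the pairwise-independent hash $h$ has a collision on the at-most-$(n+1)$-element set $\{x_1,\dots,x_n,x_i'\}$; and $E_2$, that the histogram estimate $D$ fails its accuracy guarantee. By pairwise independence of $h$ with range $[n^2/\delta]$ we get $\Pr[E_1]\le\delta$, and by construction $\Pr[E_2]\le\delta$, so the $O(\delta)$ failure mass will be absorbed into the final $\delta$.

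Conditioning on $\overline{E_1\cup E_2}$, the map $x_j\mapsto y_j$ is injective on the union of both input sets. Thus whenever the protocol proceeds, the triggering threshold $D(y^*)\ge 0.98 n$ combined with LDP-accuracy implies that the true multiplicity of $y^*$ in $(y_1,\dots,y_n)$ is at least, say, $0.97n$, and hence at least $(0.97-0.9)n=\Omega(n)$ honest parties outside $\calC$ share the common input $x^*=h^{-1}(y^*)$. In particular, every non-$\bot$ message that any party sends in the shuffle equals $x^*$, so the shuffle output is fully determined by the count $C$ of $x^*$ messages. Decompose $C = C_\calC + C_{H\setminus\{i\}} + C_i$, where $C_\calC$ is the coalition's contribution (computable from its view), $C_{H\setminus\{i\}}\sim\mathrm{Bin}(k,1/2)$ with $k=\Omega(n)$ counts the independent coin flips of the other honest parties holding $x^*$, and $C_i$ equals $\mathrm{Bern}(1/2)$ or $0$ according to whether $x_i=x^*$ or not. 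Moving from $\vec{x}$ to $\vec{x}'$ at worst replaces a $0$ by an independent $\mathrm{Bern}(1/2)$ in $C_i$, shifting $C$ from $C_\calC+\mathrm{Bin}(k,1/2)$ to $C_\calC+\mathrm{Bin}(k+1,1/2)$.

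The key technical ingredient is then that $\mathrm{Bin}(k,1/2)$ and $\mathrm{Bin}(k+1,1/2)$ are $(O(1),2^{-\Omega(k)})$-close in the $(\epsilon,\delta)$-sense; this follows from a direct PMF-ratio comparison in an $O(\sqrt{k\log(1/\delta)})$-neighbourhood of the mode (where the ratio is bounded by a constant) combined with a Chernoff bound absorbing the tails. Since $k=\Omega(n)$, this yields $(O(1),O(\delta))$-closeness of the shuffle output conditional on any good LDP transcript. Combining the $(\epsilon,0)$-closeness of the LDP transcript with the $(O(1),O(\delta))$-conditional closeness of the shuffle output via a direct PMF-ratio computation on the joint view, and adding the $O(\delta)$ mass of $E_1\cup E_2$, yields the claimed $(O(1),O(\delta))$-DP bound. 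The main technical obstacle is that the LDP transcript and the shuffle output both depend on the same hashed values $y_j$, which makes a naive black-box composition awkward; the argument therefore proceeds by conditioning carefully on the LDP transcript, leveraging the injectivity supplied by $\overline{E_1}$ and the accuracy supplied by $\overline{E_2}$ to reduce the shuffle-stage analysis to the one-dimensional binomial comparison above.
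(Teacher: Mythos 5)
Your plan is correct and follows essentially the same route as the paper's proof: the public-channel round is handled by the local-DP guarantee of the histogram protocol, hash collisions and histogram inaccuracy are absorbed into the $\delta$ term, and the shuffle round reduces to the $(O(1),2^{-\Omega(n)})$-closeness of a binomial count of $x^*$-messages shifted by one, recombined by conditioning on the first-round outcome (the paper implements this last step with hybrid variables $\tilde{S}_1,\tilde{S}'_1$ in place of your conditioning on the good events). The remaining differences are cosmetic --- the paper cases on the true frequency of the most frequent element rather than on the abort decision, and draws the binomial from a fixed set of $0.05n$ honest parties rather than all honest holders of $x^*$ --- so no changes are needed.
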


\begin{proof}
Fix an index $i\in[n]$, fix two $i$-neighboring input vectors $\vec{x}$ and $\vec{x'}$, and fix a coalition $\calC$ of size $|\calC|=0.9n$ such that $P_i\notin\calC$. We need to show that
$\View^\Pi_\calC(\vec{x}) \approx_{\epsilon,\delta} \View^\Pi_\calC(\vec{x}').$
First observe that with probability at least $1-\delta$ over the choice of the hash function $h$, we have that $h$ perfectly hashes all the different inputs in $\vec{x},\vec{x'}$ (note $\vec{x},\vec{x'}$ span at most $n+1$ different values). We proceed with the analysis after fixing such a hash function $h$.

We write $\vec{x}_{\calC}=\vec{x'}_{\calC}$ to denote the inputs of the parties in $\calC$, and fix the internal randomness $r_{\calC}$ of the parties in $\calC$. Now let $S_1$ and $S_2$ be random variables representing the output of the public channel and the shuffle, respectively, during the execution on $\vec{x}$, where we denote $S_2=\bot$ if the execution halted on \stepref{step:CommonTwoRoundsTerminate}. Similarly, $S'_1,S'_2$ denote the outputs of these channels during the execution on $\vec{x'}$. With these notations we have that

$$\View^\Pi_\calC(\vec{x})=\left(h,r_{\calC},\vec{x}_{\calC},S_1,S_2\right)
\qquad\text{and}\qquad
\View^\Pi_\calC(\vec{x}')=\left(h,r_{\calC},\vec{x}_{\calC},S'_1,S'_2\right).$$

Observe that $S_1$ and $S'_1$ are computed using an $(\eps,0)$-differentially private protocol in the local model (see \cref{thm:LDPhistograms}), and hence,
$$\left(h,r_{\calC},\vec{x}_{\calC},S_1\right)
\approx_{(\epsilon,0)}\left(h,r_{\calC},\vec{x}_{\calC},S'_1\right).$$

We next argue about $S_2$ and $S'_2$. For an element $x\in\calX$ we write $f_{\vec{x}}(x)$ to denote the number of occurrences of $x$ in the input vector $\vec{x}$. Also, let $x^*\in\calX$ denote the most frequent element in $\vec{x}$, that is, an element such that  $f_{\vec{x}}(x^*)$ is maximized.

\paragraph{Case (a)$\quad{\boldsymbol{ f_{\vec{x}}(x^*)\leq\frac{96\cdot n}{100}}}$ : }

By the utility guarantees of the protocol for histograms (executed on \stepref{step:CommonTwoRoundsLDP}), each of the two executions terminates in \stepref{step:CommonTwoRoundsTerminate} with probability at least $(1-\delta)$. This is because if $n=\Omega(\frac{1}{\eps^2}\log(\frac{1}{\eps\delta}))$ then with probability at least $(1-\delta)$ all of the estimates given by $D(\cdot)$ are accurate to within $\pm0.01n$ (see \cref{thm:LDPhistograms}). Therefore, in case (a) we have
\begin{align*}
\View^\Pi_\calC(\vec{x})&=\left(h,r_{\calC},\vec{x}_{\calC},S_1,S_2\right)
\approx_{(0,\delta)}
\left(h,r_{\calC},\vec{x}_{\calC},S_1,\bot\right)\\
&\approx_{(\eps,\delta)}
\left(h,r_{\calC},\vec{x}_{\calC},S'_1,\bot\right)
\approx_{(0,\delta)}
\left(h,r_{\calC},\vec{x}_{\calC},S'_1,S'_2\right)
=\View^\Pi_\calC(\vec{x}').
\end{align*}

\paragraph{Case (b)$\quad{\boldsymbol{ f_{\vec{x}}(x^*)>\frac{96\cdot n}{100}}}$ : }

Fix any value $s_1$ for the outcome of the public channel, such that all the estimates given by the resulting data structure $D(\cdot)$ are accurate to within $\pm0.01n$ w.r.t.\ $\vec{x}$. We first show that conditioned on such an $s_1$ we have that 
$$
\left(h,r_{\calC},\vec{x}_{\calC},s_1,S_2\right)
\approx_{(\eps,\delta)}
\left(h,r_{\calC},\vec{x}_{\calC},s_1,S'_2\right).
$$
To see this, observe that once we condition on $s_1$   
then either both executions terminate on \stepref{step:CommonTwoRoundsTerminate}, or in the two executions we have that $y^*=h(x^*)$ (because $f_{\vec{x}}(x^*)>0.96n$). If $s_1$ is such that the two executions terminate on \stepref{step:CommonTwoRoundsTerminate}, then (conditioned on $s_1$) we have $S_2=S'_2=\bot$ and so
$$
\left(h,r_{\calC},\vec{x}_{\calC},s_1,S_2\right)
\equiv
\left(h,r_{\calC},\vec{x}_{\calC},s_1,S'_2\right).
$$
Now suppose that the two executions do not halt prematurely, and that $y^*=h(x^*)$.
In that case, the outcome of the shuffle contains (randomly permuted) copies of $\bot$ and copies of $x^*$.
Note that since the outcome of the shuffle is randomly permuted, then the outcome distribution of the shuffle is determined by the number of occurrences of $x^*$.

Note that if $x_i$ and $x'_i$ are both equal to $x^*$, or are both different from $x^*$, then $S_2$ and $S'_2$ are identically distributed, which would complete the proof.
We, therefore, assume that exactly one of $x_i,x'_i$ is equal to $x^*$. Suppose without loss of generality that $x_i=x^*$ and $x'_i\neq x^*$.

Since $f_{\vec{x}}(x^*)>0.96n$ and since $|\calC|=0.9n$, there is a set of parties $\calI$ of size $|\calI|=0.05n$ such that
\begin{enumerate}
    \item $\calI\cap(\calC\cup\{i\})=\emptyset$.
    \item For every $j\in\calI$ we have that $x_j=x'_j=x^*$.
\end{enumerate}
We show that the outcome of the shuffle preserves differential privacy (over the randomness of the parties in $\calI$ and the randomness of the shuffle). Fix the randomness of all parties except for parties in $\calI$. Note that this fixes the messages that these parties submit to the shuffle, and suppose that party $P_i$ submits $x^*$ during the first execution and submits $\bot$ during the second execution (if party $P_i$ submits $\bot$ during both execution then the outcome of the shuffle is, again, identically distributed). Let $k$ denote the number of parties among the parties not in $\calI$ that submitted $x^*$ to the shuffle during the execution on $\vec{x}$. (So during the execution on $\vec{x'}$ exactly $k-1$ such parties submitted $x^*$.)

Let us denote by $Z$ the number of parties from $\calI$ that submits $x^*$ to the shuffle. Note that $Z\equiv\Bin\left(|\calI|,\frac{1}{2}\right)$.
By the Hoeffding bound, 
assuming that $n=\Omega(\ln(1/\delta))$ (large enough), with probability at least $1-\delta$ we have that
$\frac{9}{20}\cdot|\calI|\leq Z\leq\frac{11}{20}\cdot|\calI|$. In addition, by the properties of the Binomial distribution, for every $\frac{9}{20}\cdot|\calI|\leq z\leq\frac{11}{20}\cdot|\calI|$ we have that
$$
\frac{\Pr[Z=z]}{\Pr[Z=z+1]}
=\frac{2^{-|\calI|}\cdot\binom{|\calI|}{z}}
{2^{-|\calI|}\cdot\binom{|\calI|}{z+1}}
=\frac{z+1}{|\calI|-z}\in e^{\pm1}.
$$
Let us denote the number of occurrences of $x^*$ at the output of the shuffle during the two executions as $|S_2|$ and $|S'_2|$, respectively. So $|S_2|\equiv k+Z$ and $|S'_2|\equiv k-1+Z$.
Fix a set $F\subseteq[n]$ of possible values for $|S_2|$, and denote
$$
T=\{(f-k):f\in F\}
\qquad\text{and}\qquad
T'=\{(f-k+1):f\in F\}
$$
We have that
\begin{align*}
\Pr\left[|S_2|\in F\right]&=\Pr[Z\in T]
\leq\delta+\Pr\left[Z\in T\cap\left\{z: \frac{9|\calI|}{20}\leq z\leq\frac{11|\calI|}{20}\right\}\right]\\
&\leq\delta+e^1\cdot \Pr\left[Z-1\in T\cap\left\{z: \frac{9|\calI|}{20}\leq z\leq\frac{11|\calI|}{20}\right\}\right]\\
&\leq\delta+e^1\cdot \Pr\left[Z-1\in T\right]
=\delta+e^1\cdot \Pr\left[Z\in T'\right]
=\delta+e^1\cdot\Pr\left[\left|S'_2\right|\in F\right].
\end{align*}
A similar analysis shows that $\Pr\left[|S'_2|\in F\right]\leq\delta+e^1\cdot\Pr\left[|S_2|\in F\right]$. 
This shows that conditioned on an output of the public channel $s_1$ such that $D(\cdot)$ is accurate for $\vec{x}$, we have that
$$
\left(h,r_{\calC},\vec{x}_{\calC},s_1,S_2\right)
\approx_{(1,\delta)}
\left(h,r_{\calC},\vec{x}_{\calC},s_1,S'_2\right).
$$

So far, we have established that the outcome of the first round (that uses the public channel) preserves $(\eps,0)$-differential privacy, and, conditioned on the outcome of the first round being ``good'' (i.e., the resulting data structure $D$ is accurate) we have that the outcome of the second round (that uses the shuffle) preserves $(1,\delta)$-differential privacy. Intuitively, we now want to use composition theorems for differential privacy to show that the two rounds together satisfy differential privacy. A small technical issue that we need to handle, though, is that the privacy guarantees of the second round depend on the success of the first round. As the outcome of the first round is ``good'' with overwhelming probability, this technical issue can easily be resolved, as follows.

Consider two random variables $\tilde{S}_1$ and $\tilde{S'}_1$ that are identical to $S_1$ and $S'_1$, except that if the resulting data structure $D(\cdot)$ is {\em not} accurate, then the value is replaced such that the resulting data structure $D(\cdot)$ is exactly correct. Since the protocol for histograms fails with probability at most $\delta$, we have that
$$
\left(h,r_{\calC},\vec{x}_{\calC},\tilde{S}_1\right)
\approx_{(0,\delta)}
\left(h,r_{\calC},\vec{x}_{\calC},S_1\right)
\approx_{(\eps,\delta)}
\left(h,r_{\calC},\vec{x}_{\calC},S'_1\right)
\approx_{(0,\delta)}
\left(h,r_{\calC},\vec{x}_{\calC},\tilde{S'}_1\right).
$$
In words, consider an imaginary protocol in which the outcome distribution of the first round during the two executions is replaced by $\tilde{S}_1$ and $\tilde{S'}_1$, respectively.
The statistical distance between the outcome distribution of this imaginary protocol and the original protocol is at most $\delta$. In addition, for every possible fixture of the outcome of the first (imaginary) round we have the second round preserves differential privacy. Therefore, composition theorems for differential privacy show that the two rounds together satisfy differential privacy. Formally,
\begin{align*}
\View^\Pi_\calC(\vec{x})&=\left(h,r_{\calC},\vec{x}_{\calC},S_1,S_2\right)
\approx_{(0,\delta)}
\left(h,r_{\calC},\vec{x}_{\calC},\tilde{S}_1,S_2\right)\\
&\approx_{(1+\eps,\delta)}
\left(h,r_{\calC},\vec{x}_{\calC},\tilde{S'}_1,S'_2\right)
\approx_{(0,\delta)}
\left(h,r_{\calC},\vec{x}_{\calC},S'_1,S'_2\right)
=\View^\Pi_\calC(\vec{x}').
\end{align*}

\end{proof}

\begin{lemma}\label{lem:CommonTwoUtility}
\ProtCommonTwoRound solves the common element problem. 
\end{lemma}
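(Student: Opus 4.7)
The plan is to show that when all inputs agree, the protocol outputs the common element with probability at least $3/4$. Fix an input vector $\vec{x}$ with $x_1 = x_2 = \cdots = x_n = x$ for some $x \in \calX$, so that $y_i = h(x)$ for every $i \in [n]$ regardless of the choice of $h$.

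First I would invoke the utility guarantee of the local-model histogram protocol used in \stepref{step:CommonTwoRoundsLDP} (e.g., \cref{thm:LDPhistograms}): for $n$ large enough the resulting data structure $D$ satisfies $|D(y) - |\{i : y_i = y\}|| \leq 0.01 n$ simultaneously for all $y$, except with probability at most $\delta$. In particular, in the all-equal case, $D(h(x)) \geq n - 0.01 n = 0.99 n > \tfrac{98 n}{100}$, so \stepref{step:CommonTwoRoundsTerminate} does not fire and the maximizer $y^*$ equals $h(x)$ (it has the strictly largest estimated count, since the estimate for every other value is at most $0.01n$).

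Next, conditioned on this good event, every party $P_i$ has $y_i = y^*$, and thus independently sends either $m_i = x$ or $m_i = \bot$, each with probability $1/2$. The only way the analyzer in \stepref{step:CommonTwoRoundsOutput} fails to output $x$ is if all $n$ messages equal $\bot$, which happens with probability exactly $2^{-n}$; otherwise the most frequent non-$\bot$ element of $s$ is $x$, since every non-$\bot$ message equals $x$.

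Combining the two bad events via a union bound, the analyzer outputs $x$ except with probability at most $\delta + 2^{-n}$, which is well below $1/4$ for the parameter regime we consider ($\delta$ small and $n$ sufficiently large). The only step that could cause trouble is the accuracy of the histogram subroutine, but this is exactly its advertised guarantee; no additional argument about the hash function is needed in the utility analysis, because when all inputs are equal, collisions among distinct inputs are irrelevant.
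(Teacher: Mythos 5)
Your proposal is correct and follows essentially the same argument as the paper: condition on the histogram subroutine's accuracy (failure probability at most $\delta$), conclude $y^*=h(x)$ so the check in \stepref{step:CommonTwoRoundsTerminate} passes, and then note the analyzer can only fail if all $n$ parties send $\bot$, an event of probability $2^{-n}$. Your added details (the explicit $0.99n>0.98n$ threshold check and the remark that hash collisions are irrelevant for utility) are fine elaborations of the same proof.
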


\begin{proof}
Fix an input vector $\vec{x}=(x_1,\dots,x_n)\in\calX^n$ such that for every $i$ we have $x_i=x$.
By the utility guarantees of the locally-private protocol for histograms, with probability at least $1-\delta$ it holds that all of the estimates given by $D(\cdot)$ are accurate to within $\pm0.01n$. In that case, we have that $y^*$ (defined in \stepref{step:CommonTwoRoundsTerminate}) satisfies $y^*=h(x)$. Thus, every message submitted to the shuffle in the second round is equal to $x$ with probability 1/2, and otherwise equal to $\bot$. Therefore, the analyzer fails to output $x$ in \stepref{step:CommonTwoRoundsOutput} only if all of the parties submitted $\bot$ to the shuffle. This happens with probability at most $2^{-n}$. Overall, with probability at least $(1-\delta-2^{-n})$ the analyzer outputs $x$.
\end{proof}

\cref{thm:commontworounds} now follows by combining \cref{lem:CommonTwoPrivacy} and \cref{lem:CommonTwoUtility}.

\section{Possibility and Impossibility for the Nested Common Element Problem}
\label{sec:possibilityImpossibility}

In this section we define a nested version of the common element problem of \cref{def:CommonElementProblem}.
This problem has a parameter $0< \alpha <1$. We show that this problem cannot be solved in the shuffle model in one round with differential privacy against coalitions of size $\alpha n$
(regardless of the number of messages each party can send).
In contrast, we show that it can be solved with differential privacy in one round against coalitions of size $c n$ for any constant $c < \min\set{\alpha,1-\alpha}$ and in two rounds against coalitions of size $c n$ for any constant $c < 1$. The impossibility result for one round and the two round protocol imply  a strong separation between what can be solved  in one round  and in two rounds.

\begin{definition}[The nested common element problem with parameter $\alpha$]
Let $0 < \alpha < 1$.
Consider $n$ parties $P_1,\dots,P_n$ and an analyzer $P_0$ (as in \cref{rem:model}). The input of each party in $P_1,\dots,P_{\floor{\alpha n}}$ is an element $x_i \in \calX$ and the input of each party  $P_{\floor{\alpha n}+1},\dots,P_{n}$ is a vector $\vec{y_i}$ of $|\calX|$ elements from some finite domain $\calY$. The analyzer $P_0$ has no input. 
If all inputs of $P_1,\dots,P_{\floor{\alpha n}}$ are equal (i.e., $x_1=x_2=\cdots=x_{\floor{\alpha n}}$) and 
the $x_1$-th coordinate in all inputs of $P_{\floor{\alpha n}+1},\dots,P_{n}$ are equal (i.e., $\vec{y_{\floor{\alpha n}+1}}[x_1]=\vec{y_{\floor{\alpha n}+2}}[x_1]=\cdots=\vec{y_{n}}[x_1]$), then the analyzer $P_0$ must output $\vec{y_{\floor{\alpha n}+1}}[x_1]$ with probability at least $3/4$. The output is not restricted otherwise.
\end{definition}

\begin{remark}
When $|\calX|=\poly(n)$ and $|\calY|$ is at most exponential in $n$, then the length of the inputs of all parties is polynomial in $n$. Our impossibility result for the nested common element problem holds in this regime (specifically, when $|\calX| =\tilde{\Omega}(n^2)$ and  $|\calY|=2$). Our protocols are correct and private regardless of the size of $\calX$ and $\calY$.
\end{remark}

In this section, we prove the following three theorems. 

\begin{theorem}
\label{thm:NestedImpossibility}
Let $|\calX| = \tilde{\Omega}(n^2) $. There is no one-round $(1,o(1/n))$-differentially private protocol in the shuffle model against coalition of size 
$\floor{\alpha n}$ for the nested common element problem with parameter $\alpha$ (regardless of the number of messages each party can send).
\end{theorem}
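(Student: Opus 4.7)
The plan is to reduce the problem, by a simulation argument, to the curator-model task of \emph{privately recovering unanimous coordinates} in a collection of binary vectors, and then invoke the fingerprinting-code lower bound of Bun, Ullman, and Vadhan~\cite{BunUV18}. Let $n_1 = \lfloor \alpha n \rfloor$, let $n_2 = n - n_1 = \Theta(n)$, and suppose for contradiction that such a one-round shuffle protocol $\Pi$ exists with $|\calX| = \tilde{\Omega}(n^2) = \tilde{\Omega}(n_2^2)$ and $\calY = \{0,1\}$.

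I would design a curator-model algorithm $\calA$ that takes as input $n_2$ vectors $\vec{y}_{n_1+1}, \ldots, \vec{y}_n \in \{0,1\}^{|\calX|}$. Fix an arbitrary default $x_0 \in \calX$, and let $\calC = \{P_1,\ldots,P_{n_1}\}$ be the coalition of all first-type parties. Algorithm $\calA$ samples $\crs$ and randomness for all $n$ parties, simulates $\Pi$ with inputs $(x_0,\ldots,x_0,\vec{y}_{n_1+1},\ldots,\vec{y}_n)$, and captures the view of $\calC$, which includes $\crs$ and the shuffle output $S$. Since $\calA$ knows the randomness and inputs of the parties in $\calC$, it can subtract the $\calC$-messages from $S$ to isolate the multiset $M_2$ of messages contributed by the second-type parties. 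Then, for each $x \in \calX$, algorithm $\calA$ samples fresh randomness to form virtual $\calC$-messages $M_1^{(x)} = \{R_i(\crs,x;\cdot)\}_{i \leq n_1}$, forms a canonical shuffle $s_x$ of $M_1^{(x)} \cup M_2$, simulates the analyzer on $(\crs, s_x)$ to obtain an output $z_x$, and finally outputs the vector $\hat{z} = (z_x)_{x \in \calX}$.

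For differential privacy: the captured view of $\calC$ is $(1,o(1/n))$-close on any two neighboring inputs that differ in one of the $\vec{y}_j$, by the assumed privacy of $\Pi$ against coalitions of size $n_1$; since everything after that is a randomized post-processing, $\calA$ is itself $(1,o(1/n))$-differentially private under the natural neighboring relation on $(\vec{y}_{n_1+1},\ldots,\vec{y}_n)$. For correctness: fix $x \in \calX$ on which all $\vec{y}_j[x]$ equal some $z^* \in \{0,1\}$. Conditioning on $\crs$ and $M_2$, the virtual messages $M_1^{(x)}$ have exactly the same distribution as the actual $\calC$-messages in a real execution of $\Pi$ with inputs $(x,\ldots,x,\vec{y}_{n_1+1},\ldots,\vec{y}_n)$, so $z_x$ is distributed identically to the analyzer's output in such a real execution. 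By the correctness of $\Pi$, this yields $\Pr[z_x = z^*] \geq 3/4$.

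To conclude, I would apply the tracing technique of Bun--Ullman--Vadhan to $\calA$: for $(1,o(1/n_2))$-differentially private mechanisms on $n_2$ samples in $\{0,1\}^d$, the fingerprinting-code lower bound rules out outputting a vector that agrees with the (per-coordinate) unanimous values with probability $\geq 3/4$ once $d = \tilde{\Omega}(n_2^2)$; substituting $d = |\calX|$ yields the required contradiction. The main technical subtlety, which I would verify carefully, is that the Bun--Ullman--Vadhan argument --- nominally stated for approximate $1$-way marginal estimation --- still applies to the weaker ``per-coordinate unanimous recovery'' guarantee we have. This is done by feeding $\calA$ a standard fingerprinting-code dataset, restricting attention to the columns on which the code is unanimous, and running the usual tracing analysis on $\hat{z}$ restricted to those columns; as flagged in the introduction, this is exactly the ``technique applies'' remark made with respect to~\cite{BunUV18}.
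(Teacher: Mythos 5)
Your reduction is essentially the paper's: you rebuild the coalition view, strip out (or resample) the messages of the $x$-parties, and re-run the analyzer once per candidate $x\in\calX$; privacy is post-processing of the coalition's $(1,o(1/n))$-DP view and correctness gives, for each unanimous coordinate $x$, that $z_x$ is correct with probability at least $3/4$. Up to this point you have exactly the paper's intermediate algorithm. The gap is in the last step, which you flag as a ``subtlety to verify'' and then assert: that the Bun--Ullman--Vadhan tracing argument directly rules out a $(1,o(1/n))$-DP mechanism whose only guarantee is \emph{per-coordinate} success probability $3/4$. The fingerprinting/tracing machinery needs the output to satisfy the (error-robust) marking condition, i.e.\ to agree with all but at most a $\beta$ fraction of the unanimous columns, with non-negligible probability, where $\beta$ is the code's robustness threshold. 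Your guarantee only bounds the \emph{expected} fraction of violated unanimous columns by $1/4$, and in this reduction the per-column error events are genuinely correlated (all columns share the same shuffled messages of the $\vec{y}$-parties and the same $\crs$), so no concentration is available: the realized violation fraction could be $\approx 1/4$ with probability one. Hence your argument goes through only if one has a robust fingerprinting code of length $\tilde{O}(n^2)$ tolerating an error fraction strictly larger than $1/4$ (plus a Markov step and a constant-probability tracing-to-DP contradiction), none of which you establish; the robustness constants in the error-robust codes of~\cite{BunUV18} are not quoted, and the lower bound the paper actually invokes (\cref{thm:VectorCommonElement}) is stated for algorithms that recover \emph{all} unanimous coordinates simultaneously with probability $1-o(1/n)$, which your unamplified mechanism does not satisfy.

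This is precisely why the paper inserts an amplification stage that you omit: it runs the basic reduction $O(\log|\calX|)$ times on disjointly subsampled sets of $\vec{y}$-parties (subsampling to keep the privacy budget at $(1,o(1/n))$ after composition), takes a per-coordinate majority, and only then—having boosted each coordinate to error $1/|\calX|^2$ and union-bounded to get all unanimous coordinates right with probability $1-o(1/n)$—invokes the fingerprinting-based sample-complexity bound as a black box. To repair your proposal you should either add this repetition-plus-subsampling step, or actually carry out the robust-fingerprinting analysis with explicit constants showing that expected violation fraction $1/4$ (with arbitrary cross-column correlations) suffices; as written, the concluding step does not follow from the cited results.
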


\begin{theorem}
\label{thm:NestedTwoRound}
For every  $0 < c < 1$, $\epsilon,\delta\in [0,1]$, and $n \geq \frac{200}{(1-c)n}\ln \frac{4}{\delta}$ there exists a two-round $(\epsilon,\delta)$-differentially private protocol against coalitions of size $cn$ that with probability at least $1-1/2^{n-1}$ solves the nested common element problem with parameter $\alpha$. 
\end{theorem}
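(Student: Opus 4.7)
The plan is a natural two-round protocol that, in each round, executes a single-message randomized-response subprotocol for the basic common element problem, in the spirit of the second round of \ProtCommonTwoRound. In round 1 (via the shuffle), each party $P_i$ with $i\leq\lfloor\alpha n\rfloor$ independently sends $x_i$ with probability $1/2$ and $\bot$ otherwise, and every party reads the most frequent non-$\bot$ element $x^*$ from the shuffle output. In round 2 (via the shuffle), each party $P_j$ with $j>\lfloor\alpha n\rfloor$ independently sends $\vec{y}_j[x^*]$ with probability $1/2$ (else $\bot$), and the analyzer outputs the most frequent non-$\bot$ element of the round-2 shuffle.

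Correctness is direct: when the inputs satisfy the problem's promise, every honest party in the relevant group would submit the same value unless replaced by $\bot$, so a round fails only if all participating parties submit $\bot$. A union bound over the two rounds combined with the lower bound on $n$ yields overall success probability at least $1-1/2^{n-1}$.

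For privacy I would treat each round as an $(\epsilon/2,\delta/2)$-DP mechanism and combine them by sequential composition. Fix a coalition $\mathcal{C}$ of size $cn$ and a party $P_i\notin\mathcal{C}$ in group 1. The round-1 view is $(\epsilon/2,\delta/2)$-DP by the same Hoeffding-based argument used in \cref{lem:CommonTwoPrivacy}: the honest group-1 parties that share $P_i$'s value supply enough Bernoulli noise to mask $P_i$'s single-bit contribution, and the hypothesis on $n$ is exactly what calibrates this bound. The round-2 distribution depends on $P_i$ only through $x^*$, so round 2 is DP w.r.t.\ $P_i$ by post-processing, and sequential composition of the two rounds gives $(\epsilon,\delta)$-DP overall. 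The symmetric argument handles $P_i$ in group 2.

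The main obstacle is the unbalanced distribution of $\mathcal{C}$ across the two groups: when $c>\alpha$ the coalition can cover all of group 1 and the Hoeffding argument no longer provides enough honest parties in that round. In this regime, however, there is no $P_i\in\text{group }1\setminus\mathcal{C}$, so any $i$-neighboring inputs differ only in group 2, round 1's output distribution is unchanged for the two inputs, and only round 2's privacy matters. A symmetric observation handles $c>1-\alpha$. Once this boundary case analysis is carried out and the constants in the Hoeffding bound are tracked, the argument mirrors that of \cref{lem:CommonTwoPrivacy} and delivers the claimed privacy guarantee.
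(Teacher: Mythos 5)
Your protocol is not the paper's, and it has a genuine privacy gap. The paper's proof is a direct reduction to the Balcer--Cheu histogram protocol (\cref{thm:BC20}): run that protocol once to solve the $\alpha$-common element problem among $P_1,\dots,P_{\floor{\alpha n}}$, obtaining $x_0$, and once more on the values $\vec{y_i}[x_0]$ of the remaining parties. The crucial feature, which the paper states explicitly when introducing the $\alpha$-common element problem, is that in that protocol \emph{all} $n$ parties contribute noise (the view of a coalition of size $cn$ is essentially the analyzer's view in a protocol with $(1-c)n$ parties), so privacy holds against coalitions of size $cn$ for every $c<1$, no matter how the coalition is split between the two groups; this is exactly why the paper does \emph{not} use a \ProtCommonPrelude/\ProtCommonTwoRound-style mechanism, which it notes is only private for $c<\alpha$.

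Your randomized-response rounds fail on both counts. First, you dropped the histogram/threshold check that gates the second round of \ProtCommonTwoRound. Differential privacy must hold on worst-case neighboring inputs, not only under the promise: if $P_i\notin\mathcal{C}$ holds a value that no other group-1 party holds, your round~1 places $x_i$ in the shuffle output with probability $1/2$, so the views on $i$-neighboring inputs are at statistical distance about $1/2$ and cannot be $(\epsilon,\delta)$-close for small $\delta$. Second, even with the check, the Hoeffding-based masking argument of \cref{lem:CommonTwoPrivacy} needs a constant fraction (at least $\Omega(\log(1/\delta))$) of \emph{honest group-1 parties sharing $P_i$'s value}; since $c$ can exceed $\alpha$ and approach $1$, the coalition can leave exactly one honest party in group 1, and then there is no masking noise at all for that party. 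Your boundary-case analysis only treats the case where the coalition covers all of group 1 (so there is nobody to protect); the problematic case is when a single honest party remains, and no variant of within-group randomized response can protect it --- the noise must come from the input-less parties as well, which is what \cref{thm:BC20} provides. (A smaller issue: your protocol's failure probability is about $2^{-\floor{\alpha n}}+2^{-(n-\floor{\alpha n})}$, which does not meet the claimed $1-1/2^{n-1}$ bound obtained from two invocations of \cref{thm:BC20}.)
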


\begin{theorem}
\label{thm:NestedOneRound}
For every constants $c,\alpha$ such that $0 < c < \min\set{\alpha,1-\alpha} <1$, there exists a constant $\epsilon_0$ such that  there exits a one-round $(\epsilon_0,\delta)$-differentially private protocol against coalitions of size $cn$ that with probability at least $3/4$ solves the nested common element problem with parameter $\alpha$, where $\delta=2^{-O(\min\set{\alpha,1-\alpha}-c)n)}$ and
$n\geq 6 \cdot \max\set{1/\alpha,1/(1-\alpha)}$. 
\end{theorem}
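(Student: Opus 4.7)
The goal is to exploit the anonymity of the shuffle together with the fact that when $c < \min\set{\alpha, 1-\alpha}$ there are $\Omega(n)$ honest parties in each of the two groups. The protocol I would try sends two ``formats'' of messages through a single shuffle. Each first-group party $P_i$ sends its input $x_i$ tagged as first-group-format, plus $D_1$ cover messages tagged as second-group-format (each a uniformly random element of $\calX \times \calY$). Each second-group party $P_i$ sends, for every $x \in \calX$, the pair $(x, \vec{y_i}[x])$ tagged as second-group-format, plus $D_2$ cover messages tagged as first-group-format (each uniform over $\calX$). The analyzer extracts $x^*$ as the mode of the first-group-format messages, and then $y^*$ as the mode of the second coordinates among second-group-format messages whose first coordinate equals $x^*$.

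\medskip\noindent\textit{Correctness.} In the ``good case'' where all first-group inputs equal $x^*$ and all $\vec{y_i}[x^*]$ equal $y^*$, there are $\floor{\alpha n}$ real copies of $x^*$ among first-group-format messages and $(1-\alpha)n$ real copies of $(x^*, y^*)$ among second-group-format messages. A Chernoff argument shows that these linearly-many real contributions dominate the Binomial-distributed cover noise, yielding failure probability at most $2^{-\Omega(n)}$ for appropriate choices of $D_1, D_2$.

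\medskip\noindent\textit{Privacy.} I would split into two cases based on the group of the honest party $P_i \notin \calC$ whose input is being changed. Since $|\calC| \leq cn$, at least $(\alpha-c)n$ first-group parties and at least $(1-\alpha-c)n$ second-group parties are honest. If $P_i$ is in the first group, then substituting $x_i$ for $x_i'$ alters the first-group-format histogram at only two bins by $\pm 1$; this change is hidden by the independent Binomial noise of the cover messages from honest second-group parties, whose per-bin mean is $\Omega((1-\alpha-c)\, n\, D_2 / |\calX|)$. A standard Binomial-noise DP analysis combined with a Chernoff tail bound yields $(\epsilon_0, \delta)$-DP with $\delta = 2^{-\Omega((1-\alpha-c)n)}$. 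The analogous argument for second-group inputs is more delicate, since changing $\vec{y_i}$ affects up to $|\calX|$ cells of the second-group-format histogram simultaneously; the plan is to apply advanced composition across these independently-noised cells, charging each cell $\tilde{O}(\epsilon_0/\sqrt{|\calX|})$ of the overall privacy budget, and using that honest first-group parties inject cover noise of mean $\Omega((\alpha-c)\, n\, D_1 / (|\calX| |\calY|))$ per cell.

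\medskip\noindent\textit{Main obstacle.} The principal difficulty is the parameter tuning in the second-group case: the cover counts $D_1, D_2$ must simultaneously be large enough to absorb the $\sqrt{|\calX|}$ loss incurred by advanced composition over the affected cells, yet small enough that the Binomial noise does not drown out the real signal needed for correctness at the cell $(x^*, y^*)$. The exponent in $\delta = 2^{-\Omega((\min\set{\alpha, 1-\alpha} - c) n)}$ stems from the Chernoff deviation of the Binomial cover noise, whose effective sample size is governed by the number of honest parties of the ``opposite'' group---namely $(1-\alpha-c)n$ for first-group input changes and $(\alpha-c)n$ for second-group input changes, both bounded below by $(\min\set{\alpha, 1-\alpha} - c)\, n$.
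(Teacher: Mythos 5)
There is a genuine gap, and it sits exactly where you flag your ``main obstacle'': the privacy case where the changed input is a second-group vector $\vec{y_i}$ cannot be made to work in your design, not merely tuned. Because each second-group party sends $(x,\vec{y_i}[x])$ in the clear (only shuffled) for \emph{every} $x\in\calX$, changing $\vec{y_i}$ shifts up to $|\calX|$ cells of the revealed histogram by $\pm1$, and the only thing hiding these shifts is independent per-cell Binomial cover noise. For a $\pm1$ change in a single cell to be $(\epsilon_{\rm cell},\delta_{\rm cell})$-private, the cover noise in that cell must have mean $\Omega(\log(1/\delta_{\rm cell})/\epsilon_{\rm cell}^2)$, and composing over the $|\calX|$ affected cells (advanced composition gives $\epsilon_{\rm cell}\approx\epsilon_0/\sqrt{|\calX|\log(1/\delta)}$) forces a per-cell mean of order $|\calX|\cdot\log^2(1/\delta)$, hence per-cell standard deviation $\Omega\bigl(\sqrt{|\calX|}\cdot\log(1/\delta)\bigr)$. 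With $\delta=2^{-\Omega(n)}$ this is $\Omega(\sqrt{|\calX|}\,n)$, while the correctness signal at the single cell $(x^*,y^*)$ is only $\Theta\left((1-\alpha)n\right)$; even with $\delta$ relaxed to inverse-polynomial, correctness requires roughly $n\gtrsim|\calX|\log|\calY|$. The theorem (and the separation it is meant to witness) must hold with no upper bound on $|\calX|$ -- in particular in the regime $|\calX|=\tilde{\Omega}(n^2)$ from \cref{thm:NestedImpossibility} -- so in that regime your protocol cannot be simultaneously correct and private: the tension you describe is not a parameter-tuning issue but a structural one, caused by revealing noisy per-cell counts whose sensitivity to one party grows with $|\calX|$.

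The paper avoids this by never exposing per-cell counts. In \cref{prot:NestedCommon} each party encodes its input as a vector over a large group $G$ (random garbage in all ``wrong'' cells, zero in its designated cells, with small probability of self-noising and constant probability of not participating), and the parties run the one-round secure addition protocol of \cref{thm:IKOS} through the shuffle, so the view contains only the \emph{sum} vector. Privacy for a change of $\vec{y_i}$ then reduces to a single column: any honest participating $x$-party already randomizes every column other than its own $x_{i'}$, so the analysis collapses to the common-element subprotocol over $G^{|\calY|}$ (\cref{lem:ProtCommonPrelude}), and $\delta=2^{-O((\min\set{\alpha,1-\alpha}-c)n)}$ accounts for the exponentially unlikely event that no honest party in one of the two groups participates (plus the statistical error of the addition protocol), not for Chernoff deviations of cover noise. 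If you want to rescue your approach, you would effectively need to mask all columns except the common one before anything is revealed -- which is precisely the additive-masking-plus-secure-aggregation route the paper takes.
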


\subsection{An impossibility result for private one-round protocols for the nested common element problem}

\label{sec:nestedImpossibility}

We next show that the nested common element problem with parameter $\alpha$ cannot be solved privately against coalitions of size $\alpha n$ when $\calX$ is large enough, namely, when $|\calX|=\tilde{\Omega}(n^2)$.
The proof of the impossibility result is done by using an impossibility result to the vector common element problem (in the centralized model) defined below.
\begin{definition}[The vector common element problem] 
The input of  the problem is a database containing  $n$ vectors $(\vec{y_1},\ldots,\vec{y_n})\in (\set{0,1}^d)^n$. 
For a given set of vectors $\vec{y_1},\dots,\vec{y_n}$, define for every $b\in \set{0,1}$
$$I_b=\set{j: \vec{y_1}[j]=\cdots = \vec{y_n}[j]=b}.$$ 
To solve the the vector common element problem,  
an analyzer  must output with probability 
at least $1-o(1/n)$ sets $J_0$ and $J_1$ such that $I_0\subseteq J_0$,  $I_1\subseteq J_1$, and $J_0\cap J_1=\emptyset$. 
\end{definition}
In words, the task in the vector common element problem is to identify the coordinates in which the inputs vectors agree, that is, for each coordinate if all the vectors agree on the value of the coordinate then the algorithm should return this coordinate and the common value; if the vectors do not agree on this coordinate then the algorithm can say that this is either a zero-coordinate, a one-coordinate, or none of the above.

The following theorem is implied by the techniques of~\cite{BunUV18} (i.e., the reduction to fingerprinting codes).
\begin{theorem}[\cite{BunUV18}]
\label{thm:VectorCommonElement}
For every $d \in \NN$, any $(1,o(1/n))$-differentially private algorithm in the centralized model  for the vector common element problem with vectors of length $d$ has sample complexity 
$\tilde{\Omega}(\sqrt{d})$.
\end{theorem}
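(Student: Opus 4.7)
The plan is to reduce to the standard lower bound for differentially private $1$-way marginals via fingerprinting codes, following~\cite{BunUV18}. I would use Tardos' fingerprinting code: there exist codewords $c_1,\ldots,c_n\in\{0,1\}^d$ with $d=\tilde{O}(n^2)$ and a (randomized) tracing algorithm $T$ such that, for every coalition $S\subseteq[n]$ and every pirate $\pi\in\{0,1\}^d$ satisfying the marking condition (i.e., $\pi_j=b$ whenever all $c_{i,j}$ with $i\in S$ equal $b$), the tracer returns an element of $S$ with probability $1-o(1/n)$, while for every fixed user $i$ the probability that $T$ outputs $i$ when $c_i$ was not used to form the pirate is $o(1/n)$.

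Suppose towards contradiction that $A$ is a $(1,o(1/n))$-differentially private algorithm that solves the vector common element problem on $n$ vectors of length $d$, with $n$ too small for the bound to hold. Feed the codebook $(c_1,\ldots,c_n)$ as the input database to $A$. With high probability, $A$ outputs disjoint sets $(J_0,J_1)$ with $I_b\subseteq J_b$ for $b\in\{0,1\}$. Define a pirate $\pi\in\{0,1\}^d$ by $\pi_j=0$ for $j\in J_0$, $\pi_j=1$ for $j\in J_1$, and $\pi_j=0$ elsewhere. Since $I_b\subseteq J_b$, the vector $\pi$ satisfies the marking condition with respect to the full coalition $[n]$, so running $T$ on $\pi$ returns some $i^*\in[n]$ with probability $1-o(1/n)$.

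By pigeonhole, some fixed $i\in[n]$ satisfies $\Pr[i^*=i]=\Omega(1/n)$. On the other hand, $(1,o(1/n))$-differential privacy of $A$, propagated by post-processing through the construction of $\pi$ and the tracing $T$, implies that replacing $c_i$ in the input by an independent codeword $c'_i$ changes $\Pr[i^*=i]$ by at most a multiplicative factor of $e^1$ and an additive $o(1/n)$ term. Hence $i$ would still be output with probability $\Omega(1/n)$ even though $c_i$ was not used to generate $\pi$, contradicting the false-accusation guarantee of the fingerprinting code. This forces $n=\tilde{\Omega}(\sqrt{d})$.

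The main obstacle is the quantitative bookkeeping: tracking how the tracing error, the pigeonhole constant, the DP parameters, and the success probability of $A$ compose, so that the contradiction with the false-accusation bound is genuine. Once the framing above is in place each step is routine, and the heart of the argument is the well-established fingerprinting-to-DP reduction of~\cite{BunUV18}.
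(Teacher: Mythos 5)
The paper itself does not prove this theorem---it imports it from Bun--Ullman--Vadhan, remarking only that their fingerprinting-code reduction applies to this task---and your sketch is exactly that reduction, correctly executed: feeding the codebook to the assumed algorithm, noting that $I_b\subseteq J_b$ together with $J_0\cap J_1=\emptyset$ forces the marking condition, and then playing completeness against soundness via pigeonhole and the $(1,o(1/n))$-DP guarantee under a single-row change. One small technical remark: in the hybrid experiment it is cleaner to replace $c_i$ by a fixed dummy row rather than an independently resampled codeword, since a coalition cannot sample a fresh codeword without the code's secret state (e.g., the Tardos column biases), whereas the false-accusation guarantee requires no marking condition, so the fixed-row strategy is an admissible coalition attack and differential privacy still transfers the $\Omega(1/n)$ accusation probability, yielding the same contradiction.
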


We next prove our impossibility result, i.e., prove \cref{thm:NestedImpossibility}.

\begin{proof}[Proof of \cref{thm:NestedImpossibility}]
We show that if for $|\calX| = \tilde{\Omega}(n^2)$ there is an $n$-party protocol, denoted $\Pi$, in the shuffle model for the nested common element problem  with parameter $\alpha$ that is private against the coalition of parties holding the $x$-inputs, namely, $\calC=\set{P_1,\ldots,P_{\floor{\alpha n}}}$, then there is an algorithm in the centralized model for the vector common element problem with database of size $O(n^2 \log n)$ violating \cref{thm:VectorCommonElement}.

As a first step, consider the following algorithm $\calA_1$  for the vector common element problem  in the centralized model,
whose inputs are $\vec{y_{\floor{\alpha n}+1}},\ldots,\vec{y_n}$ (each vector of length $|\calX|$).
\begin{enumerate}
\item 
The analyzer chooses a public random string $\crs$.
\item
For each  $i \in \set{{\floor{\alpha n}+1},\ldots,n}$, the analyzer simulates party $P_i$  in protocol $\calP$  with the input  $\vec{y_i}$ and the public random string $\crs$, generating a vector of messages $\vec{m_i}$.
\item
The analyzer shuffles the messages in  $\vec{m_{\floor{\alpha n}+1}},\cdots,\vec{m_n}$, denote the output of the shuffle by $\vec{\tilde{m}}$.
\item
For every $x \in \calX$ do:
    \begin{enumerate}
    \item 
    For each  $i \in \set{{1},\ldots,{\floor{\alpha n}}}$, the analyzer simulates party $P_i$ in protocol $\calP$ with the input $x$ and the public random string $\crs$, generating a vector of messages $\vec{m_i}$.
    \item
    \label{step:calPanalyzer}
    The analyzer shuffles the messages in  $\vec{\tilde{m}},\vec{m_1},\ldots,\vec{m_{\floor{\alpha n}}}$,
    gives the shuffled messages to the analyzer of $\calP$, and gets an output $z_x$.
    \end{enumerate}
\item
The analyzer returns $I_b=\set{x:z_{x}=b}$ for $b \in \set{0,1}$.
\end{enumerate}

First we argue that $\calA_1$ is $(1,o(1/n))$-differentially private: The coalition $\calC$ sees the output of the shuffle in $\calP$ and can remove the messages it sent to the shuffle  in $\calP$,
therefore computing $\vec{\tilde{m}}$ from the view is a post-processing of an $(\epsilon,o(1/n))$-differentially private output.
Second, notice that for every $x\in \calX$,
the shuffled messages that the analyzer of $\calP$ gets in \stepref{step:calPanalyzer}
are distributed as in $\calP$, thus, if $\vec{y_{\floor{\alpha n}+1}}[x]=\cdots=\vec{y_{n}}[x]=b$,
then $z_x=b$ with probability at least $3/4$ (however for $x\neq x'$ these events might be independent). 

The success probability of $\calA_1$ is not enough to violate \cref{thm:NestedImpossibility} and we repeat it $O(\log |\calX|)$ times. This is done in $\calA_2$, which preserves the privacy using sub-sampling:
\begin{enumerate}
\item \textbf{Inputs:} vectors $\vec{y_1},\ldots,\vec{y_{t}}$, where  $t=O(n\ln|\calX|)$.
\item For $\ell=1$ to $4\ln |\calX|$ do:
    \begin{enumerate}
    \item 
    \label{step:sample}
    Sample a set $T \subset [t]$ of size $\frac{t}{(3+\exp(1))4\ln |\calX|}=n$ and execute $\calA_1$ on the vectors $(\vec{y_i})_{i \in T}$ and get sets $J_0^\ell,J_1^\ell$.
    \end{enumerate}
\item For $b \in \set{0,1}$, let $J_b = \set{j : j\in J_b^\ell \text{ \rm for more than $4 \ln |\calX|$ indices } \ell }.$ 
\end{enumerate}
By \cref{clm:boostPrivacy} (i.e., sub-sampling) and since $\calA_1$ is $(1,o(\frac{1}{n}))$-differentially private, each execution of \stepref{step:sample} is $(\frac{1}{4\ln |\calX|},o(\frac{1}{n \ln |\calX|}))$-differentially private. By simple composition, algorithm $\calA_2$ is $(1,o(1/n))$-differentially private.

We next argue that with probability at least $1-o(1/n)$ algorithm $\calA_2$ outputs disjoint sets $J_0,J_1$ such that $I_0 \subseteq J_0$ and $I_1\subseteq J_1$. Fix $j$ such that $\vec{y_1}[j]=\cdots=\vec{y_t}[j]=b$ for some $b$. By the correctness of $\calA_1$, for every $\ell \in [4\ln |\calX|]$ it holds that $j \in J_b^\ell$ with probability at least $3/4$ and these events are independent. Thus, by the Hoeffding inequality, $j \in J_b^\ell$ for more than half of the values of $\ell$ with probability at least $1-1/|\calX|^2$. By the union bound, the probability that the algorithm errs for some coordinate for which all vectors $\vec{y_i}$ agree is at most $1/|\calX|=\tilde{O}(1/n^2)=o(1/n)$. 

To conclude, assuming that $\calP$ as above exits, we constructed a $(1,o(1/n))$-differentially private algorithm $\calA_2$ with database of size $O(n^2 \log n)$ and $d=|\calX|=\tilde{\Omega}(|\calX|^2)$,
contradicting \cref{thm:VectorCommonElement}.
\end{proof}

\subsection{Private protocols for the nested common element problem}
\label{sec:nestedPossibility}

\subsubsection{Prelude}
As a warm-up, we present \ProtCommonPrelude (described in \cref{prot:CommonPrelude}), a new  protocol for the common element problem of \cref{def:CommonElementProblem}, i.e., there are $n$ parties, each holding an element $x_i$. If $x_1=\cdots=x_n$, then the analyzer must output $x_1$. (There are more efficient protocols for this problem.) We then present \ProtNestedCommon (described in  \cref{prot:NestedCommon}), which generalizes \ProtCommonPrelude to the nested functionality and we use the proof of privacy of the former protocol to prove the privacy of the latter protocol.

\protocol{\ProtCommonPrelude}{A one-round protocol in the shuffle model against coalitions of size at most $cn$ for the common element problem (for some constant $c<1$).}{prot:CommonPrelude}{
\begin{enumerate}
\item
Let $G$ be any  additive group such that $|G| \geq 16 |\calX|$.  
\item
Each party $P_i$  prepares a vector $\vec{z_i}$ of length $|\calX|$ of elements from $G$ as follows:
\begin{enumerate}
\item 
\label{step:participate}
With probability $3/4$ (party $P_i$ participates):
\begin{enumerate}
    \item For every $x\neq x_i$ let $\vec{z_i}[x]$ be a random element in $G$ independently chosen with uniform distribution.
    \item 
    \label{step:noise}
    With probability $1/6n$ (sets location $x_i$ to noise):
    Let $\vec{z_i}[x_i]$ be a random element in $G$ independently chosen with uniform distribution.
    \item 
    \label{step:tru}
    With probability $1-\frac{1}{6n}$ (sets location $x_i$ to zero): Let $\vec{z_i}[x_i]=0$.
\end{enumerate}
\item
\label{step:dontparticipate}
With probability $1/4$ (party $P_i$ does not participate): 
For every $x \in \calX$ let $\vec{z_i}[x]=0$. 
\end{enumerate}
\item
The parties execute the $\delta'$-secure protocol of \cref{thm:IKOS} for addition over $G^{|\calX|}$ in the shuffle model where the input of $P_i$ is $\vec{z_i}$. Let $\vec{z}$ be the sum. 
\item
If there is a unique coordinate $x$ such that $\vec{z}[x]=0$, then the analyzer outputs $x$.
\end{enumerate}
}

\begin{lemma}
\label{lem:ProtCommonPrelude}
Let $c <1$ be a constant and $G$ be a group with at least $16|\calX|$ elements.
There exists a constant $\epsilon_0 =O(\ln \frac{1}{1-c})$ such that for every $n>2$ \ProtCommonPrelude is an $(\epsilon_0,\delta)$-differentially private protocol against coalitions of size $cn$ that solves the common element problem with probability at least $3/4$, where $\delta=O(e^{-(1-c)n/8})$. 
\end{lemma}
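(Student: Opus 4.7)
The plan is to argue correctness and privacy separately.

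For \textbf{correctness}, assume $x_1=\cdots=x_n=x$. The analyzer outputs $x$ exactly when $\vec z[x]=0$ and $\vec z[x']\neq 0$ for every $x'\neq x$. A direct union bound shows $\Pr[\vec z[x]\neq 0]\leq n\cdot\frac{1}{8n}=\frac{1}{8}$. Conditioned on at least one party participating (an event that fails with probability $(1/4)^n\leq 1/16$ for $n\geq 2$), each $\vec z[x']$ with $x'\neq x$ is uniform in $G$; a union bound over the $|\calX|-1$ such coordinates and $|G|\geq 16|\calX|$ shows that the probability of a spurious zero is at most $1/16$. Combining yields success probability at least $3/4$.

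For \textbf{privacy}, by \cref{lem:composition} and the $\delta'$-security of the summation primitive it suffices to show that $\vec z=\sum_i\vec z_i$ is $(\epsilon_0,O(e^{-(1-c)n/8}))$-DP in the input $x_j$ of each honest party $P_j$. Since the coalition $\calC$ knows its own contributions, this reduces to analyzing $\vec u_H:=\vec z_j+\vec u'$, where $\vec u':=\sum_{i\in H\setminus\{j\}}\vec z_i$ and $H=[n]\setminus\calC$. I would fix neighboring inputs $x_j=a$ vs $x_j=b$; since the marginal of $\vec z_j[c]$ is independent of $x_j$ at every coordinate $c\notin\{a,b\}$, the privacy analysis focuses on the joint distribution at coordinates $a,b$. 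I will argue that the worst case is when every party in $H':=H\setminus\{j\}$ holds input $a$ (other configurations make $\vec u'[a]$ ``more uniform'' and yield strictly weaker DP loss).

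In this worst case, with $N\sim\mathrm{Bin}(|H'|,3/4)$ the number of participating $H'$-parties, every participant contributes uniform noise to $\vec u'[b]$ (since $a\neq b$), so $\vec u_H[b]$ is uniform independently of $x_j$ on the event $N\geq 1$; the complementary rare event $\{N=0\}$ has probability $(1/4)^{|H'|}=O(e^{-(1-c)n/8})$ and goes into~$\delta$. At coordinate~$a$, a participating $H'$-party contributes uniform noise only via the rare noise event (probability $1/(6n)$), so setting $\alpha:=(1-1/(8n))^{|H'|}$ (the probability that no $H'$-party adds noise at $a$), the variable $\vec u'[a]$ equals $0$ with probability approximately $\alpha$ and is uniform on $G$ otherwise. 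Since $|H'|\geq(1-c)n-1$ we have $1-\alpha=\Omega(1-c)$. Convolving with the marginal of $\vec z_j(x_j)[a]$ and comparing the two distributions yields a DP ratio on sets $S$ with $0\notin S$ of order
\[
\frac{\Pr[\vec u_H[a]\in S\mid x_j=b]}{\Pr[\vec u_H[a]\in S\mid x_j=a]}=\frac{1-\alpha/4}{(1-\alpha)+\alpha/(8n)}\leq\frac{1-\alpha/4}{1-\alpha}=O\!\left(\frac{1}{1-c}\right),
\]
while the ratio at the spike $S=\{0\}$ is at most $4$. Because $\vec u_H[b]$ is uniform (and independent of $\vec u_H[a]$) on the good event, these bounds extend to the joint distribution, giving $\epsilon_0=O(\ln\frac{1}{1-c})$ and $\delta=O(e^{-(1-c)n/8})$ after absorbing $\{N=0\}$ and the composition slack $(e^{\epsilon_0}+1)\delta'$ (by choosing $\delta'$ sufficiently small in the underlying summation primitive).

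\textbf{Main obstacle.} The main difficulty is the joint DP analysis at coordinates $(a,b)$: the contributions of each honest party are correlated through its participation bit, and one must identify the adversarial input configuration (many other honest parties sharing a common value) that maximizes the DP loss and carefully control the convolution of $\vec u'$ with the structured distribution of $\vec z_j(x_j)$. Once this worst case is pinned down and the mixture form of $\vec u'[a]$ (a spike at $0$ of mass $\alpha$ plus a uniform tail of mass $1-\alpha$) is isolated, the remaining calculation reduces to the elementary ratio computation above.
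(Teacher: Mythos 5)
Your proposal is correct in outline and follows essentially the same route as the paper: reduce to an ideal addition functionality via \cref{lem:composition} and the $\delta'$-security of the summation protocol, observe that (given enough honest participation) the honest parties' sum is a mixture of ``uniform vector'' and ``spike at the common coordinate, uniform elsewhere,'' and bound the two mixture weights under the two hypotheses, getting a ratio of $O(1)$ at the spike and $O(\frac{1}{1-c})$ on the uniform part, with the low-participation event absorbed into $\delta=O(e^{-(1-c)n/8})$. The differences are in how the two arguments handle arbitrary inputs of the other honest parties and which bad event they charge to $\delta$. The paper conditions on the set of participating honest parties (charging ``fewer than half participate'' to $\delta$ via Hoeffding) and then disposes of general configurations by the simple observation that if the participants do not all hold the same value, the honest sum is uniform regardless of $P_1$'s input; only the agreeing case needs the mixture computation. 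You instead fold participation into the per-party noise probability $1/(8n)$ (so only the much rarer event that no honest party participates goes to $\delta$, which is fine) and assert that the all-$a$ configuration is the worst case, leaving the reduction from general configurations unproven. That assertion is true, but as stated it is the one step you must actually carry out: differential privacy has to be verified for every fixed configuration of the other honest inputs, and the mixture can then have spikes at both $a$ and $b$ (or at some third value), so ``more uniform'' needs the short computation showing that once even one honest party holds a value different from $a$, the mass of the event ``all participants hold $a$ and none adds noise'' drops below $1/4$ while the uniform mass is $\Omega(1)$, making the ratio $O(1)$; the paper's conditioning-on-participants trick avoids this case analysis altogether. Also, your remark that coordinates $c\notin\{a,b\}$ can be ignored because their marginals under $R_j$ are input-independent is not by itself sufficient (the coordinates of $\vec z_j$ are correlated through the participation bit); the correct justification, which you in effect use for coordinate $b$, is that on the event $N\geq1$ all coordinates other than $a$ are uniformized by the other honest parties' noise and are independent of coordinate $a$, under both hypotheses. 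With these two points filled in, your argument gives the stated $\epsilon_0=O(\ln\frac{1}{1-c})$ and $\delta=O(e^{-(1-c)n/8})$, matching the paper.
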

\begin{proof}
We first prove the correctness of the protocol; we only need to consider the case when $x_1=\cdots=x_n$. We say that if any party executes \stepref{step:noise}, then the protocol fails. By the union bound, the probability of this event is at most $1/8$. If this event does not occur, then $\vec{z}[x_1]=0$ and the protocol fails if there is a $x\neq x_1$ such that $\vec{z}[x]=0$. This can happen in two cases: (1) no party participates -- this occurs with probability $4^{-n}$, which is less than $1/16$ for $n \geq 2$. (2) the sum of all random elements in some coordinate is 0.
This occurs with probability at most $|\calX|/|G|$. As $G$ is a group with at least $16|\calX|$ elements, the last probability is less than $1/16$. All together the probability of failure is less than $1/4$.

We next provide the privacy analysis. 
We assume an ideal functionality for addition, that is, the parties in a coalition $\calC$ only see the sum $\vec{z}$ and $(\vec{z_i})_{P_i\in \calC}$. We use a statistically-secure protocol for addition in the shuffle model. By \cref{lem:composition}, this statistical security will add to the $\delta$ in the $(\epsilon,\delta)$-differential privacy.

Let $\calC$ be a coalition of size $cn$. W.l.o.g., assume that $\calC=\set{P_{(1-c)n+1},\ldots,P_n}$  and  consider two databases that differ on $x_1$.
Let $n'=(1-c)n$ and  $\vec{z'}=\sum_{i=1}^{n'} \vec{z_i}$. Note that the coalition $\calC$ can compute $\vec{z'}$, but has no information on  $(\vec{z_i})_{1 \leq i \leq n'}$.
We say that a party $P_i \in \notcalC$ does not participate in the protocol if it executes \stepref{step:dontparticipate}, otherwise we say that it participates. We start with two observations:
\begin{itemize}
    \item The probability that more than $1/2$ of the parties in $\notcalC$ do not participate is less than $e^{-n'/8}$ (by the Hoeffding bound). In this case privacy may fail, and this will fall under the $\delta$ in the definition of $(\epsilon,\delta)$-differential privacy.
    \item If the  parties in  $ \set{P_2,\dots,P_{n'}}$ that participate do not hold the same value $x$, then the output $\vec{z'}$ is a uniformly random vector, regardless of $P_1$’s action (and input), and privacy holds. 
\end{itemize}

In the rest of the proof we assume that the above two events do not hold and fix the set of parties $T \subseteq \set{P_2,\dots,P_{n'}}$ that participate (this set is of size at least $n'/2$). The only two possible cases for the output: (1) a random vector $\vec{z'}$ and (2) a vector $\vec{z'}$ in which $\vec{z'}[x]=0$ and all coordinates except for  $\vec{z'}[x]$ are random. We show that the probabilities of both possible cases for the output are bigger than some constant $0< p < 1$ when $x_1=x$ and when $x_1\neq x$.
The probability that at least one party
in $T$ executes \stepref{step:noise} is at least 
$$1-(1-1/6n)^{|T|} \geq 1-(1-1/6n)^{n'/2} \approx 1- e^{-(1-c)/12}\approx \frac{1-c}{12}$$  (regardless of the input of $P_1$); in this case    $\vec{z'}$ is a uniformly random vector.
Otherwise, if $P_1$ does not participate, the output is $\vec{z'}[x]=0$ (even if $x_1 \neq x$). 
The probability that no party in $T$ executes \stepref{step:noise} is 
$(1-1/6n)^{|T|} \geq (1-1/6n)^{n'}\approx e^{-(1-c)/6} \approx 1-  \frac{1-c}{6} \geq \frac{5}{6}$. Thus, the probability that no party in $T$ executes \stepref{step:noise} and $P_1$ does not participate is at least $5/6 \cdot 1/4=5/24$.
\end{proof}

\subsubsection{A private one-round protocol for the nested common element problem}
\label{sec:nestedPossibility1round}

We next present for every constant $c < \min\set{\alpha,1-\alpha}$, a  protocol, called \ProtNestedCommon, for the nested common element problem with parameter $\alpha$ that is private for coalitions of size $cn$. A possible idea to construct such protocol is to execute \ProtCommonPrelude for every coordinate $\vec{y_i}[x]$ of the vectors. The problem with this idea is that the analyzer will learn the values of $\vec{y_i}$ for all coordinates the vectors agree. By the proof of \cref{thm:NestedImpossibility} this is impossible (when $\calX$ is big). The solution is that each party $P_i$ that  holds an input $x_i$ adds noise for every $x\neq x_i$ (that is, with probability 1 it sends a random vector in the execution of \ProtCommonPrelude for coordinate $x$). Similar to \ProtCommonPrelude, with some small probability it has to add noise also to the  execution of \ProtCommonPrelude for coordinate $x_i$. In \cref{lem:NestedOneRound}, we  prove the correctness and privacy of \ProtNestedCommon, proving \cref{thm:NestedOneRound}.

\protocol{\ProtNestedCommon}{A one-round protocol in the shuffle model for the nested common element problem.}{prot:NestedCommon}{
\begin{enumerate}
\item
Let $G$ be any additive group such that  $|G| \geq 16|\calX||\calY|$. 
\item
Each party $P_i$  for $1 \leq i \leq \floor{\alpha n}$ prepares a vector $\vec{z_i}$ of length $|\calX| \cdot |\calY|$ of elements from $G$ as follows:
\begin{enumerate}
\item 
\label{step:x-participate}
With probability $3/4$ (party $P_i$ participates):
\begin{enumerate}
    \item For every $x\neq x_i$ and $y \in \calY$ let $\vec{z_i}[x,y]$ be a random element in $G$ independently chosen with uniform distribution.
    \item 
    \label{step:x-noise}
    With probability $1/6n$ (sets $x_i$-locations to noise):
    For every $y \in \calY$ let $\vec{z_i}[x_i,y]$ be a random element in $G$ independently chosen with uniform distribution.
    \item 
    \label{step:x-tru}
    With probability $1-1/6n$ (sets $x_i$-locations to zero): For every $y \in \calY$ let $\vec{z_i}[x_i,y]=0$.
\end{enumerate}
\item
\label{step:x-dontparticipate}
With probability $1/4$ (party $P_i$ does not participate): 
For every $x \in \calX$ let $\vec{z_i}[x]=0$. 
\end{enumerate}
\item
Each party $P_i$  for $\floor{\alpha n}+1 \leq i \leq n$ prepares a vector $\vec{z_i}$ of length $|\calX| \cdot |\calY|$ of elements from $G$ as follows:
\begin{enumerate}
\item 
\label{step:y-participate}
With probability $3/4$ (party $P_i$ participates):
\begin{enumerate}
    \item For every $x \in \calX$ and every $y\neq \vec{y_i}[x]$ let $\vec{z_i}[x,y]$ be a random element in $G$ independently chosen with uniform distribution.
    \item 
    \label{step:y-noise}
    With probability $1/6n$ (sets $y_i$-locations to noise):
    For every $x \in \calX$, let $\vec{z_i}[x,\vec{y_i}[x]]$ be a random element in $G$ independently chosen with uniform distribution.
    \item 
    \label{step:y-tru}
    With probability $1-1/6n$ (sets $y_i$-locations to zero): For every $x \in \calX$, let $\vec{z_i}[x,\vec{y_i}[x]]=0$.
\end{enumerate}
\item
\label{step:y-dontparticipate}
With probability $1/4$ (party $P_i$ does not participate): 
For every $x \in \calX$ let $\vec{z_i}[x]=0$. 
\end{enumerate}
\item
The parties execute the $\delta'$-secure protocol of \cref{thm:IKOS} for addition over $G^{|\calX|}$ in the shuffle model where the input of $P_i$ is $\vec{z_i}$. Let $\vec{z}$ be the sum. 
\item
If there is a unique coordinate $x$ such that $\vec{z}[x,y]=0$, then the analyzer outputs $y$.
\end{enumerate}
}

\begin{lemma}
\label{lem:NestedOneRound}
Let $G$ be a group with at least $16|\calX| \cdot |\calY|$ elements.
For every constants $c,\alpha$ such that $0 < c < \min\set{\alpha,1-\alpha} <1$, there exists a constant $\epsilon_0 > 1 $ such that for every $n\geq 6 \cdot \max\set{1/\alpha,1/(1-\alpha)}$ \ProtNestedCommon is a one-round $(\epsilon_0,\delta)$-differentially private protocol against coalitions of size $cn$ that with probability at least $3/4$ solves the nested common element problem with parameter $\alpha$, where $\delta=O\left(2^{(\min\set{\alpha,1-\alpha}-c)n}\right)$. 
\end{lemma}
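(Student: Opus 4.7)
The plan is to mirror the structure of the proof of \cref{lem:ProtCommonPrelude}, viewing the output vector $\vec{z}\in G^{|\calX|\cdot|\calY|}$ as a collection of $|\calX|$ slices $\vec{z}[x,\cdot]\in G^{|\calY|}$. As in that proof, I first assume the ideal addition functionality for the parties' sum and absorb the $\delta'$-statistical security of the underlying secure-addition subprotocol at the end via \cref{lem:composition}.

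For correctness, fix an input satisfying the premise: $x_1=\cdots=x_{\lfloor\alpha n\rfloor}=x^*$ and $\vec{y_j}[x^*]=y^*$ for every $y$-party $P_j$. At the target coordinate $(x^*,y^*)$, every party contributes $0$ unless it invokes its ``noise'' branch (probability $1/6n$ per party), so a union bound over the $n$ parties gives $\vec{z}[x^*,y^*]=0$ with probability at least $5/6$. For each of the remaining $|\calX||\calY|-1$ coordinates $(x,y)$, either an honest $x$-party with $x_j\neq x$ or an honest $y$-party with $\vec{y_j}[x]\neq y$ contributes an independent uniform element of $G$, so $\vec{z}[x,y]=0$ with probability only $1/|G|\leq 1/(16|\calX||\calY|)$. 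Union-bounding over the non-target coordinates and separately bounding the exponentially small probability that no honest party participates, the overall failure probability is at most $1/4$.

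For privacy, fix a coalition $\calC$ of size $cn$ and a party $P_i\notin\calC$ whose input changes. In Case (a), $P_i$ is an $x$-party and $c<\alpha$ leaves at least $(\alpha-c)n$ honest $x$-parties. The argument of \cref{lem:ProtCommonPrelude} then applies essentially verbatim to the slices $\vec{z}[x_i,\cdot]$ and $\vec{z}[x_i',\cdot]$: each honest $x$-party's contribution to such a slice is either the all-zero vector or a uniform vector in $G^{|\calY|}$, a Hoeffding bound ensures that at least half of the honest $x$-parties participate except with probability $2^{-\Omega((\alpha-c)n)}$, and a constant-factor comparison of the two resulting output ``shapes'' bounds $\epsilon_0$ by a constant. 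In Case (b), $P_i$ is a $y$-party and $c<1-\alpha$ leaves at least $(1-\alpha-c)n$ honest $y$-parties. Let $D=\set{x:\vec{y_i}[x]\neq\vec{y_i}'[x]}$. The key structural observation is that for any $x\in D$, if some honest $x$-party participates with $x_j\neq x$, then the entire slice $\vec{z}[x,\cdot]$ is uniform over $G^{|\calY|}$ and contributes no distinguishing information between the two scenarios. Since the honest $x$-parties share at most one common input value, there is at most one $x\in D$ on which the views can plausibly differ, and the problem reduces to analyzing a single ``dangerous'' slice. At that slice, the argument of \cref{lem:ProtCommonPrelude} again applies: the noise provided by the $(1-\alpha-c)n$ honest $y$-parties (and by $P_i$ itself) noising out with probability $1/6n$ makes the relevant coordinate uniform with constant probability, while with the complementary constant probability it takes a coalition-computable value, with the two events occurring across the two scenarios with probabilities differing by only a constant multiplicative factor. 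This yields $\epsilon_0=O(1)$ and a Hoeffding contribution $2^{-\Omega((1-\alpha-c)n)}$ to $\delta$. Combining the two cases gives $\delta=O(2^{-(\min\set{\alpha,1-\alpha}-c)n})$, plus the $(e^{\epsilon_0}+1)\delta'$ from \cref{lem:composition}, which is absorbed by choosing the secure-addition parameter appropriately.

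The main obstacle lies in Case (b): unlike in \cref{lem:ProtCommonPrelude}, the neighboring vectors $\vec{y_i},\vec{y_i}'$ can disagree on arbitrarily many coordinates, so I must argue that the per-slice noise suffices to mask the difference for every $x\in D$ simultaneously. The reduction to a single dangerous slice, justified by the fact that the honest $x$-parties concentrate on at most one input value (else every slice in $D$ is already randomized), is what makes the proof go through with the same constant-factor privacy bound as in the original common-element protocol.
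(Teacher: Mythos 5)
Your proposal follows essentially the same route as the paper's proof: correctness by the same union bounds, privacy by assuming ideal addition (absorbing the secure-addition error via \cref{lem:composition}), paying for the event that no honest $x$-party or no honest $y$-party participates in $\delta=2^{-\Omega((\min\set{\alpha,1-\alpha}-c)n)}$, and then splitting on whether the differing party is an $x$-party (reduce to \cref{lem:ProtCommonPrelude} over $G^{|\calY|}$) or a $y$-party (use the honest $x$-parties' common value to randomize all but one slice, then apply the \cref{lem:ProtCommonPrelude} argument on that single slice), which is exactly the paper's case analysis. The only nitpicks are cosmetic: in the correctness part the bad participation event should be ``no $x$-party participates or no $y$-party participates'' (not ``no party''), and in Case~(a) the slices also receive the $y$-parties' structured contributions, which are identically distributed under both scenarios and hence harmless, as the paper likewise notes only in passing.
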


\begin{proof}
We first prove the correctness of \ProtNestedCommon (similar to the correctness of \ProtCommonPrelude); we only need to consider the case when $x_1=\cdots=x_{\floor{\alpha n}}$ and $\vec{y_{\floor{\alpha n}+1}}[x_1]=\ldots=\vec{y_n}[x_1]$ (there are no correctness requirements if these condition do not hold). We say that if any party executes \stepref{step:x-noise} or \stepref{step:y-noise}, then the protocol fails. By the union bound, the probability of this event is at most $1/8$. If this event does not occur, then $\vec{z}[x_1,\vec{y_1}[x_1]]=0.$ In this case the protocol fails if there is a $x,y\neq x_1,\vec{y_1}[x_1]$ such that $\vec{z}[x,y]=0$. This can happen in two cases: (1) no $x$-party participates or no $\vec{y}$-party participates -- this occurs with probability $4^{-{\alpha n}}+4^{-{(1-\alpha) n}}$, which is less than $1/16$ for $n \geq 6 \cdot \max\set{1/\alpha,1/(1-\alpha)}$. (2) the sum of all random elements in some coordinate is 0.
This occurs with probability at most $|\calX||\calY|/|G|$. As $G$ is a group with at least $16|\calX|\cdot |\calY|$ elements, the last probability is less than $1/16$. All together the probability of failure is less than $1/4$.

We next prove the privacy. Again, we assume an ideal functionality for addition and consider two neighboring databases. 
Let $\calC$ be a coalition of size $cn$ and $\vec{z'}=\sum_{i:P_i \notin \calC}$. 
If there is no party in $\set{P_1,\dots,P_{\floor{\alpha n}}}\setminus \calC$ (i.e., an honest $x$-party) that participates or
there is no party in $\set{P_{\floor{\alpha n}+1},\dots,P_{n}}\setminus \calC$ (i.e., an honest $y$-party) that participates
(which occurs with probability  $4^{-(\alpha-c)n}+4^{-(1-\alpha-c)n}$), then we say that the privacy fails and we pay for it in the $\delta$.
There are two cases. 
\begin{itemize}
\item There exists an $1\leq i \leq \floor{\alpha n}$ such that $x_i\neq x'_i$. In this case  $(\vec{z_i})_{i:P_i \notin \calC}$ are simply  vectors that $(P_i)_{i:P_i \notin \calC}$ send to the shuffle in an execution of \ProtCommonPrelude over the group $G^{|\calY|}$, and the privacy follows from 
\cref{lem:ProtCommonPrelude}. Note that in the proof of \cref{lem:ProtCommonPrelude} we only need the following properties: (1) at least one of the parties from $\set{P_1,\dots,P_{\floor{\alpha n}}}\setminus \calC$ participates, and (2) each party not in $\calC$  sends a random vector with probability $1/8$ and this is true in \ProtNestedCommon regardless if the party is an $x$-party or a $\vec{y}$-party.
\item
There exists an $\floor{\alpha n}+1\leq i \leq n$ such that $\vec{y_i}\neq \vec{y'_i}$. 
We assumed that there is at least one honest $x$-party that participates.
If there are two honest $x$-parties that participate with a different input, then $\vec{z'}$ is a random vector regardless of the input of $P_i$. Otherwise, let $P_{i'}$ be an honest $x$-party that participates. Then, all coordinates $\vec{z}[x,y]$, where $x \neq x_i$ are random elements.  In this case we can ignore all entries in the vector $\vec{z'}[x',y]$, where $x' \neq x$ and effectively the view of $\calC$ is the view   in an execution of \ProtCommonPrelude over the group $G^{|\calY|}$, and the privacy follows from  \cref{lem:ProtCommonPrelude} (since we assumed that there is at least one honest $y$-party that participates). \qedhere
\end{itemize}
\end{proof}

\begin{remark}
In \ProtNestedCommon (as well as \ProtCommonPrelude) the privacy parameter is $\epsilon_0 > 1$.
Using sub-sampling (\cref{clm:boostPrivacy}) we can reduce the privacy parameter to any $\epsilon$ by
increasing the number of parties by a multiplicative factor of $O(1/\epsilon)$. Notice that this is possible in \ProtNestedCommon since it only uses the shuffle and all $x$-parties (respectively,  all $\vec{y}$-parties) are symmetric. 
\end{remark}

\subsubsection{A private two-round protocol for the nested common element problem}
\label{sec:nestedPossibility2round}

We next present a two-round differentially private protocol for the nested common element problem against a coalition of size $cn$ for every $c < 1$.
In our protocol we will need a one-round protocol for a variant common element problem, called the $\alpha$-common element problem, where only $P_1,\dots,P_{\floor{\alpha n}}$ hold inputs, that is, if $x_1=x_2=\cdots=x_{\floor{\alpha n}}$, then with probability at least $3/4$ the analyzer must output $x_1$. Of course,  $P_1,\dots,P_{\floor{\alpha n}}$ can execute \ProtCommonPrelude to solve this problem, however the protocol will be private only against coalitions of size $cn$, where $c< \alpha$. 
To achieve this goal, we use a protocol of Balcer and Cheu~\cite{BC20} for histograms, which reports all elements that appear frequently in a database, and in particular, the $\alpha$-common element. The properties of the protocol are summarized in~\cref{thm:BC20}. Note that  coalitions are not onsidered in~\cite{BC20}, however the view of a coalition of size $cn$ in this protocol is basically the view of the analyzer in a protocol with $(1-c)n$ parties, hence the security of their protocol against coalitions follows (taking a slightly bigger $\epsilon$).  
\begin{theorem}[Special case of~\protect{\cite[Theorem 12]{BC20}}]
\label{thm:BC20}
For every $c<1$, $\epsilon,\delta\in [0,1]$, and $n \geq \frac{200}{(1-c)n}\ln \frac{4}{\delta}$
there is a one-round $(\epsilon,\delta)$-differentially private protocol against coalitions of size $cn$ that solves the $\alpha$-common element problem  with probability at least $1-1/2^{n}$. The message complexity of this protocol is $O(|\calX|)$.
\end{theorem}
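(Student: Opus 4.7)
The plan is to invoke the Balcer--Cheu histogram protocol of~\cite{BC20}, have the analyzer output the most frequent element, and lift the analyzer-only privacy guarantee of~\cite{BC20} to the coalition setting by a simple reduction. Concretely, the $\floor{\alpha n}$ input holders run the Balcer--Cheu protocol with domain $\calX$ on their inputs, while the remaining $n - \floor{\alpha n}$ parties execute the protocol with a canonical ``null'' input so that they contribute only the noise messages the protocol prescribes. Recall that this is a one-round shuffle protocol with per-party message complexity $O(|\calX|)$ that, from the shuffle output alone, lets the analyzer compute for every $x \in \calX$ an estimate $\widetilde c(x)$ of the true frequency $c(x)$ satisfying $\max_{x \in \calX} |\widetilde c(x) - c(x)| \leq \eta$ with probability at least $1 - \beta$, where $\eta$ scales polynomially in $1/\epsilon$, $\log(1/\delta)$, and $\log(|\calX|/\beta)$. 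The analyzer outputs the $x$ maximizing $\widetilde c(x)$, breaking ties lexicographically, or $\bot$ if that maximum is below $\floor{\alpha n}/2$.

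For correctness, suppose $x_1 = \cdots = x_{\floor{\alpha n}} = x^*$. The true histogram satisfies $c(x^*) = \floor{\alpha n}$ and $c(x) = 0$ for $x \neq x^*$. Taking $\beta = 2^{-n}$ and using the hypothesis on $n$ to force $\eta < \floor{\alpha n}/2$, we get $\widetilde c(x^*) > \widetilde c(x)$ for every $x \neq x^*$ with probability at least $1 - 2^{-n}$, and the analyzer correctly outputs $x^*$. Since $\log(1/\beta) = n$ enters only logarithmically into the Balcer--Cheu utility bound, the hypothesis on $n$ in the theorem absorbs this requirement (up to a larger universal constant if needed).

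For privacy against coalitions, fix a coalition $\calC$ of size $cn$ and a party $P_i \notin \calC$ with $i$-neighboring input vectors $\vec x, \vec x'$. The view of $\calC$ is determined by $\vec{x}[\calC]$, its local randomness $r_\calC$, and the shuffle output $\vec s$. Since $\calC$ knows the multiset of messages it contributed (a deterministic function of $\vec{x}[\calC]$ and $r_\calC$), it can subtract them from $\vec s$ to obtain the multiset $\vec s_H$ of messages from the $(1-c)n$ honest parties; the full view of $\calC$ is thus a post-processing of $(\vec{x}[\calC], r_\calC, \vec s_H)$. Crucially, $\vec s_H$ is distributed exactly as the shuffle output of an execution of the Balcer--Cheu protocol on the honest parties alone. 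Applying the analyzer-privacy guarantee of~\cite{BC20} to this honest sub-population, whose size $(1-c)n$ exceeds the required threshold $\Omega(\epsilon^{-2} \log(1/\delta))$ by the hypothesis on $n$, yields $(\epsilon, \delta)$-closeness between $\vec s_H$ under $\vec x$ and under $\vec x'$, and hence between the two views of $\calC$. The main obstacle is precisely justifying this coalition-to-analyzer reduction without losing in the parameters: it goes through cleanly because each party's randomness is local and the shuffle is a symmetric functionality of its inputs, so removing the coalition's contributions from the output faithfully reproduces the output of a smaller Balcer--Cheu execution.
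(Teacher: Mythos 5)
Your proposal matches the paper's treatment: the paper states this theorem as a special case of the Balcer--Cheu histogram protocol and justifies the coalition extension with exactly the reduction you give, namely that the coalition can strip its own messages from the shuffle output so that its view is a post-processing of the analyzer's view in an execution over the $(1-c)n$ honest parties (the paper notes this may cost a slightly larger $\epsilon$). Your added details on the non-input-holding parties contributing noise and on the accuracy-based correctness are consistent with the paper's intent, so this is essentially the same argument.
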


We next prove the existence of a two round protocol for the nested common element problem, proving \cref{thm:NestedTwoRound}.
\begin{proof}[Proof of \cref{thm:NestedTwoRound}]
The protocol is the natural protocol. We first execute the $\alpha$-common element problem to find the value $x$ that is common among the parties $P_1,\dots,P_{\floor{\alpha n}}$ (assuming such value exists). If the protocol returns some value $x_0$, then in the second round we execute the $(1-\alpha)$-common element problem to find the value $y$ that is common among the parties $P_{\floor{\alpha n}+1},\dots,P_n$ when holding the elements  $\vec{y_{\floor{\alpha n}+1}}[x_0],\dots,\vec{y_{n}}[x_0]$ (assuming such value exists). If this protocol returns a value $y_0$, then return this value.
\end{proof}

\section*{Acknowledgments}
The authors thank Rachel Cummings and Naty Peter for discussions of the shuffle model at an early stage of this research.
Work of A.~B.\ and K.~N.\ was supported by NSF grant No.~1565387 TWC: Large: Collaborative: Computing Over Distributed Sensitive Data. This work was done when A.~B.\ was hosted by Georgetown University.
Work of A.~B.\ was also supported by Israel Science Foundation grant no.~152/17, a grant from the Cyber Security Research Center at Ben-Gurion University, and ERC grant 742754 (project NTSC). I.~H.\ is the director of the  Check Point Institute for Information Security. His research is supported by ERC starting grant 638121 and Israel Science Foundation grant no.  666/19. Work of U.~S.\ was supported in part 
by the Israel Science Foundation (grant 1871/19), and by the Cyber Security Research Center at Ben-Gurion University of the Negev.

\bibliographystyle{plain}

\appendix

\section{Preliminaries from Information Theory}
\label{app:info}

We recall basic definitions from information theory. The following definition was introduced by Shannon as a measure for the uncertainty in a random variable.

\begin{definition}[Entropy]
The {\em entropy} of a random variable $X$ is
$$
H(X) = - \sum_x \Pr[X=x]\cdot\log\left(\Pr[X=x]\right).
$$
The {\em joint entropy} of (jointly distributed) random variables $X$ and $Y$ is
$$
H(X,Y) = - \sum_{x,y} \Pr[X=x,Y=y]\cdot\log\left(\Pr[X=x,Y=y]\right).
$$
Note that this is simply the entropy of the random variable $Z=(X,Y)$.
\end{definition}

\begin{remark}\label{rem:entropy}
Observe that, by definition, we have that
\begin{align*}
H(X)=-\E_X\Big[ \log(\Pr[X=x]) \Big]&&\text{and}&&
H(X,Y)=-\E_{(X,Y)}\Big[ \log(\Pr[X=x,Y=y]) \Big].
\end{align*}
\end{remark}

We will use the following bounds on the entropy of a binary random variable.

\begin{claim}\label{claim:binaryHbounds}
Let $X$ be a random variable such that $\Pr[X=1]=p$ and $\Pr[X=0]=1-p$ for some $p\in[0,1]$. Then
$$
p\cdot\log\frac{1}{p} \leq H(X)\leq p\cdot\log\frac{4}{p}.
$$
\end{claim}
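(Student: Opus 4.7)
The plan is to expand the definition directly: $H(X) = p\log(1/p) + (1-p)\log(1/(1-p))$, and then bound each term separately.

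For the lower bound, observe that the second term $(1-p)\log(1/(1-p))$ is non-negative because $1-p \in [0,1]$, so $\log(1/(1-p)) \ge 0$. Dropping this term immediately yields $H(X) \ge p\log(1/p)$. The edge cases $p=0$ and $p=1$ need to be handled by the usual convention $0 \cdot \log(1/0) = 0$, but this is standard.

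For the upper bound, the goal is to show $(1-p)\log(1/(1-p)) \le 2p$, since adding $p\log(1/p)$ to both sides gives exactly $H(X) \le p\log(1/p) + 2p = p\log(4/p)$ (as $\log 4 = 2$ in base 2). To prove this, I would use the elementary inequality $\ln x \le x-1$ applied to $x = 1/(1-p)$, which yields $\ln(1/(1-p)) \le 1/(1-p) - 1 = p/(1-p)$. Multiplying both sides by $1-p$ gives $(1-p)\ln(1/(1-p)) \le p$. Converting from natural logarithm to base $2$ introduces a factor of $1/\ln 2 \approx 1.44 < 2$, so $(1-p)\log(1/(1-p)) \le p/\ln 2 \le 2p$, as required.

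There is no real obstacle here; the only mild annoyance is keeping the logarithm base consistent (base $2$ is implicit throughout the paper's information-theoretic definitions, which is why $\log 4 = 2$ works out cleanly and why the constant $4$ in the upper bound is not tight but suffices). The whole argument is two or three lines once the formula for $H(X)$ is written out.
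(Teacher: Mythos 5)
Your proposal is correct and takes essentially the same route as the paper: both bound $H(X)$ from below by dropping the nonnegative term $(1-p)\log\frac{1}{1-p}$, and from above by showing $(1-p)\log\frac{1}{1-p}\leq 2p$ and absorbing the $2p$ into $p\log\frac{4}{p}$. The only difference is that the paper cites the inequality $-(1-x)\log(1-x)\leq 2x$ as a known fact, while you supply the short derivation via $\ln x\leq x-1$, which is a fine (and slightly more self-contained) way to justify the same step.
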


\begin{proof}
The lower bound is immediate from the definition of $H(X)$. The upper bound follows from the fact that for every $x\in[0,1]$ it holds that $-(1-x)\log(1-x)\leq 2x$. Specifically,
\begin{align*}
H(X)&=-p\log(p)-(1-p)\log(1-p)\leq -p\log(p)+2p=p\cdot\log\frac{4}{p}.
\end{align*}
\end{proof}

The following definition can be used to measure the uncertainty in one random variable conditioned on another.

\begin{definition}[Conditional entropy]
Let $X,Y$ be two random variables. The {\em conditional entropy} of $X$ given $Y$ is
$$H(X|Y)=\E_{Y}\Big[H(X|Y=y)\Big],$$
where
$$
H(X|Y=y) = - \sum_x \Pr[X=x|Y=y]\cdot\log\left(\Pr[X=x|Y=y]\right).
$$
Similarly, given another random variable $Z$, we have that
$$H(X|Y,Z=z)=\E_{Y|Z=z}\Big[H(X|Y=y,Z=z)\Big].$$
\end{definition}

\begin{remark}\label{rem:conditionalEntropy}
Observe that
\begin{align*}
H(X|Y)&=\E_{Y}\Big[H(X|Y=y)\Big]\\
&=\sum_{y}\Pr[Y=y]\cdot H(X|Y=y)\\
&=-\sum_{y}\Pr[Y=y]\cdot\sum_{x}\Pr[X=x|Y=y]\cdot\log(\Pr[X=x|Y=y])\\
&=-\sum_{x,y}\Pr[X=x,Y=y]\cdot\log(\Pr[X=x|Y=y])\\
&=-\E_{(X,Y)}\log(\Pr[X=x|Y=y]).
\end{align*}
\end{remark}

Consider two random variables $X$ and $Y$. The following Claim shows that, intuitively, the uncertainty in $(X,Y)$ is the uncertainty in $X$ plus the uncertainty in $Y$ given $X$.

\begin{claim}[Chain rule of conditional entropy]\label{claim:chainCondEnt}
Let $X,Y$ be two random variables. Then,
$$H(X,Y) =  H(X)+H(Y|X).$$
\end{claim}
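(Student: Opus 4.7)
The plan is to prove the chain rule directly from the definitions of $H(X,Y)$, $H(X)$, and $H(Y|X)$, using the elementary identity $\Pr[X=x, Y=y] = \Pr[X=x]\cdot\Pr[Y=y\mid X=x]$ (valid whenever $\Pr[X=x]>0$; values with $\Pr[X=x]=0$ contribute zero to all sums under the standard convention $0\log 0 = 0$).

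First, I would take the logarithm of the above factorization to get $\log \Pr[X=x,Y=y] = \log \Pr[X=x] + \log \Pr[Y=y\mid X=x]$. Substituting into the definition of joint entropy from Remark~\ref{rem:entropy} yields
\[
H(X,Y) \;=\; -\E_{(X,Y)}\bigl[\log\Pr[X=x]\bigr] \;-\; \E_{(X,Y)}\bigl[\log\Pr[Y=y\mid X=x]\bigr].
\]

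Next, I would identify the two terms. The second term is exactly $H(Y\mid X)$ by the alternative formula derived in Remark~\ref{rem:conditionalEntropy}. For the first term, the summand $\log\Pr[X=x]$ depends only on $x$, so marginalizing out $y$ gives
\[
-\E_{(X,Y)}\bigl[\log\Pr[X=x]\bigr] \;=\; -\sum_{x}\log\Pr[X=x]\cdot\sum_{y}\Pr[X=x,Y=y] \;=\; -\sum_{x}\Pr[X=x]\log\Pr[X=x] \;=\; H(X).
\]
Combining the two identifications yields $H(X,Y) = H(X) + H(Y\mid X)$.

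There is no real obstacle: the only thing to be careful about is the handling of pairs $(x,y)$ with $\Pr[X=x,Y=y]=0$ (in which case $\Pr[Y=y\mid X=x]$ may be undefined), which is resolved by the convention $0\log 0 = 0$ so that such pairs contribute nothing to any of the sums. The whole proof is essentially a two-line manipulation that splits the joint-entropy summand by the product rule of probability.
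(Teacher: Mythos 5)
Your proof is correct and follows essentially the same route as the paper's: factor $\Pr[X=x,Y=y]=\Pr[X=x]\cdot\Pr[Y=y\mid X=x]$, take logarithms and expectations over $(X,Y)$, and identify the two resulting terms as $H(X)$ and $H(Y\mid X)$ via the expectation formulas in \cref{rem:entropy,rem:conditionalEntropy}. Your explicit marginalization of the first term and the remark about the $0\log 0$ convention are fine additions but do not change the argument.
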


\begin{proof}
For every $x,y$ we have that
$$
\Pr[X=x,Y=y]=\Pr[X=x]\cdot\Pr[Y=y|X=x],
$$
and hence,
$$
\log(\Pr[X=x,Y=y])=\log(\Pr[X=x])+\log(\Pr[Y=y|X=x]).
$$
Taking the expectation over $X$ and $Y$ we get that
$$
\E_{(X,Y)}\Big[\log(\Pr[X=x,Y=y])\Big]=\E_{X}\Big[\log(\Pr[X=x])\Big]+\E_{(X,Y)}\Big[\log(\Pr[Y=y|X=x])\Big].
$$
Therefore, by \cref{rem:entropy,rem:conditionalEntropy} we have that
$$H(X,Y) =  H(X)+H(Y|X).$$
\end{proof}

Consider two random variables $X$ and $Y$.
The following definition can be used to measure the amount of ``information'' that $X$ gives on $Y$.

\begin{definition}[Mutual information]
Let $X,Y$ be two random variables. The {\em mutual information} of $X$ and $Y$ is
$$
I(X;Y) = H(X) - H(X|Y)
$$
\end{definition}

\begin{remark}
Observe that by the chain rule of conditional entropy (\cref{claim:chainCondEnt}), we get that the mutual information that $X$ gives about $Y$ equals the mutual information that $Y$ gives about $X$. Formally,
\begin{align*}
I(X;Y) &= H(X)-H(X|Y)\\
&= H(X) - H(X,Y) + H(Y)\\
&= -H(Y|X)+H(Y)\\
&=I(Y;X).
\end{align*}
\end{remark}
\begin{claim}
Let $X$ and $Y$ be random variables. Then,
$$0 \leq I(X;Y) \leq H(X).$$
\end{claim}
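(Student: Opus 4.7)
The plan is to establish the two inequalities separately, using only the definitions and claims already developed in Appendix~\ref{app:info}. Recall that $I(X;Y) = H(X) - H(X|Y)$, so the two bounds translate into $0 \leq H(X|Y) \leq H(X)$.

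For the upper bound $I(X;Y) \leq H(X)$, I will show that $H(X|Y) \geq 0$. By definition, $H(X|Y) = \E_Y[H(X|Y=y)]$, and for every fixed $y$ the quantity $H(X|Y=y) = -\sum_x \Pr[X=x|Y=y]\log(\Pr[X=x|Y=y])$ is a sum of terms of the form $-p\log p$ with $p\in[0,1]$, each of which is non-negative. Hence $H(X|Y=y)\geq 0$ for every $y$, and taking expectation over $Y$ yields $H(X|Y)\geq 0$, giving $I(X;Y)\leq H(X)$.

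For the lower bound $I(X;Y) \geq 0$, the natural approach is to rewrite $I(X;Y)$ as a KL divergence and invoke Jensen's inequality (Gibbs' inequality). Using the chain rule of conditional entropy (Claim~\ref{claim:chainCondEnt}), one obtains
\[
I(X;Y) = H(X) + H(Y) - H(X,Y) = \E_{(X,Y)}\!\left[\log\frac{\Pr[X=x,Y=y]}{\Pr[X=x]\cdot\Pr[Y=y]}\right].
\]
Applying Jensen's inequality to the convex function $-\log$ (equivalently, using the standard fact that $\log z \leq (z-1)/\ln 2$ for $z>0$) gives
\[
-I(X;Y) = \E_{(X,Y)}\!\left[\log\frac{\Pr[X=x]\Pr[Y=y]}{\Pr[X=x,Y=y]}\right] \leq \log\!\left(\sum_{x,y}\Pr[X=x]\Pr[Y=y]\right) = \log 1 = 0,
\]
where the sum is over pairs $(x,y)$ in the support of $(X,Y)$. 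Thus $I(X;Y)\geq 0$, completing the proof.

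I do not anticipate any real obstacle: both directions are standard, and the main care required is simply to restrict the sums to the support of $(X,Y)$ to avoid dividing by zero when applying Jensen's inequality.
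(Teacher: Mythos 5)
Your proof is correct. The paper states this claim without proof (it is invoked as a standard fact from information theory), so there is no argument of the paper to compare against; your two-step argument—nonnegativity of conditional entropy, via $H(X|Y=y)$ being a sum of terms $-p\log p\geq 0$, for the upper bound, and rewriting $I(X;Y)=H(X)+H(Y)-H(X,Y)$ as an expected log-likelihood ratio and applying Jensen for the lower bound—is the standard route and fits the toolkit of Appendix~\ref{app:info} (in particular \cref{claim:chainCondEnt}). One small imprecision: after Jensen, the quantity $\sum_{(x,y)}\Pr[X=x]\Pr[Y=y]$, with the sum restricted to the support of $(X,Y)$ as you correctly insist, is only at most $1$ rather than exactly $1$ (it is strictly smaller whenever the joint support is a proper subset of the product of the marginal supports), so the last step should read $\leq \log 1 = 0$ rather than $=\log 1 = 0$; the conclusion $I(X;Y)\geq 0$ is unaffected.
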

The following definition can be used to measure the amount of ``information'' that $X$ gives on $Y$, conditioned on a third random variable $Z$.

\begin{definition}[Conditional mutual information]
Let $X,Y,Z$ be three random variables. The {\em conditional mutual information} of $X$ and $Y$ given $Z=z$ is
$$
I(X;Y|Z=z) = H(X|Z=z) - H(X|Y,Z=z)
$$
The {\em conditional mutual information} of $X$ and $Y$ given $Z$ is
$$I(X;Y|Z)=\E_{Z}\Big[I(X;Y|Z=z)\Big].$$
\end{definition}

The following claim gives an alternative definition for conditional mutual information.

\begin{claim}
$$
I(X;Y|Z) = H(X|Z) - H(X|Y,Z).
$$
\end{claim}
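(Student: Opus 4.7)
The plan is to unwind the definitions and apply linearity of expectation together with the tower property of conditional expectation. Concretely, starting from
\[
I(X;Y|Z) \;=\; \E_Z\!\bigl[I(X;Y|Z=z)\bigr] \;=\; \E_Z\!\bigl[H(X|Z=z) - H(X|Y,Z=z)\bigr],
\]
I split the expectation into two pieces and handle each separately.

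The first piece, $\E_Z[H(X|Z=z)]$, equals $H(X|Z)$ immediately from the definition of conditional entropy given in the paper. The second piece requires a little more care: I would first recall that by definition $H(X|Y,Z=z) = \E_{Y|Z=z}[H(X|Y=y,Z=z)]$, and then apply the tower property to obtain
\[
\E_Z\!\bigl[H(X|Y,Z=z)\bigr]
\;=\; \E_Z\!\Bigl[\E_{Y|Z=z}\!\bigl[H(X|Y=y,Z=z)\bigr]\Bigr]
\;=\; \E_{(Y,Z)}\!\bigl[H(X|Y=y,Z=z)\bigr].
\]
Viewing the pair $W=(Y,Z)$ as a single random variable, the right-hand side is exactly $H(X|W)=H(X|Y,Z)$ by the definition of conditional entropy (conditioning on a pair means taking expectation over the joint distribution). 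Combining the two pieces via linearity yields $I(X;Y|Z)=H(X|Z)-H(X|Y,Z)$.

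There is no real obstacle here; the proof is essentially a bookkeeping exercise. The only subtlety worth flagging is the implicit convention that $H(X|Y,Z)$ denotes conditional entropy given the joint random variable $(Y,Z)$, i.e., the expectation is over the joint distribution of $(Y,Z)$ rather than a nested expectation. Once this convention is made explicit, the chain of equalities above closes the proof in two lines.
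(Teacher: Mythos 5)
Your proof is correct and follows essentially the same route as the paper's: unwind the definition $I(X;Y|Z)=\E_Z[I(X;Y|Z=z)]$, identify $\E_Z[H(X|Z=z)]$ with $H(X|Z)$, and collapse the nested expectation $\E_Z[\E_{Y|Z=z}[H(X|Y=y,Z=z)]]$ into $\E_{(Y,Z)}[H(X|Y=y,Z=z)]=H(X|Y,Z)$ via the tower property. Nothing is missing.
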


\begin{proof}
\begin{align*}
I(X;Y|Z) &= \E_{Z}\Big[I(X;Y|Z=z)\Big]\\
&= \E_{Z}\Big[H(X|Z=z)\Big] - \E_{Z}\Big[H(X|Y,Z=z)\Big]\\ 
&= H(X|Z) - \E_{Z}\left[\E_{Y|Z=z}\Big[H(X|Y=y,Z=z)\Big]\right]\\
&= H(X|Z) - \E_{(Y,Z)}\Big[H(X|Y=y,Z=z)\Big]\\
&= H(X|Z) - H(X|Y,Z).
\end{align*}
\end{proof}

\begin{claim}[Chain rule for mutual information]\label{claim:chainMutualInfo}
$$I(X,Y;Z) = I(X;Z) + I(Y;Z|X).$$
\end{claim}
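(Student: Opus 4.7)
The plan is to expand both sides using the definitions of mutual information and entropy, then leverage the chain rule for conditional entropy (\cref{claim:chainCondEnt}) already established in the excerpt. Starting from the left, I would write
\[
I(X,Y;Z) = H(X,Y) - H(X,Y \mid Z),
\]
so the proof reduces to showing that $H(X,Y) - H(X,Y \mid Z)$ equals $I(X;Z) + I(Y;Z \mid X) = \bigl(H(X) - H(X\mid Z)\bigr) + \bigl(H(Y\mid X) - H(Y \mid X,Z)\bigr)$.

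The first term is handled directly by \cref{claim:chainCondEnt}: $H(X,Y) = H(X) + H(Y\mid X)$. For the second term, I need the conditional analogue $H(X,Y \mid Z) = H(X\mid Z) + H(Y\mid X,Z)$. I would derive this by applying \cref{claim:chainCondEnt} pointwise for each value $z$ of $Z$, which gives
\[
H(X,Y \mid Z = z) = H(X \mid Z=z) + H(Y \mid X, Z=z),
\]
and then taking the expectation over $Z$. On the left this yields $H(X,Y \mid Z)$ by definition; on the right, the first summand gives $H(X\mid Z)$ and the second gives $\mathbb{E}_{Z}[\mathbb{E}_{X\mid Z=z}[H(Y\mid X=x,Z=z)]] = \mathbb{E}_{(X,Z)}[H(Y\mid X=x,Z=z)] = H(Y\mid X,Z)$, matching the definition used earlier in the excerpt.

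With both chain rules in hand, I substitute and regroup:
\[
I(X,Y;Z) = \bigl(H(X) + H(Y\mid X)\bigr) - \bigl(H(X\mid Z) + H(Y\mid X,Z)\bigr) = \bigl(H(X) - H(X\mid Z)\bigr) + \bigl(H(Y\mid X) - H(Y\mid X,Z)\bigr),
\]
and the first bracket is exactly $I(X;Z)$ while the second is $I(Y;Z\mid X)$ by the alternative form for conditional mutual information proved just above the statement. No step here is a serious obstacle; the only mildly delicate point is being careful that the conditional chain rule for joint entropy is not directly stated as a lemma, so it must be deduced by averaging the unconditional identity over $Z$, which is routine given the expectation-based definition of conditional entropy used in the paper.
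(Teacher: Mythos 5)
Your proposal is correct and follows essentially the same route as the paper: expand $I(X,Y;Z)=H(X,Y)-H(X,Y\mid Z)$, apply the entropy chain rule to both terms, and regroup using the identity $I(X;Y\mid Z)=H(X\mid Z)-H(X\mid Y,Z)$. The only difference is that you spell out the conditional chain rule $H(X,Y\mid Z)=H(X\mid Z)+H(Y\mid X,Z)$ by averaging over $Z$, a step the paper's proof uses implicitly.
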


\begin{proof}
Using the chain rule for conditional entropy (\cref{claim:chainCondEnt}) we get that
\begin{align*}
I(X,Y;Z) &= H(X,Y) - H(X,Y|Z)\\
&=H(X)+H(Y|X)-H(X|Z)-H(Y|X,Z)\\
&=I(X;Z)+I(Y;Z|X).
\end{align*}
\end{proof}

Consider three random variables $Y_1,Y_2,Z$, where $Y_1$ and $Y_2$ are conditionally independent given $Z$. That is,  for every $y_1,y_2,z$ such that $\Pr[Z=z]>0$ we have
$$\Pr[Y_1=y_1 \wedge Y_2 =y_2|Z=z]=\Pr[Y_1=y_1 |Z=z] \cdot \Pr[ Y_2 =y_2|Z=z].$$
The following lemma shows that the amount of information that $(Y_1,Y_2)$ give about $Z$, is at most the amount that $Y_1$ gives on $Z$ plus the amount that $Y_2$ gives on $Z$. (This is not necessarily true without the conditionally independent assumption.)

\begin{lemma}
Let $Y_1,Y_2,Z$ be random variables, where $Y_1$ and $Y_2$ are conditionally independent given $Z$. Then,
$$
I(Z;Y_1) + I(Z;Y_2) \geq I(Z; Y_1,Y_2).
$$
\end{lemma}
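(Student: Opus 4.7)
The plan is to combine the chain rule for mutual information (\cref{claim:chainMutualInfo}) with the conditional independence hypothesis, reducing the claim to the nonnegativity of ordinary mutual information.

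First I would apply the chain rule to write
\[
I(Z; Y_1, Y_2) \;=\; I(Z; Y_1) \;+\; I(Z; Y_2 \mid Y_1),
\]
so that the lemma is equivalent to the inequality $I(Z; Y_2 \mid Y_1) \leq I(Z; Y_2)$. This reduction is purely formal and reveals the crux: without the conditional independence assumption, this latter inequality can fail (conditioning can increase mutual information), so the assumption must be used precisely here.

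Next I would expand both quantities using conditional entropy:
\[
I(Z; Y_2) \;=\; H(Y_2) - H(Y_2 \mid Z),
\qquad
I(Z; Y_2 \mid Y_1) \;=\; H(Y_2 \mid Y_1) - H(Y_2 \mid Y_1, Z).
\]
The conditional independence of $Y_1$ and $Y_2$ given $Z$ says exactly that, for every $z$ with $\Pr[Z=z] > 0$, the conditional distribution of $Y_2$ given $Y_1, Z=z$ coincides with the conditional distribution of $Y_2$ given $Z=z$. Taking expectations (over $(Y_1, Z)$ on one side, over $Z$ on the other) this yields $H(Y_2 \mid Y_1, Z) = H(Y_2 \mid Z)$.

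Substituting this identity, the difference becomes
\[
I(Z; Y_2) - I(Z; Y_2 \mid Y_1) \;=\; H(Y_2) - H(Y_2 \mid Y_1) \;=\; I(Y_1; Y_2) \;\geq\; 0,
\]
where the last inequality is the standard nonnegativity of mutual information. Plugging back into the chain rule expansion gives $I(Z; Y_1, Y_2) \leq I(Z; Y_1) + I(Z; Y_2)$, as desired. The only step requiring any care is the justification of $H(Y_2 \mid Y_1, Z) = H(Y_2 \mid Z)$ from the conditional independence hypothesis, but this is an immediate consequence of the definitions, so I do not anticipate any genuine obstacle.
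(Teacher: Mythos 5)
Your proof is correct and follows essentially the same route as the paper: both start from the chain rule $I(Z;Y_1,Y_2)=I(Z;Y_1)+I(Z;Y_2\mid Y_1)$, use the conditional independence hypothesis to eliminate the conditioning on $Z$ (you via $H(Y_2\mid Y_1,Z)=H(Y_2\mid Z)$, the paper via the equivalent fact $I(Y_1;Y_2\mid Z)=0$), and identify the remaining slack as $I(Y_1;Y_2)\geq 0$. The only cosmetic difference is that you expand at the level of entropies while the paper applies the chain rule for mutual information repeatedly; the algebraic content is identical.
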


\begin{proof}
By repeated application of the chain rule of mutual information (\cref{claim:chainMutualInfo}), it holds
\begin{align*}
I(Z; Y_1,Y_2) &= I(Z;Y_1) + I(Z;Y_2|Y_1)\\
&= I(Z;Y_1) + I(Z,Y_1;Y_2) - I(Y_1;Y_2)\\
&=I(Z;Y_1)+I(Z;Y_2)+I(Y_1;Y_2|Z)-I(Y_1;Y_2)\\
&=I(Z;Y_1)+I(Z;Y_2)-I(Y_1;Y_2)\\
&\leq I(Z;Y_1)+I(Z;Y_2),
\end{align*}
where the last equality follows since $I(Y_1;Y_2|Z)=0$ as $Y_1$ and $Y_2$ are conditionally independent.
\end{proof}

The following lemma shows that if $I(X;Y|Z)$ is high and if $H(Z)$ is low, then $I(X;Y)$ must also be high. That is, if $X$ gives a lot of information on $Y$ when conditioning on a random variable $Z$ with low entropy, then $X$ gives a lot of information on $Y$ even without conditioning on $Z$.

\begin{lemma}
Let $X,Y,Z$ be three random variables. Then,
$$I(X;Y)\geq I(X;Y|Z)-H(Z).$$
\end{lemma}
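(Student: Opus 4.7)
The plan is to prove the inequality via two applications of the chain rule for mutual information (\cref{claim:chainMutualInfo}), together with the elementary facts that mutual information is nonnegative and that $I(A;B) \leq H(B)$ for any random variables $A,B$ (which itself follows from $I(A;B) = H(B) - H(B \mid A)$ and nonnegativity of conditional entropy).

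First, I would expand $I(X;Y,Z)$ in the two natural ways that the chain rule permits, conditioning first on $Y$ and then on $Z$:
\begin{align*}
I(X;Y,Z) &= I(X;Y) + I(X;Z \mid Y), \\
I(X;Y,Z) &= I(X;Z) + I(X;Y \mid Z).
\end{align*}
Equating the two right-hand sides and rearranging yields
\[
I(X;Y) \;=\; I(X;Z) + I(X;Y \mid Z) - I(X;Z \mid Y).
\]
Thus, to obtain the desired bound, it suffices to show that $I(X;Z) - I(X;Z \mid Y) \geq -H(Z)$, i.e., that $I(X;Z \mid Y) \leq H(Z) + I(X;Z)$.

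Second, I would bound the two subtracted terms individually. The term $I(X;Z)$ is nonnegative, so dropping it only weakens the inequality. For the subtracted term $I(X;Z \mid Y)$, I would use that $I(X;Z \mid Y) \leq H(Z \mid Y) \leq H(Z)$, where the first inequality is the standard fact that conditional mutual information is bounded by conditional entropy of either argument (itself a consequence of $H(Z \mid X,Y) \geq 0$), and the second is that conditioning does not increase entropy. Combining,
\[
I(X;Y) \;\geq\; 0 + I(X;Y \mid Z) - H(Z) \;=\; I(X;Y \mid Z) - H(Z),
\]
which is the claim.

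There is no real obstacle here; the only step that is not a one-line definitional unfolding is the double use of the chain rule, and even that is straightforward given \cref{claim:chainMutualInfo}. If desired, one could present an even more compact argument by noting that $I(X;Y,Z) \geq I(X;Y \mid Z)$ (since $I(X;Z) \geq 0$) and that $I(X;Y,Z) = I(X;Y) + I(X;Z \mid Y) \leq I(X;Y) + H(Z)$, giving the bound in a single line; I would likely present this streamlined version in the paper.
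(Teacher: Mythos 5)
Your proof is correct and follows essentially the same route as the paper's: a chain-rule expansion of the mutual information among $X$, $Y$, $Z$, followed by dropping nonnegative terms and bounding the remaining $Z$-dependent term by $H(Z)$. The only cosmetic difference is that you bound the cross term via $I(X;Z\mid Y)\leq H(Z\mid Y)\leq H(Z)$, whereas the paper bounds the joint term $I(Z;X,Y)\leq H(Z)$; both reductions rest on the same elementary facts.
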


\begin{proof}
As in the previous proof, using the chain rule of mutual information (\cref{claim:chainMutualInfo}) we have that
\begin{align*}
I(Z; X,Y) &= I(Z;X) + I(Z;Y|X)\\
&= I(Z;X) + I(Z,X;Y) - I(X;Y)\\
&=I(Z;X)+I(Z;Y)+I(X;Y|Z)-I(X;Y).
\end{align*}
Therefore,
\begin{align*}
I(X;Y) &= I(Z;X)+I(Z;Y)+I(X;Y|Z)-I(Z; X,Y)\\
&\geq I(X;Y|Z)-I(Z; X,Y)\\
&\geq I(X;Y|Z)-I(Z; Z)\\
&= I(X;Y|Z)-H(Z).
\end{align*}
\end{proof}

\section{Additional Preliminaries from Differential Privacy}
\label{app:additional}

The following theorem bounds the mutual information between the input and the output of a differentially private algorithm (that operates on a database of size 1).

\begin{theorem}[\cite{BassilyS15}]\label{thm:DPinformation}
Let $X$ be uniformly distributed over $\calX$. Let $\calA$ be an $(\epsilon,\delta)$-differentially private algorithm that operates on a single input (i.e., a database of size 1) from $\calX$. Let $Z$ denote $\calA(X)$. Then,
$$
I(X;Z)=O\left( \epsilon^2 + \frac{\delta}{\epsilon}\log|\calX| + \frac{\delta}{\epsilon}\log(\epsilon/\delta)  \right).
$$
\end{theorem}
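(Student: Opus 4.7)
The plan is to reduce the mutual information to a KL divergence calculation and exploit the $(\epsilon,\delta)$-DP structure via a tail bound on the privacy loss random variable. First, write
\[
I(X;Z) = \E_{X}\bigl[\mathrm{KL}(\calA(X) \,\|\, \bar\calA)\bigr],
\]
where $\bar\calA$ denotes the marginal distribution of $Z$ (i.e.\ $\bar\calA = \E_{X'}[\calA(X')]$ with $X'$ an independent uniform draw from $\calX$). By convexity of KL divergence in its second argument, this is upper bounded by $\E_{X,X'}[\mathrm{KL}(\calA(X)\,\|\,\calA(X'))]$, reducing the problem to bounding $\mathrm{KL}(\calA(x)\,\|\,\calA(x'))$ uniformly in a pair $(x,x')$ of inputs.

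For a fixed pair $(x,x')$, introduce the privacy loss random variable $L(z) = \log\bigl(p_x(z)/p_{x'}(z)\bigr)$, where $p_x, p_{x'}$ are the output densities, so that $\mathrm{KL}(\calA(x)\,\|\,\calA(x')) = \E_{z \sim \calA(x)}[L(z)]$. In the pure-DP case $\delta=0$, the inequality $|L(z)| \leq \epsilon$ holds pointwise, and a one-line computation using $\log t \leq t-1$ gives $\E[L(z)] \leq \epsilon(e^\epsilon-1) = O(\epsilon^2)$ for $\epsilon \leq 1$; this is the first term. For approximate DP, I would invoke (or derive from iterated group privacy) the tail bound
\[
\Pr_{z \sim \calA(x)}\bigl[L(z) > k\epsilon\bigr] \;\leq\; O\!\left(\tfrac{\delta}{\epsilon}\right)\cdot e^{-k\epsilon}\qquad (k\geq 1),
\]
which is the standard consequence of $(\epsilon,\delta)$-DP for the tail of the privacy loss.

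Third, I would decompose $\E[L(z)]$ using a layer-cake/slice argument: the part where $|L|\le \epsilon$ contributes $O(\epsilon^2)$ exactly as in pure DP, while the tail part $L > \epsilon$ is summed over slices $k\epsilon < L \leq (k+1)\epsilon$ using the tail bound above. A crucial point is that $L(z)$ is automatically truncated at $\log|\calX|$ once we pass to the marginal $\bar\calA$, since $\bar\calA(z) \geq p_x(z)/|\calX|$ (uniformity of $X$ is what makes this work). Summing the geometric tail up to height $\min\{\log|\calX|,\,\log(\epsilon/\delta)\}$ produces the two remaining terms $\frac{\delta}{\epsilon}\log|\calX|$ and $\frac{\delta}{\epsilon}\log(\epsilon/\delta)$: the former from the hard truncation at $\log|\calX|$, the latter from where the geometric decay kicks in.

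The main obstacle is obtaining the tail bound on the privacy loss in exactly the right form and integrating it against the $\log|\calX|$ truncation so that the two $\delta/\epsilon$ terms emerge with the correct constants rather than, say, a naive $\delta \log|\calX|$. A secondary technical point is ensuring the convexity step (reducing $\mathrm{KL}(\calA(x)\,\|\,\bar\calA)$ to $\E_{X'}\mathrm{KL}(\calA(x)\,\|\,\calA(X'))$) is compatible with the truncation at $\log|\calX|$—i.e.\ that we apply the truncation after, rather than before, invoking convexity, so as not to lose the benefit of averaging over $X'$.
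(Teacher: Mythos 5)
First, a point of comparison: the paper does not prove this statement at all --- it is quoted from Bassily and Smith [BassilyS15] (see also [Vadhan2017]) --- so there is no internal proof to measure against. Judged on its own terms, your outline has a genuine gap: the tail bound you propose to ``invoke or derive,'' namely $\Pr_{z\sim\calA(x)}[L(z)>k\epsilon]\le O(\delta/\epsilon)\,e^{-k\epsilon}$ for the pairwise privacy loss $L=\log(p_x/p_{x'})$, is false. Consider $\calA$ that on input $x$ outputs a flag $\perp_x$ unique to $x$ with probability $\delta$ and an input-independent value otherwise; this is $(0,\delta)$-differentially private, yet $L=+\infty$ on an event of probability $\delta$ under $\calA(x)$, so $\Pr[L>k\epsilon]=\delta$ for every $k$ and no geometric decay holds. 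The correct consequence of $(\epsilon,\delta)$-DP is only $\Pr_{z\sim\calA(x)}[L(z)>\epsilon+s]\le\delta/(1-e^{-s})$, which never drops below $\delta$. The same example shows the convexity step is not a ``secondary technical point'': $\mathrm{KL}(\calA(x)\,\|\,\calA(x'))$ can be $+\infty$ under approximate DP, so $\E_{X,X'}[\mathrm{KL}(\calA(X)\,\|\,\calA(X'))]$ is vacuous, and the truncation $\bar\calA(z)\ge p_x(z)/|\calX|$ is a property of the mixture that cannot be reinstated after you have replaced $\bar\calA$ by an individual $\calA(x')$. A direct argument must keep the marginal in the second slot throughout, show that the loss against the marginal exceeds $O(\epsilon)$ only with probability $O(\delta/\epsilon)$ (e.g., by a Markov argument over $x'$ using the pairwise bound above), and pay $\log|\calX|$ only on that event; moreover, restricting the ``$|L|\le\epsilon$ gives $O(\epsilon^2)$'' computation to a sub-event breaks the symmetrization behind $\epsilon(e^{\epsilon}-1)$ and leaves boundary terms that must be bounded separately.

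For reference, the proof in [BassilyS15] avoids tail integration altogether: it uses the structural fact that two $(\epsilon,\delta)$-indistinguishable distributions agree, up to total variation $O(\delta/\epsilon)$, with a pair of distributions that are pointwise $e^{\pm O(\epsilon)}$-close. One applies the pure-DP mutual-information bound $O(\epsilon^2)$ to the modified randomizer and a Fannes-type continuity bound for conditional entropy to transfer it back; the term $\frac{\delta}{\epsilon}\log|\calX|$ is exactly $\SD\cdot\log|\calX|$ and the term $\frac{\delta}{\epsilon}\log(\epsilon/\delta)$ is the binary entropy of the $O(\delta/\epsilon)$-probability switching event. Your route can likely be repaired by substituting the flat tail bound $\delta/(1-e^{-s})$, capping the loss at $\log|\calX|$ via the marginal, and redoing the bookkeeping, but as written the two key quantitative ingredients do not hold.
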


In our protocols we will use the following protocol in the local model for computing histograms.

\begin{theorem}[Histogram protocol~\cite{BassilyS15,BNST17,BunNS19}]
\label{thm:LDPhistograms}
Let $\beta,\epsilon \leq 1$ and $\calX$ be some finite domain. There exists a 1-round $(\epsilon,0)$-differentially private protocol in the local model for $n$ parties with message complexity 1, in which the input of each agent is a single element from $\calX$ and the outcome is a data structure $D:\calX\rightarrow[n]$ 
such that for every input to the protocol $\vec{x}\in \calX^n$, with probability at least 
$1 - \beta$, for every input vector $x=(x_1,\dots,x_n)\in\calX$ we have
$$
\Big|\;D(x)-\left|\{i:x_i=x\}\right|\;\Big|\leq  O\left(\frac{1}{\eps}\cdot\sqrt{n\cdot\log\left(\frac{|\calX|}{\beta}\right)}\right).
$$
\end{theorem}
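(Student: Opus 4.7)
The plan is to implement the standard Hadamard/randomized-response construction (as in Bassily--Smith and Bun--Nelson--Stemmer). Fix an integer parameter $d$ with $|\calX|\le d\le 2|\calX|$, and let $\Phi\in\{-1,+1\}^{n\times d}$ be a uniformly random $\pm1$ matrix drawn from the public random string (so the rows of $\Phi$ are pairwise independent over the choice of the public string, in fact fully independent). Fix an arbitrary injection $\iota:\calX\hookrightarrow[d]$. Each party $P_i$ computes the single bit $z_i=\Phi_{i,\iota(x_i)}\in\{-1,+1\}$, and sends the randomized response
\[
\widetilde z_i=\begin{cases} z_i & \text{w.p. } p:=\frac{e^{\epsilon}}{e^{\epsilon}+1},\\ -z_i & \text{w.p. } 1-p.\end{cases}
\]
This is one message per party, in one round, and the communication never uses the shuffle, so the model is local. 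The analyzer outputs the data structure
\[
D(x)=\frac{1}{2p-1}\sum_{i=1}^{n}\widetilde z_i\cdot\Phi_{i,\iota(x)}+\frac{n}{2} \quad\text{(if needed, clipped to $[0,n]$)}.
\]

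The first step is privacy. Each party $P_i$ reveals only $\widetilde z_i$, which is the output of binary randomized response applied to $z_i$ (with $\Phi$ fixed by the public randomness, $\widetilde z_i$ depends on $x_i$ only through $z_i\in\{-1,+1\}$). Hence, for every fixing of $\Phi$ and every pair of neighboring inputs $x_i,x_i'$, the distributions of $\widetilde z_i$ differ by a factor of at most $p/(1-p)=e^{\epsilon}$, giving pure $(\epsilon,0)$-differential privacy of $P_i$'s message, and in particular of the joint transcript (by parallel composition across independent parties).

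The second step is accuracy. For a fixed target $x\in\calX$, let $\xi_i=\widetilde z_i\,\Phi_{i,\iota(x)}\in[-1,1]$. Conditional on $\Phi$, the $\xi_i$'s are independent, with $\E[\widetilde z_i\mid \Phi]=(2p-1)z_i$, so
\[
\E[\xi_i\mid\Phi]=(2p-1)\Phi_{i,\iota(x_i)}\Phi_{i,\iota(x)}=(2p-1)\cdot\mathbf{1}[x_i=x]+(2p-1)\cdot\Phi_{i,\iota(x_i)}\Phi_{i,\iota(x)}\cdot\mathbf{1}[x_i\neq x].
\]
Taking expectation also over $\Phi$, the cross terms with $x_i\neq x$ vanish because $\Phi_{i,\iota(x_i)}$ and $\Phi_{i,\iota(x)}$ are independent $\pm1$ variables; thus $\E[\sum_i\xi_i]=(2p-1)\cdot|\{i:x_i=x\}|$, so $\E[D(x)]=|\{i:x_i=x\}|+n/2\cdot 0$ (after absorbing the offset correctly; the $n/2$ term is a routine normalization artifact and can be dropped by centering the randomized response). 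By the Hoeffding bound applied to the joint randomness of $\Phi$ and the randomized responses (each $\xi_i$ takes values in $[-1,1]$),
\[
\Pr\!\left[\,\Bigl|\tfrac{1}{2p-1}\sum_i\xi_i - |\{i:x_i=x\}|\Bigr|>t\,\right]\le 2\exp\!\left(-\tfrac{(2p-1)^2 t^2}{2n}\right).
\]
Setting $t=\Theta\!\bigl(\tfrac{1}{2p-1}\sqrt{n\log(|\calX|/\beta)}\bigr)$ makes this probability at most $\beta/|\calX|$, and a union bound over all $x\in\calX$ gives the stated error for every $x$ simultaneously. Finally, $\tfrac{1}{2p-1}=\tfrac{e^\epsilon+1}{e^\epsilon-1}=O(1/\epsilon)$ for $\epsilon\le 1$, matching the bound $O\!\bigl(\tfrac{1}{\epsilon}\sqrt{n\log(|\calX|/\beta)}\bigr)$.

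The only mildly subtle step is the accuracy calculation, where one must be careful that the ``noise'' coming from the $\Phi$ entries (the $x_i\neq x$ terms) is absorbed into the same Hoeffding estimate as the randomized-response noise; treating $\Phi$ and the randomized responses jointly as the source of randomness makes each $\xi_i$ a bounded independent summand, after which Hoeffding and a union bound over $\calX$ finish the argument. Using pairwise independence of the columns of $\Phi$ would suffice and lets the public random string be short; full independence (or a Hadamard matrix with a random sign per row, which is what is typically used to save randomness) is not needed for the $O\bigl(\frac{1}{\epsilon}\sqrt{n\log(|\calX|/\beta)}\bigr)$ bound.
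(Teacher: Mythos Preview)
The paper does not prove this theorem; it is quoted (in Appendix~B) as a known result from \cite{BassilyS15,BNST17,BunNS19} with no accompanying argument. So there is no ``paper's proof'' to compare against, and your sketch is exactly the standard Hadamard/randomized-response construction from those references.

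Your argument is essentially correct. Two small cleanups: (i) the $+n/2$ offset in your definition of $D(x)$ is spurious---with $\pm1$-valued responses the unbiased estimator is simply $\tfrac{1}{2p-1}\sum_i \widetilde z_i\,\Phi_{i,\iota(x)}$, as your own expectation calculation shows; you notice this yourself, so just drop the offset from the statement rather than hedging. (ii) Your closing remark that ``pairwise independence of the columns of $\Phi$ would suffice'' needs care: what your Hoeffding step actually uses is that the \emph{rows} of $\Phi$ (together with the RR coins) are mutually independent, so that the $\xi_i$ are independent across $i$; pairwise independence of columns only buys you the mean-zero property for $x_i\neq x$. If you want to economize on public randomness, the usual route (random sign per row applied to a fixed Hadamard matrix, or the Johnson--Lindenstrauss-style argument in \cite{BassilyS15}) still keeps the rows independent, which is what matters here.
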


We next recall the sub-sampling technique from~\cite{KLNRS11,BBKN12}.

\begin{theorem}[Sub-sampling~\cite{KLNRS11,BBKN12}]\label{clm:boostPrivacy}
Let $\calA_1$ be an $(\epsilon^*,\delta)$-differentially private algorithm operating on databases of size $n$.
Fix $\epsilon\leq 1$, and denote $t=\frac{n}{\epsilon}(3+\exp(\epsilon^*))$.
Construct an algorithm $\calA_2$ that on input a database $D=(z_i)_{i=1}^t$ 
uniformly at random selects a subset $T\subseteq\{1,2,...,t\}$ of size $n$, and runs $\calA_1$ on the multiset 
$D_T=(z_i)_{i\in T}$.
Then, $\calA_2$ is $\left(\epsilon,\frac{4\epsilon}{3+\exp(\epsilon^*)}\delta\right)$-differentially private.
\end{theorem}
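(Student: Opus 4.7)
The plan is a standard "two-case" sub-sampling argument, which I would organize around fixing neighboring databases $D=(z_1,\dots,z_t)$ and $D'=(z'_1,\dots,z'_t)$ that differ only at some coordinate $i$, and analyzing $\Pr[\calA_2(D)\in S]$ versus $\Pr[\calA_2(D')\in S]$ for an arbitrary event $S$. The key observation is that when the random size-$n$ subset $T\subseteq[t]$ avoids index $i$, the sub-sampled databases $D|_T$ and $D'|_T$ are literally equal, so $\calA_1$ produces identical output distributions; and when $T$ contains $i$, the databases $D|_T,D'|_T$ are size-$n$ neighbors and the $(\epsilon^*,\delta)$-privacy of $\calA_1$ kicks in. Since $\Pr[i\in T]=n/t=:\gamma=\epsilon/(3+e^{\epsilon^*})$, we morally expect the privacy loss of $\calA_2$ to scale by a factor of $\gamma$.

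Concretely, I would write
$$\Pr[\calA_2(D)\in S]-e^\epsilon\Pr[\calA_2(D')\in S] = \E_T\!\left[\Pr[\calA_1(D|_T)\in S]-e^\epsilon\Pr[\calA_1(D'|_T)\in S]\right],$$
and split the expectation by whether $i\in T$. The $i\notin T$ contribution is $(1-e^\epsilon)\cdot(\text{nonnegative})\leq 0$. For the $i\in T$ contribution, the privacy of $\calA_1$ gives
$$\Pr[\calA_1(D|_T)\in S]\leq e^{\epsilon^*}\Pr[\calA_1(D'|_T)\in S]+\delta,$$
so the integrand is at most $(e^{\epsilon^*}-e^\epsilon)\Pr[\calA_1(D'|_T)\in S]+\delta$. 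Summing over $T$ with $i\in T$ (which occurs with probability $\gamma$) yields
$$\Pr[\calA_2(D)\in S]-e^\epsilon\Pr[\calA_2(D')\in S]\leq \gamma(e^{\epsilon^*}-e^\epsilon)\cdot B + \gamma\delta,$$
where $B=\E_{T\mid i\in T}\Pr[\calA_1(D'|_T)\in S]$.

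The main obstacle will be controlling $(e^{\epsilon^*}-e^\epsilon)\cdot B$ so that the resulting $\delta'$ matches the claimed $4\gamma\delta$. Two tools combine here. First, I would verify the inequality $1+\gamma(e^{\epsilon^*}-1)\leq e^\epsilon$, which follows directly from $\gamma=\epsilon/(3+e^{\epsilon^*})\leq \epsilon/(e^{\epsilon^*}-1)\cdot\bigl((e^{\epsilon^*}-1)/(3+e^{\epsilon^*})\bigr)$ together with $\epsilon\leq e^\epsilon-1$ (since $\epsilon\leq 1$); rearranged, this says $e^{\epsilon^*}-e^\epsilon\leq (1-\gamma)(e^{\epsilon^*}-1)$, so the multiplicative gap in $\epsilon^*$ versus $\epsilon$ is exactly absorbed by the sub-sampling factor. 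Second, to convert $B\leq 1$ into a bound tight enough for $\delta'$, I would use a coupling argument: uniform $(T_0,j)$ with $T_0\subseteq[t]\setminus\{i\}$, $|T_0|=n$, $j\in T_0$ induces the same distribution on $T_0\setminus\{j\}\cup\{i\}$ as uniform $T$ with $i\in T$, while $D'|_{T_0}$ and $D'|_{T_0\setminus\{j\}\cup\{i\}}$ are $\calA_1$-neighbors. This gives $B\leq e^{\epsilon^*}A+\delta$ with $A=\E_{T\mid i\notin T}\Pr[\calA_1(D'|_T)\in S]$, which together with $\Pr[\calA_2(D')\in S]\geq(1-\gamma)A$ lets me convert the residual into a multiplicative $e^\epsilon$ factor on $\Pr[\calA_2(D')\in S]$ plus an additive term bounded by $(1+e^{\epsilon^*}(e^{\epsilon^*}-e^\epsilon)/(e^\epsilon-1))\cdot\gamma\delta$; after plugging in the specific choice $\gamma=\epsilon/(3+e^{\epsilon^*})$ and bounding crudely, this yields $\delta'\leq 4\gamma\delta=\frac{4\epsilon}{3+e^{\epsilon^*}}\delta$. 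A symmetric argument (swapping $D,D'$) handles the other direction of the $(\epsilon,\delta')$-closeness requirement.
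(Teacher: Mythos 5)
The paper does not actually prove this theorem---it is imported as a preliminary from the cited works [KLNRS11, BBKN12]---so I am comparing your argument against the standard sub-sampling proof rather than against anything in the text. Your architecture is the right one: condition on whether the differing index $i$ lands in $T$, observe that the $i\notin T$ branch is harmless, and use the swap coupling $(T_0,j)\mapsto T_0\setminus\{j\}\cup\{i\}$ to relate the $i\in T$ branch to the $i\notin T$ branch. The coupling is valid, and the inequality $1+\gamma(e^{\epsilon^*}-1)\le e^{\epsilon}$ is correct and is indeed the crux of why the specific choice $t=\frac{n}{\epsilon}(3+e^{\epsilon^*})$ works.

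The final quantitative step, however, has a genuine gap. After reducing to the residual $\gamma(e^{\epsilon^*}-e^{\epsilon})B$ with $B=\E_{T\mid i\in T}\Pr[\calA_1(D'|_T)\in S]$, you bound $B\le e^{\epsilon^*}A+\delta$ and try to fold the resulting term back into $e^{\epsilon}\Pr[\calA_2(D')\in S]$. This chains two applications of the $e^{\epsilon^*}$ neighbor bound (first $D|_T\to D'|_T$, then $D'|_T\to D'|_{T_0}$), so the probability mass $A$ ends up carrying a coefficient of order $\gamma e^{\epsilon^*}(e^{\epsilon^*}-e^{\epsilon})$, which exceeds the only available slack, namely the term $(1-\gamma)(e^{\epsilon}-1)A$ discarded from the $i\notin T$ branch: already for $\epsilon^*=2$ and small $\epsilon$ the former is about $4.5\epsilon A$ against a slack of about $\epsilon A$. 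Correspondingly, your claimed additive term $\left(1+e^{\epsilon^*}(e^{\epsilon^*}-e^{\epsilon})/(e^{\epsilon}-1)\right)\gamma\delta$ is not $O(\gamma\delta)$; it diverges as $\epsilon\to0$ even for $\epsilon^*=1$, which is precisely the regime in which this paper invokes the theorem, so no amount of "bounding crudely" yields $4\gamma\delta$ from it. The repair is to apply the neighbor bound only once, to the $D$-side quantity: for $T\ni i$ use $\Pr[\calA_1(D|_T)\in S]\le e^{\epsilon^*}\left(\lambda \Pr[\calA_1(D'|_T)\in S]+(1-\lambda)A\right)+\delta$ with $\lambda=e^{\epsilon-\epsilon^*}$ (combining the same-$T$ neighbor bound with the coupling), or, even more simply, use only $\Pr[\calA_1(D|_T)\in S]\le e^{\epsilon^*}A+\delta$ together with $\Pr[\calA_2(D')\in S]\ge(1-\gamma)A$. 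The latter reduces everything to $\gamma(e^{\epsilon^*}+e^{\epsilon}-1)\le e^{\epsilon}-1$, which holds by the same computation as your first tool (since $e^{\epsilon}-1\le 3$), and closes the argument with $\delta'=\gamma\delta\le\frac{4\epsilon}{3+e^{\epsilon^*}}\delta$.
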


\paragraph{Secure addition protocols in the shuffle model.}

Ishai et al.~\cite{IKOS06} gave a protocol where $n\geq 2$ parties communicate with an analyzer (as in \cref{rem:model}) to compute the sum of their inputs in a finite group $G$, in the semi-honest setting and in the presence of a coalition including the analyzer and up to $n-1$ parties. In their protocol, each participating party splits their input into $\ell=O(\log|G| + \log n + \sigma)$ shares and sends each share in a separate message through the shuffle. Upon receiving the $n\ell$ shuffled messages, the analyzer adds them up (in $G$) to compute the sum. Recent work by Ghazi et al.~\cite{GhaziMPV20} and Balle et al.~\cite{BalleBGN20} improved the dependency of the number of messages on the number of participating parties to $\ell=O\left(1+(\log|G| + \sigma)/\log n\right)$.

\begin{theorem}[\cite{IKOS06, GhaziMPV20, BalleBGN20}]
\label{thm:IKOS} Let $G$ be a finite group. There exist a one-round shuffle model summation protocol with $n$ parties holding inputs $x_i\in G$ and an analyzer. The protocol is secure in the semi-honest model, and in the presence of coalitions including the analyzer and up to $n-1$ parties. 
\end{theorem}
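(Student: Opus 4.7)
The plan is to realize the summation by additive secret sharing carried through the shuffle. Each party $P_i$ holding input $x_i\in G$ picks $\ell-1$ uniformly random shares $s_{i,1},\ldots,s_{i,\ell-1}\in G$ and sets $s_{i,\ell}=x_i-\sum_{j<\ell}s_{i,j}$, so that $\sum_j s_{i,j}=x_i$; $P_i$ then sends the $\ell$ shares through the shuffle as individual messages. The analyzer receives the shuffled multiset of $n\ell$ shares and outputs their sum in $G$. Correctness is immediate, since the output equals $\sum_i \sum_j s_{i,j}=\sum_i x_i$ regardless of the permutation applied by the shuffle.

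For privacy, I would construct a simulator for any coalition $\calC$ of size at most $n-1$ that includes the analyzer. Given the coalition's inputs $\{x_i\}_{P_i\in\calC}$ and the output $\sigma=\sum_i x_i$, the simulator computes $T=\sigma-\sum_{P_i\in\calC}x_i$, samples $(n-|\calC|)\ell$ elements of $G$ subject to summing to $T$ (the first $(n-|\calC|)\ell-1$ uniform and independent, the last fixed by the sum constraint), merges them with the coalition's honestly generated shares, and outputs a uniform permutation of the union. The real view of $\calC$ consists of their local randomness, inputs, and the shuffle output; since the shuffle's output is canonical given the multiset of all $n\ell$ messages, it suffices to compare the real and simulated distributions of the multiset of honest parties' shares conditioned on $T$.

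The main technical step is to bound the statistical distance between this real multiset and a uniformly random $(n-|\calC|)\ell$-multiset of $G$ elements summing to $T$. In the real world, each honest party's $\ell$-tuple is uniform subject to summing to $x_i$, and the shuffle erases the per-party partition; in the ideal world the multiset is constrained only by the aggregate sum $T$. A direct counting argument, comparing the number of partitionings of a given multiset into per-party $\ell$-tuples with the ``correct'' per-party sums versus arbitrary sums, shows that for $\ell=\Theta(\log|G|+\log n+\sigma)$ the statistical distance is at most $2^{-\Omega(\sigma)}$, yielding the original Ishai et al.\ bound. The improvements of Ghazi et al.\ and Balle et al.\ sharpen the message complexity to $\ell=O\!\left(1+(\log|G|+\sigma)/\log n\right)$ by drawing shares from smaller structured domains (e.g., arithmetic sub-progressions of $G$) and performing a more delicate anticoncentration analysis on the induced partition statistics.

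The main obstacle I expect is precisely this statistical-distance bound. A naive appeal to the symmetry of the shuffle is insufficient, because the per-party sum constraints correlate shares within each party and these correlations could in principle leave a detectable fingerprint on the multiset. One must argue that when $\ell$ is above the stated threshold, the constraints are absorbed into the overwhelmingly many possible partitionings of the multiset, making the conditional distribution on the multiset statistically close to uniform-subject-to-$T$. Pinning down the tight dependence of $\ell$ on $\log|G|$, $\log n$, and $\sigma$ is the heart of the proof and is exactly where the three cited works differ.
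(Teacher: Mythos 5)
The paper does not prove \cref{thm:IKOS} at all---it is imported verbatim from the cited works, and the only ``proof'' in the paper is the preceding paragraph describing the split-and-mix construction of Ishai et al.\ and the improved message counts of Ghazi et al.\ and Balle et al. Your proposal reconstructs exactly that construction (additive shares through the shuffle, analyzer sums, simulator that samples honest shares uniform subject to the residual sum $T$), so in approach you match the source the paper relies on. Two caveats. First, the entire technical content of the cited theorem is the mixing lemma you defer---that the shuffled multiset of per-party-constrained share tuples is $2^{-\Omega(\sigma)}$-close to a uniform multiset conditioned only on the aggregate sum once $\ell=\Omega(\log|G|+\log n+\sigma)$; your ``direct counting argument'' is a reasonable description of where the work lies, but as written it is an appeal to the cited result rather than a proof of it, which is acceptable here only because the paper itself treats the statement as a black box. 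Second, your account of how \cite{GhaziMPV20,BalleBGN20} obtain $\ell=O(1+(\log|G|+\sigma)/\log n)$ is not accurate: they do not draw shares from structured sub-domains of $G$; they keep uniform additive shares and give a sharper analysis of the same split-and-mix mixing (roughly, a tighter union bound over linear dependencies among the shuffled shares). This misattribution is harmless for the theorem as stated, since the IKOS parameters already suffice, but it would matter if you needed the improved message complexity. Also watch the notational clash in your writeup: you use $\sigma$ both for the output sum and for the statistical security parameter.
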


\end{document}